\newtheorem{lemma}{Lemma}
\newtheorem{theorem}{Theorem}
\newcommand\s[1]{_{\rm #1}}
\newcommand\sn[2]{_{{\rm #1}:#2}}
\newcommand\us[1]{^{\rm #1}}
\newcommand{\syn}{{\rm{s}}}
\newcommand{\Bell}{{\rm{Bell}}}
\newcommand{\bra}[1] {\langle #1 |}
\newcommand{\ket}[1] {| #1 \rangle}
\newcommand{\braket}[2] {\langle #1 | #2 \rangle}
\newcommand{\ketbra}[1]{ | #1 \rangle\!\langle #1 |}
\newcommand{\ii} {\textbf{i}}
\newcommand{\Tr} {\operatorname{Tr}}
\newcommand{\expec}[1]{\left\langle #1 \right\rangle}
\newcommand{\one}{\leavevmode\hbox{\small1\normalsize\kern-.33em1}}
\newcommand{\oprod}[1] {^{\otimes #1}}
\newcommand{\ind}{\mathbf{I}}
\newcommand{\LL}{{\cal Lin}}
\newcommand{\PP}{{\cal P}}
\newcommand{\CC}{{\cal C}}
\newcommand{\EE}{{\cal E}}
\newcommand{\LE}{{\mathscr LE}}
\newcommand{\St}{{\cal St}}
\newcommand{\B}{{\cal B}}
\newcommand{\sep}{{\cal Sep}}
\newcommand{\MM}{\mathsf{M}}
\newcommand{\pp}{\mathsf{P}}
 \def\ii{\mathord{\rm i}}
 \def\mod{\mathord{\rm mod}}
\def\tr{\mathop{\rm Tr}}
\renewcommand{\ii}{{\rm i}}
\def\beq{\begin{equation}}
\def\eeq{\end{equation}}
\def\barray{\begin{eqnarray}}
\def\earray{\end{eqnarray}}
\newcommand{\FS}[1]{\textcolor{red}{#1}}
\newcommand{\ARB}[1]{\textcolor{blue}{#1}}
\begin{document}
\title{Efficient and robust certification of genuine multipartite entanglement \\
in noisy quantum error correction circuits
}

\author{Andrea Rodriguez-Blanco}.
\email{Electronic address: andrer22@ucm.es}
\affiliation{Departamento de F\'{i}sica Teorica, Universidad Complutense, 28040 Madrid, Spain}
\author{Alejandro Berm{u}dez}
\affiliation{Departamento de F\'{i}sica Teorica, Universidad Complutense, 28040 Madrid, Spain}
\author{Markus M\"{u}ller}
\affiliation{Department of Physics, Swansea University, Singleton Park, Swansea SA2 8PP, United Kingdom}
\affiliation{Institute for Quantum Information, RWTH Aachen University, D-52056 Aachen, Germany}
\affiliation{Peter Gr\"{u}nberg Institute, Theoretical Nanoelectronics,
Forschungszentrum J\"{u}lich, D-52425 J\"{u}lich, Germany}
\author{Farid Shahandeh}
\affiliation{Department of Physics, Swansea University, Singleton Park, Swansea SA2 8PP, United Kingdom}
 

\begin{abstract}
Ensuring the correct functioning of quantum error correction (QEC) circuits is crucial to achieve fault tolerance in realistic quantum processors subjected to noise.
The first checkpoint for a fully operational QEC circuit is to create genuine multipartite entanglement across all subsystems of physical qubits.
We introduce a conditional witnessing technique to certify genuine multipartite entanglement (GME) 
that is efficient in the number of subsystems and, importantly, robust against experimental noise and imperfections.
Specifically, we prove that the detection of entanglement in a linear number of bipartitions 
by a number of measurements that also scales linearly, suffices to certify GME.
Moreover, our method goes beyond the standard procedure of separating the state from the convex hull of biseparable states, yielding an improved finesse and robustness compared to previous techniques.
We apply our method to the noisy readout of stabilizer operators of the distance-three topological color code and its flag-based fault-tolerant version. 
In particular, we subject the circuits to combinations of three types of noise, namely, uniform depolarizing noise, two-qubit gate depolarizing noise, and bit-flip measurement noise.
We numerically compare our method with the standard, yet generally inefficient, fidelity test and to a pair of efficient witnesses, verifying the increased robustness of our method.
Last but not least, we provide the full translation of our analysis to a trapped-ion native gate set that makes it suitable for experimental applications.
\end{abstract}

\maketitle

\section{Introduction}\label{sec:intro_text}

The continued effort to control quantum systems with ever increasing accuracy has led to various
quantum processors ~\cite{Ladd2010}, comprising tens of qubits and, more recently, to the demonstration of quantum supremacy~\cite{Arute2019}. 
However, in the mid- and long-term, an important goal towards real-world applications of quantum computation is to demonstrate
fault-tolerance (FT) with low resources~\cite{gottesman2016quantum,PhysRevX.7.041061}.  
Quantum error correction (QEC)~\cite{PhysRevA.52.R2493,PhysRevA.54.1098,PhysRevLett.77.793} is a building block of the FT theory~\cite{nielsen00,10.5555/1209345} 
the purpose of which is to protect the encoded information from the detrimental accumulation of errors by detecting and correcting them along with the computation.
A current goal in this direction is to reach the break-even point where FT-QEC demonstrates advantage over computations with bare physical qubits.
An initial proposal to attest this benchmark in near-term devices was to compare the performance of a family of QEC circuits acting on logical qubits with the best-possible counterparts using unencoded physical qubits~\cite{gottesman2016quantum}. 
We note, however, that any attempt towards the rigorous demonstration of FT must first ensure that QEC circuits function correctly.

While physical qubits and gates can be characterized using different 
techniques, e.g. spectroscopic and interferometric methods extracting the coherence times~\cite{Schmidt_Kaler_2003}, and randomized benchmarking or process tomography characterizing gate performances~\cite{PhysRevLett.97.220407,PhysRevLett.102.090502}, an efficient characterization for medium-sized registers and circuits that implement QEC protocols is still lacking.
Simulating such circuits for general noise and nontrivial-size encondings to demonstrate this break-even point is computationally very demanding, even in the light of the recent result by Beale and Wallman~\cite{2003.10511} wherein the computational cost of effective-logical-noise simulation is significantly reduced.
Alternative milder criteria for a quantitative evaluation of the performance of QEC circuits show that the requirements to demonstrate beneficial QEC are at reach of near-future quantum platforms, e.g.~using trapped ion quantum processors~\cite{PhysRevX.7.041061, Linkee1701074,Trout_2018}. 
Nonetheless, prior to the development of such optimal full codes for QEC, which must function in the vicinity of FT thresholds, it would be desirable to leverage the currently available noisier devices, e.g. by analysing the performance of building blocks of QEC codes.

To achieve this goal, we note that the capacity to generate entanglement lies at the very heart of QEC
protocols, since preparation circuits of encoded logical states~\cite{Nigg302,DiCarlo2010}, and syndrome readout schemes~\cite{Linkee1701074, Corcoles2015, Andersen2020}, generate
genuinely multipartite entangled (GME) states.
On the same note, the necessity of unbounded GME to obtain an advantage in any quantum computation algorithm was noted by Jozsa and Linden~\cite{jozsa_linden_2003}.
As a result, certifying GME serves as a fundamental benchmark in the progress towards FT quantum computation with noisy quantum processors.

In general, verifying GME is a daunting task, which can be understood from the following four points.
Theoretically, GME detection requires certifying entanglement within all possible bipartitions of a multipartite system.
The number of bipartitions, however, grows exponentially with the number of subsystems. This implies that, in general, the required number of bipartite entanglement tests  increases exponentially with the addition of every subsystem.
Furthermore, certifying bipartite entanglement in its own is known to be NP-hard~\cite{10.1145/780542.780545,DBLP:journals/qic/Gharibian10}.
From an experimental perspective, not every theoretically-conceived measurement can be implemented in practice, calling for the compliance of entanglement certification methods with such limitations.
In addition, experiments suffer from 
imperfections that may  
reduce the capabilities of naive theoretical schemes of entanglement detection considerably.
It is thus necessary to devise and continuously improve efficient GME certification methods. 

An elegant approach that addresses these four 
challenges to a reasonable extent is entanglement witnessing~\cite{HORODECKI19961,Terhal2000,Lewenstein2000,Sperling2013,Horodecki2009}.
An entanglement witness is an observable for which a negative expectation value in a given state indicates that the state is necessarily entangled. 
Interestingly, entanglement witnesses can be constructed from local and experimentally-friendly observables~\cite{Toth2005}, and further optimized to show some degree of noise resilience~\cite{Lewenstein2000,Guhne2006,Shahandeh2017UEW}.
In particular, the guaranteed existence of a witness for each entangled state makes witnesses extremely useful in experiments~\cite{HORODECKI19961}.
Nevertheless, in multipartite scenarios, there seems to be a trade-off between the finesse of entanglement witnesses and the number of measurements needed to carry them out.
Specifically, as will be discussed in detail below,
the standard method of GME certification using a single witness~\cite{Toth2005,Sperling2013}, reduces the exponential growth of the required number of bipartitions and measurements to a linear one by trading the individual bipartitions for the convex hull of all biseparable states, and detecting those states outside this convex hull.
This advantage comes at a price, namely, a significant reduction in the robustness of the witness against experimental imperfections.

In this work, we introduce \textit{conditional entanglement witnessing,} a witnessing technique that combines ideas from localizable entanglement~\cite{Verstraete2004} with entanglement witnessing, in order to reduce the exponential complexity of the GME detection to a linear number of tests while,
at the same time, achieving a remarkable robustness 
against noise and errors. 
Our approach facilitates the detection of GME in quantum states that, although lying within the convex hull of biseparable states, remain masked to the single-witness approach.

We prove that the separation between any quantum state and only a linear number of conditional biseparable sets suffices for the purpose of certifying GME in that particular state. 
Due to the fact that for each individual bipartition the corresponding set of biseparable states is delimited closely by the convex set of conditional biseparable states, our conditional witnesses allow for a finer state discrimination between GME and biseparable states even in presence of noise and experimental imperfections.
We show the efficiency and robustness of conditional witnessing by applying it to the output of non-FT and FT
stabilizer measurement circuits of a $d=3$ topological QEC color code ~\cite{PhysRevLett.97.180501} subjected to phenomenological, circuit-level, and measurement noise models. 
We then numerically compare the performance of our method against the standard witnessing methods.
Our results demonstrate that conditional witnessing is simultaneously more efficient and more robust in proving GME in noisy quantum systems.

The manuscript is organized as follows. 
In Sec.~\ref{sec:FT_colorcode}, we give an overview of the stabilizer formalism for the topological QEC color code, together with the flag-based FT QEC scheme. 
Section~\ref{sec:ideal_circuits} discusses ideal non-FT and FT QEC circuits.
In Sec.~\ref{sec:cond_witnessing_sec}, multipartite entanglement and its standard witnessing are reviewed.
We then introduce our method of conditional entanglement witnessing. 
Section~\ref{sec:noisy_trappedions} is devoted to the compilation of the QEC circuits into the native trapped-ion gates.
Next, phenomenological, circuit-level and measurement error models used in the simulations are introduced.
In Sec.~\ref{sec:robustnes_GME_trappedions} we present our numerical results for our conditional witnessing technique 
and some of the best standard witnessing methods available applied to noisy trapped-ion platforms. 
We discuss the robustness of the methods based on results obtained for different noise channels applied to non-FT and flag-based FT trapped ion circuits.
Finally, in Sec.~\ref{sec:conclusions} we provide the conclusions and outlook of our work.

\section{Flag-qubit-based protocol for fault-tolerant QEC}\label{sec:FT_colorcode}

Quantum error correction (QEC) promises to battle environmental decoherence and experimental imperfections during a quantum computation by redundantly encoding quantum information in multipartite logical states~\cite{RevModPhys.87.307}. 
In order to protect the information that is stored redundantly, active QEC considers performing certain type of measurements frequently 
to detect and correct the error that has corrupted the state without affecting the encoded information~\cite{PhysRevA.52.R2493,PhysRevA.54.1098,PhysRevLett.77.793}.
Many leading QEC codes can be described within the stabilizer formalism~\cite{stabilisers}.

An $[[n,k,d]]$ stabilizer quantum code which
encodes $k$ logical qubits into $n$ physical qubits with a code distance $d$ can correct up to $t=\lfloor{(d-1)/2}\rfloor$ errors. 
The stabilizers $S_{i}$ form an Abelian subgroup $\mathcal{S}$ of size $|\mathcal{S}|=2^{G}$
of  the $n$-qubit Pauli group $\mathcal{P}_{n}=\{\pm 1, \pm i\}\times\{I,X,Y,Z\}^{\otimes n}$ where $I,X,Y,Z$ are the single-qubit Pauli matrices. 
Such a subgroup is typically specified via a generating subset 
of $G$ linearly-independent and mutually commuting Pauli operators, $\mathcal{g}:=\{g_{i}\}_{i=1}^G$, that are known as 
\textit{parity checks}---or subgroup generators---so that $\mathcal{S}=\langle g_{1},...,g_{G}\rangle$. 
The latter means that any element of the stabilizer subgroup $S\in \mathcal{S}$ can be obtained
as a specific product of the generators
of the form
\begin{equation}
    S=\bigotimes_{i} g_{i} \quad \text{with}\quad g_i\in\mathcal{g}.
\end{equation}
The  measurement of a single stabilizer yields one of its two possible eigenvalues $\lambda=\pm 1$ which is $2^{n-1}$-fold degenerate.
Since the parity checks form a set of compatible observables, it is possible to unambiguously specify a $2^{n-G}$-dimensional subspace of the $n$-qubit state space through their common eigenspace with eigenvalue $+1$, 
\begin{equation}
   \mathcal{L}= \{\ket{\psi}\in\mathbb{C}^{2n}:\hspace{2ex}S\ket{\psi}=\ket{\psi},\hspace{1ex} \forall S\in\mathcal{S}\}.
\end{equation}

The subspace $\mathcal{L}$ is known as the \textit{code space} which can be used to embed (codify) the $k=n-G$ logical qubits.
The logical generators
$\{X^{L}_\ell,Z^L_\ell\}_{\ell=1}^k$ are then elements of the Pauli group $\PP_n$ that commute with all stabilizers $S\in\mathcal{S}$, but which are not stabilizers themselves.
Defining the \textit{centralizer} of $\mathcal{S}$ in $\PP_n$ as $\CC({\mathcal{S}})=\{P\in\PP_n:PS=SP,~\forall S\in\mathcal{S}\}$, logical operators lie within $\CC({\mathcal{S}})\setminus\mathcal{S}$.
The key idea here is that the logical operators transform the logical state within the code space while leaving the code space as a whole invariant.

It is central to the construction of a QEC code using the stabilizer formalism that the measurements of parity checks allow us to extract the so-called error syndrome without altering the stored quantum information. 
This typically requires the use of ancillary qubits into which the error syndrome is mapped through a sequence of entangling gates followed by a projective measurement on them. 
In practice, however, the syndrome measurements will also be affected by decoherence and noise, thus introducing further errors. 
This raises a crucial aspect of QEC, namely, the notion of {\it fault-tolerance} (FT)~\cite{FTQEC}---the circuits used for the syndrome readout and other operations must be designed in such a way that errors do not proliferate in an uncontrolled manner, cascading into multiple errors that would affect 
larger parts of the quantum register.
In other words, the QEC code should cope with errors stemming from both the quantum computation and the syndrome extraction even if the measurement process is faulty.
Such FT circuit constructions are essential to harness the full potential correcting power of the QEC code under consideration.

 \begin{figure}[t!]
  \includegraphics[width=1\columnwidth]{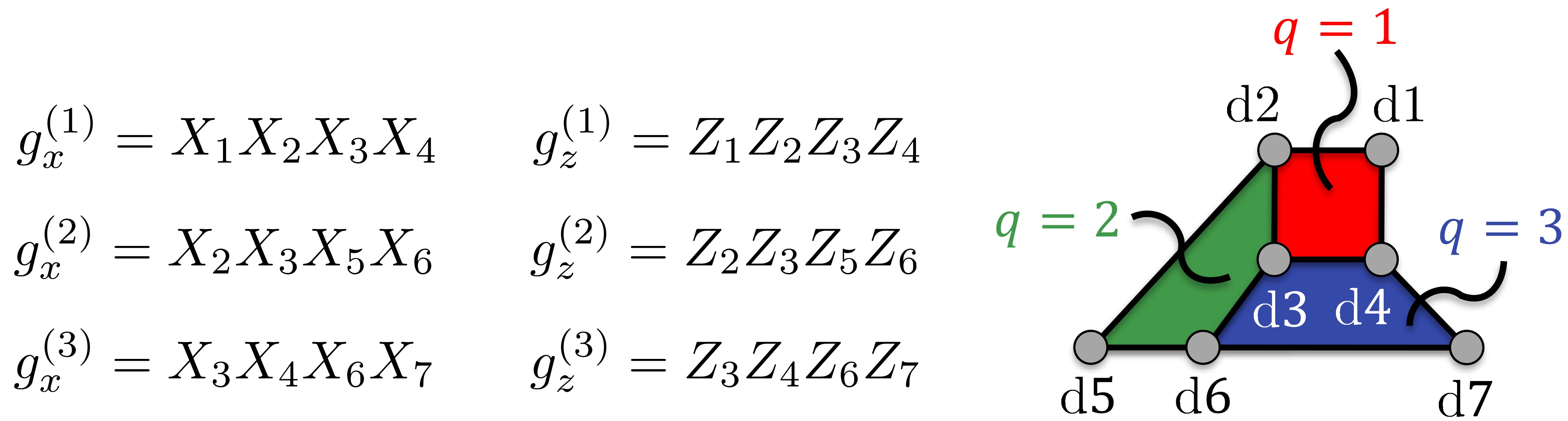}
  \caption{The 7-qubit color code arranged in a three-colorable planar lattice with qubits lying on the vertices, and two types of parity checks per plaquette, hence $G=6$. This code yields a single logical qubit, $k=1$, encoded in $n=7$ data qubits, distance $d=3$, and can correct for $t=1$ error.}
  \label{fig:color_code}
\end{figure}

One of the most attractive strategies for active QEC is based on topological codes where the physical qubits are arranged on planar lattices and 
the stabilizers have a local support, i.e.~they only involve neighboring qubits in the planar lattice~\cite{RevModPhys.87.307}. 
In contrast to QEC codes based on concatenation~\cite{PhysRevLett.77.793}, the syndrome extraction for topological codes only requires local measurements 
simplifying in this way the experimental implementation of rounds of QEC, i.e. syndrome extraction followed by the correction of the most-likely error. 
For instance, for topological color codes~\cite{PhysRevLett.97.180501}, physical qubits are arranged on the vertices of a trivalent three-colorable planar lattice (see Fig.~\ref{fig:color_code}). 
The parity check operators are the following pair of operators ($X$- and $Z$-type parity checks) per plaquette $q$ of the lattice, and have a local support on the qubits located at the vertices of such plaquette
\begin{equation}\label{eq:plaquette_gen}
    g_x^{(q)}=\bigotimes_{i\in v(q)}X_i,\qquad g_z^{(q)}=\bigotimes_{i\in v(q)}Z_i,
\end{equation}
where $v(q)$ is the set of vertices that belong to plaquette $q$.
Accordingly, 
there are $G=6$ parity checks for the smallest color code with $n=7$ physical qubits each acting locally
on only 4 qubits so that the code space is two-dimensional encoding $k=n-G=1$ logical qubit.

To detect the errors that may have corrupted the logical qubit, one must 
measure all 4-qubit parity checks using a FT scheme and infer the location of the most likely error from the classical information obtained from the sequence of $\pm 1$ parity check outcomes known as the error syndrome. 
In this particular example, the parity-check information is 
extracted via projective measurements on an ancillary syndrome qubit.
In Fig.~\ref{fig:stabiliser_readout}.~a, we show a circuit for the readout of an $X$-type parity check of a single plaquette of the 7-qubit color code. 
We note, however, that since all four plaquette qubits are coupled to the same ancillary qubit through the sequential two-qubit gates, single-qubit errors can proliferate and create errors with a larger support. Following the syndrome measurement approach using a single ancilla, any 
such error will result in a logical error, thereby violating the FT design principle.

A recent proposal for an efficient recovery of  FT during  the  syndrome extraction 
is the {\it flag-based} stabilizer readout~\cite{PhysRevLett.121.050502,Chamberland2018flagfaulttolerant,Chao2020}.
As shown in Fig.~\ref{fig:stabiliser_readout}~b, this technique uses a single additional ancillary qubit, the \textit{flag} qubit, which is coupled to the syndrome qubit.
We note that, in general, even if more flag qubits are required for larger-distance codes~\cite{Chamberland2018flagfaulttolerant}, these are not required to be initially prepared in a multipartite entangled Greenberger-Horne-Zeilinger (GHZ)~\cite{Greenberger1989} state which must also be verified to avoid a non-FT proliferation of errors~\cite{PhysRevLett.77.3260, PhysRevLett.98.020501}.
This brings about significant advantage compared to cat-state ancillary methods~\cite{PhysRevLett.77.3260, PhysRevLett.98.020501}.
In the case of the 7-qubit color code, as shown in Fig.~\ref{fig:stabiliser_readout}~b, the flag qubit is coupled to the syndrome qubit to detect whether or not multiple errors might have cascaded onto the data qubits. We note that that the flag qubit by itself does not suffice to correct for such higher-weight errors. 
However, when combined with the subsequent parity check measurements using the syndrome qubit, it can be used to unequivocally identify and correct these dangerous errors achieving the desired fault tolerance~\cite{PhysRevLett.121.050502,PhysRevA.100.062307}.
  \begin{figure}[t!]
  \includegraphics[width=1\columnwidth]{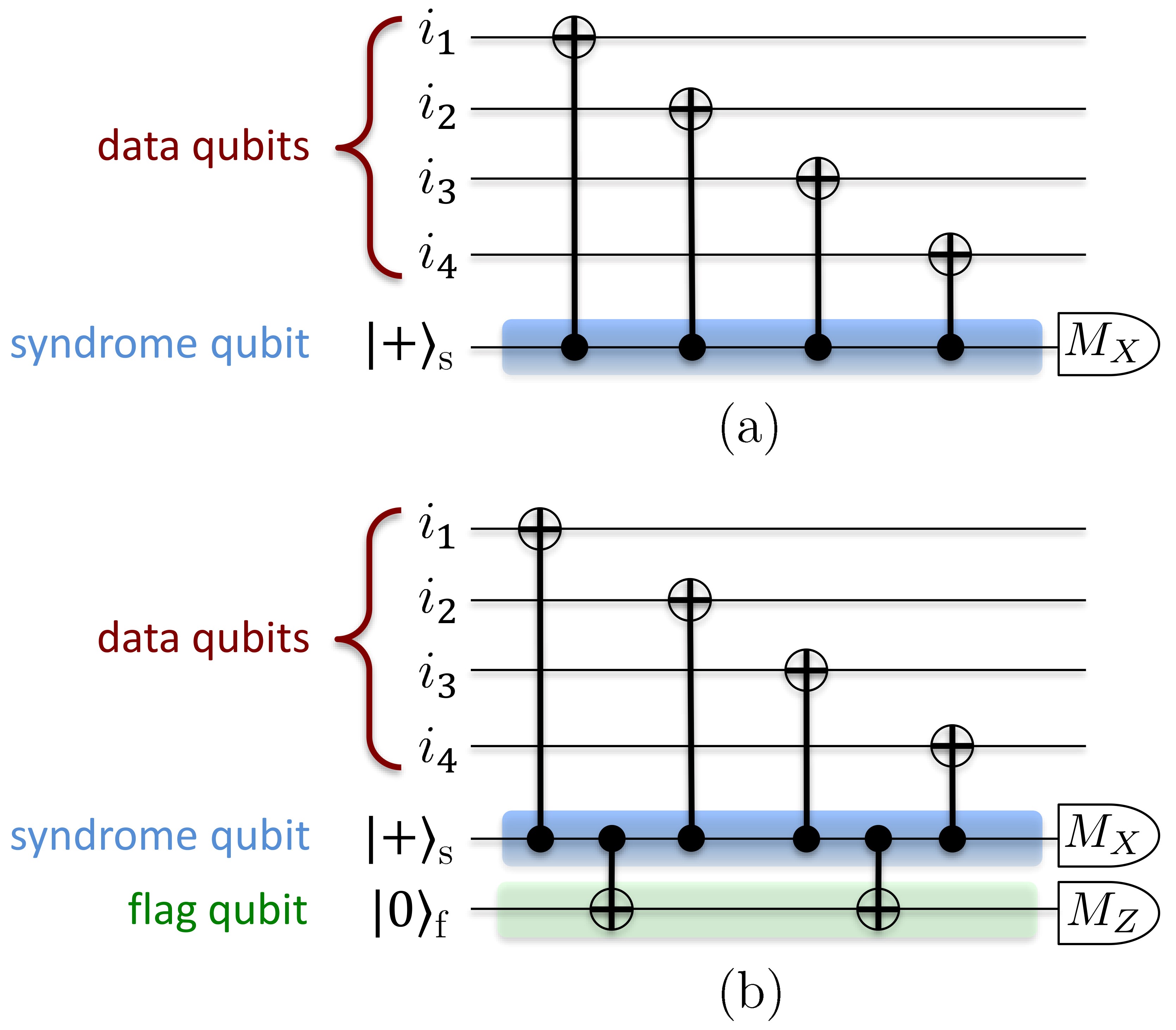}
  \caption{Error syndrome extraction circuits in the color code for (a) the non-FT and (b) the flag-based FT readouts of the 
  $X$-type parity check $g_x^{(q)}$ in Eq.~\eqref{eq:plaquette_gen} of any plaquette $q$ in Fig.~\ref{fig:color_code}.
  The physical/data qubits correspond to the corresponding four physical qubits of this plaquette with indexes $i_1,i_2,i_3,i_4\in\{\rm d1,d2,d3,d4,d5,d6,d7\}.$ These qubits are sequentially coupled via CNOT entangling gates to the syndrome qubit, which must be initialized in a specific state, and then measured in the corresponding basis. In the flag-based scheme, an additional ancillary qubit is appropriately initialized, coupled, and measured
  to reveal crucial information to attain FT.}
  \label{fig:stabiliser_readout}
\end{figure}

\section{Ideal Plaquette Circuit}\label{sec:ideal_circuits}
We are interested in evaluating the correct functioning of the plaquette readout circuits, as they are the smallest building blocks of the circuitry for the topological color code~\cite{PhysRevLett.97.180501}.
To do that, it is useful to first work out the details of the ideal circuits shown in Fig.~\ref{fig:stabiliser_readout}.
Here, by ideal we mean circuits that are free from state preparation and measurement (SPAM) and gate errors.
This is done in the following section.
We emphasize that, here, we only focus on the $g_x$ generator.
Nevertheless, a similar procedure can be implemented using the $g_z$ generator as presented in Appendix~\ref{app:gz_ideal}.

\subsection{Ideal non-FT plaquette circuit}\label{sec:nonFTcircuit}
Let us start our analysis with the simplest example of the ideal plaquette circuit  in Fig.~\ref{fig:stabiliser_readout}.~a. 
In a full QEC code, the syndrome qubit can be prepared in $\ket{+}\s{s}=(\ket{0}\s{s}+\ket{1}\s{s})/\sqrt{2}$, while
the data qubits $\{i_{1},i_{2},i_{3},i_{4}\}$ are, in general, part of a larger logical state.
After applying the circuit, the syndrome qubit would contain the $Z$-type error information.
This information could then be extracted via a projective measurement of the syndrome qubit. 
In order to evaluate the performance of the color code blocks, however, we do not need to read out the syndrome.
In this case, our interest lies in the characterization of the corresponding circuit from the perspective of GME. 
To do so, it turns out to be sufficient to assume that data qubits are initialized 
in $\ket{0}\oprod{4}$.
Next, four CNOT gates are sequentially applied with the syndrome as the common control qubit while the data qubits play the role of the corresponding targets. 
Recall that a CNOT gate between the control and the target qubits, $\rm c$ and $\rm t$, respectively, can be written in terms of Pauli operators as $\mathrm{CNOT}_{c,t}={1 \over 2}\{ (I_{c}+Z_{c})I_{t}+(I_{c}-Z_{c})X_{t}\}$.
After some algebra, the output state obtained is 
\begin{equation}\label{eq:output_5qubit}
    \ket{\psi\s{out}}={\ket{0}\oprod{5} + \ket{1}\oprod{5} \over \sqrt{2}}=\ket{\mathrm{GHZ}\oprod{5}_{+}}.
\end{equation}

The resulting $5$-qubit GHZ state of Eq.~\eqref{eq:output_5qubit} can be understood as the one-dimensional code space of the stabilizer subgroup 
\begin{equation}\label{eq:5GHZ_stabilizergroup}
\begin{split}
     \mathcal{S}^{\rm GHZ}\s{5q} = \langle & g_1 = Z_1Z_2, g_2=Z_2Z_3, g_3=Z_3Z_4,\\
     &\quad  g_4=Z_4Z_5, g_5=X_1X_2X_3X_4X_5\rangle,
\end{split}
\end{equation}
and thus, it can be recast as the product of the projections onto the common +1 eigenspace of $\mathcal{S}\s{5q}$, namely
\begin{equation}\label{eq:GHZ5Stabilizers}
    \varrho\s{out}=\ket{\mathrm{GHZ}\oprod{5}_{+}}\bra{\mathrm{GHZ}\oprod{5}_{+}}=\prod_{i=1}^{5}\left(\frac{I_i+g_{i}}{2}\right)=\frac{1}{2^{5}}\sum_{i=1}^{2^{5}}S_{i}.
\end{equation}
Let us now briefly analyze the effect of a dangerous error in the syndrome qubit.
Consider, for example, the case wherein the syndrome suffers from a Pauli-$X$ error between the second and third CNOT gates; see Fig.~\ref{fig:ideal_circuits}.~a. Using the standard propagation of errors across CNOT gates~\cite{nielsen00},
it is easy to calculate the output state as
\begin{equation}\label{eq:GHZ5Xerror}
    \begin{split}
          \ket{\psi\s{out}} = X\s{s}X_3X_4 \ket{\mathrm{GHZ}\oprod{5}_{+}} = X_1X_2 \ket{\mathrm{GHZ}\oprod{5}_{+}}.
    \end{split}
\end{equation}
It is evident from Eq.~\eqref{eq:GHZ5Xerror} that the syndrome qubit carries no information about the error and that the single-qubit error has turned into a weight-2 error showcasing the non-FT aspect of this circuit. Importantly, even in the case of a coherent rotation error $\exp{(i\phi X)}$ of the syndrome qubit, as depicted in Fig.~\ref{fig:ideal_circuits}.~b, a simple calculation results in the output state
\begin{equation}
    \begin{split}
           & \ket{\psi\s{out}} = \cos{\phi}\ket{\mathrm{GHZ}\oprod{5}_{+}} + \ii\sin{\phi} X\s{s}X_3X_4 \ket{\mathrm{GHZ}\oprod{5}_{+}},
    \end{split}
\end{equation}
which again represents the fact that a projective $X$ measurement of the syndrome qubit does not reveal the error. 
This brings us to the FT scheme of the next section.
\begin{figure}[t!]
  \includegraphics[width=0.7\columnwidth,center]{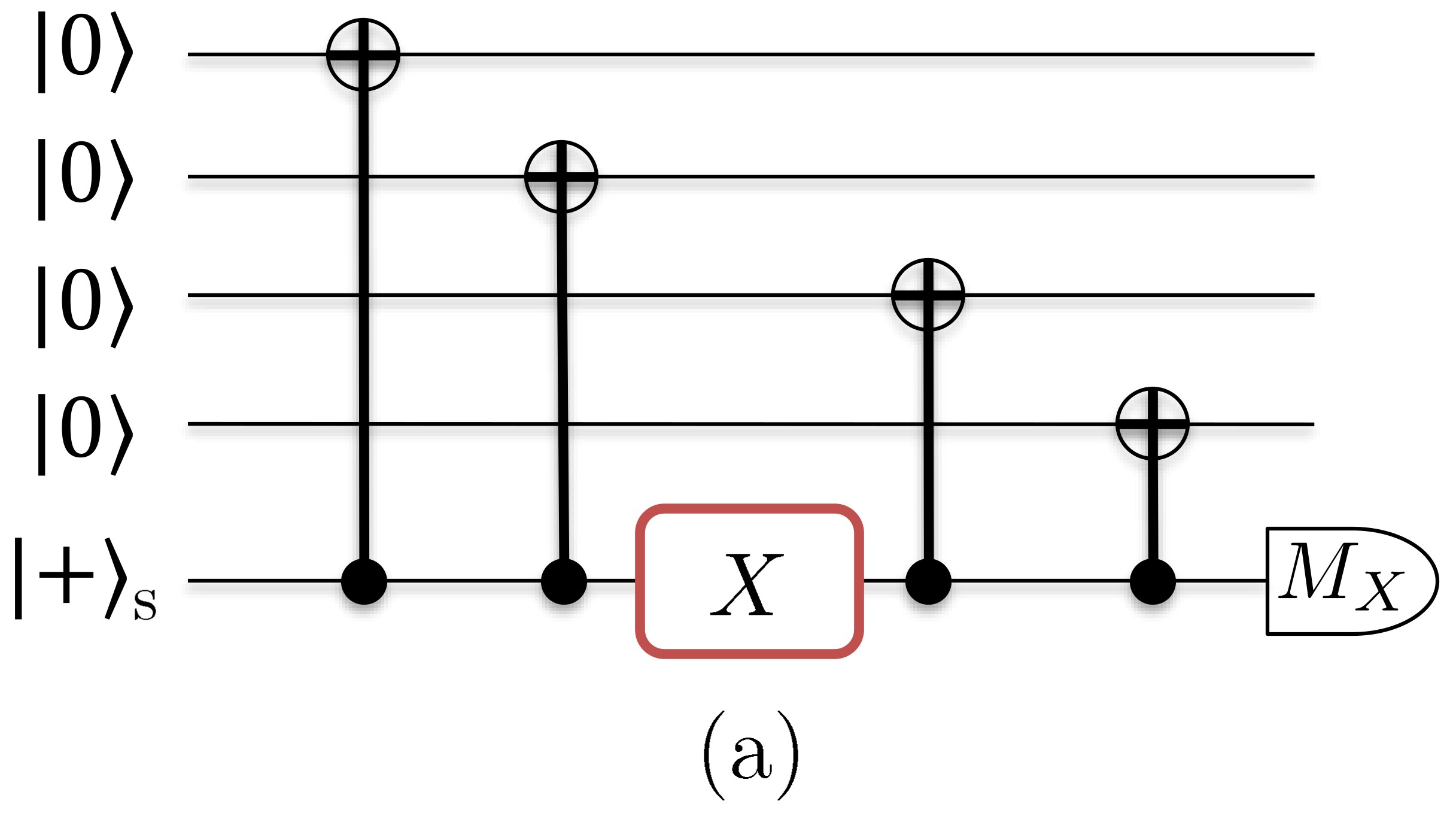}
  \includegraphics[width=0.7\columnwidth,center]{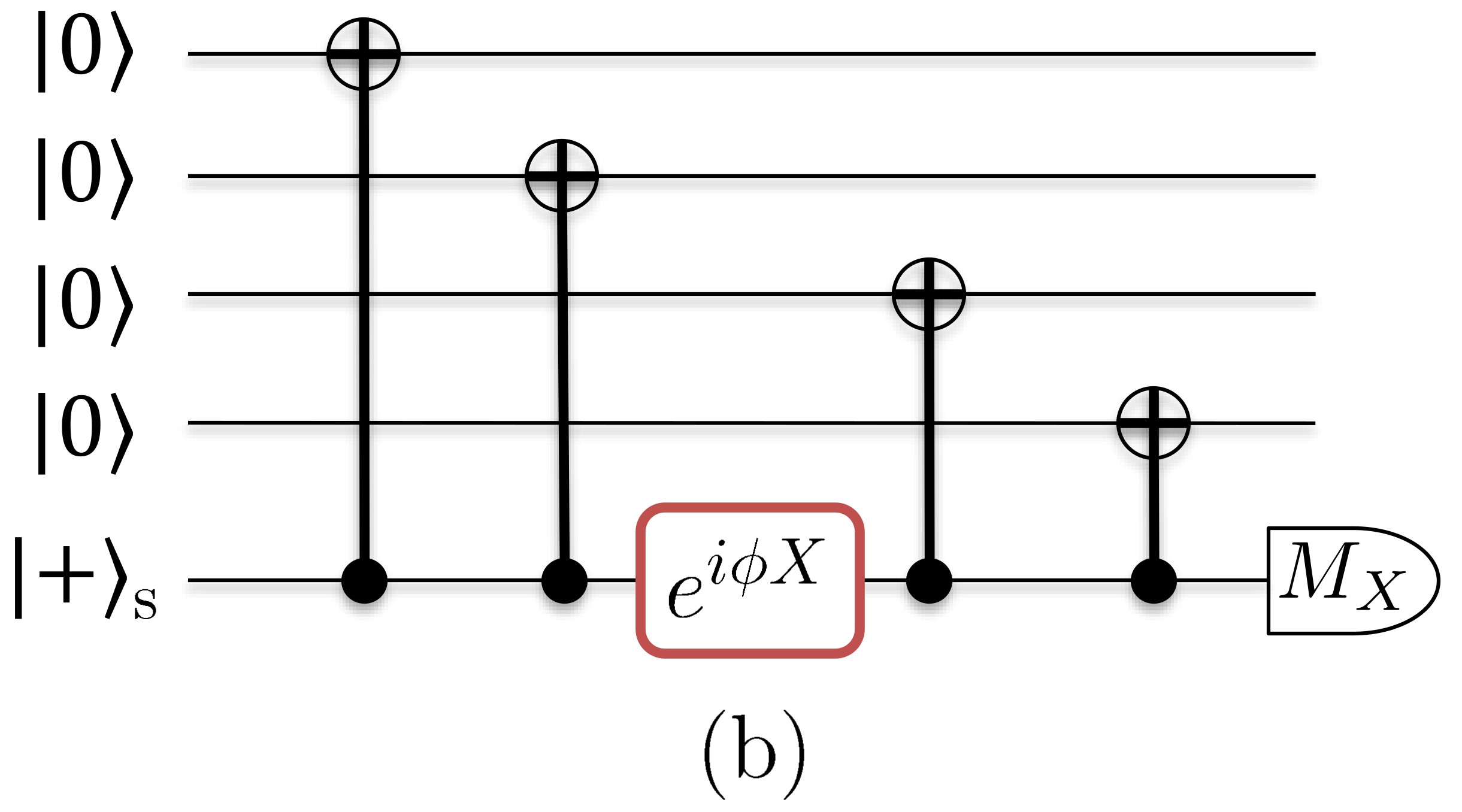}
  \includegraphics[width=0.7\columnwidth,center]{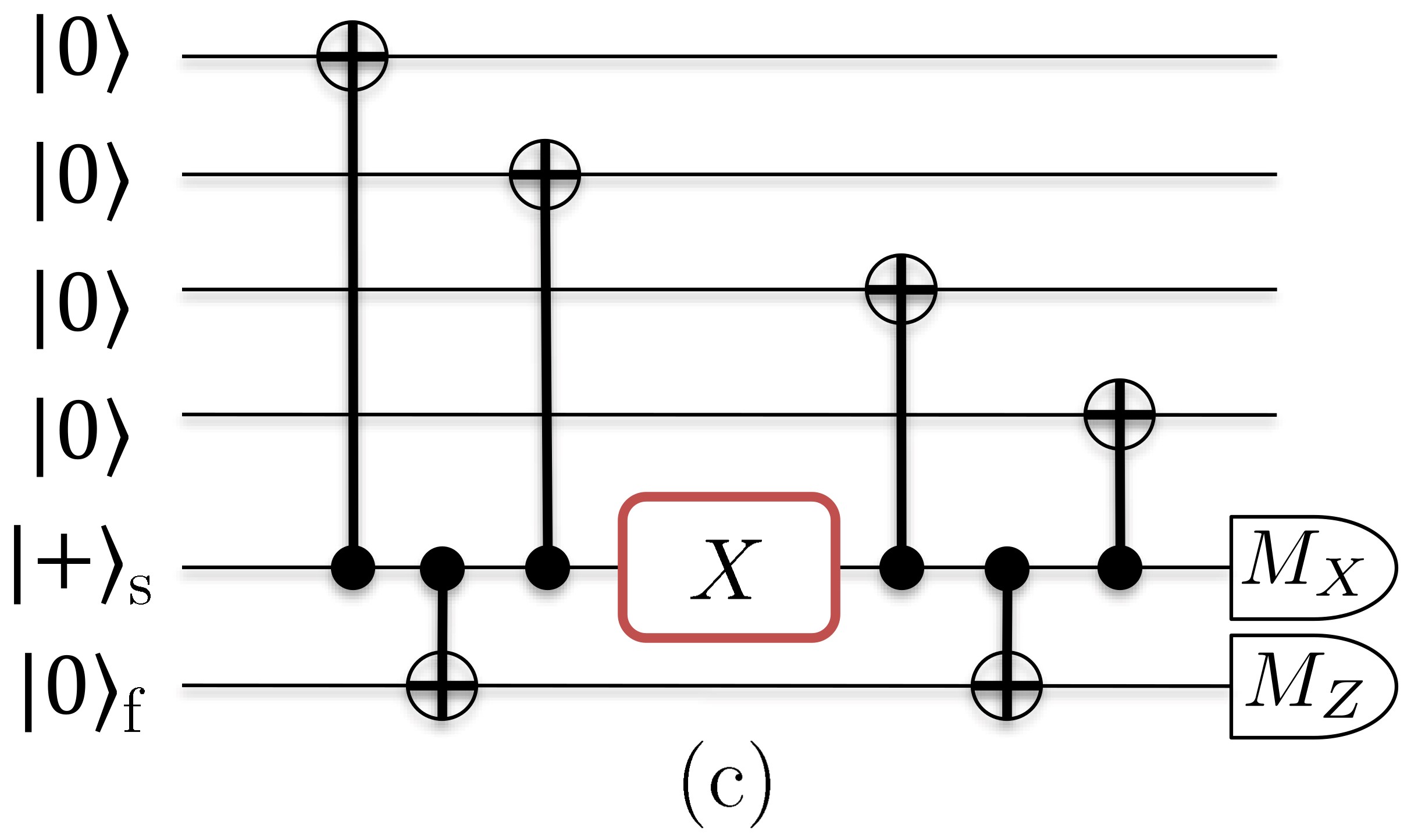}
  \includegraphics[width=0.7\columnwidth,center]{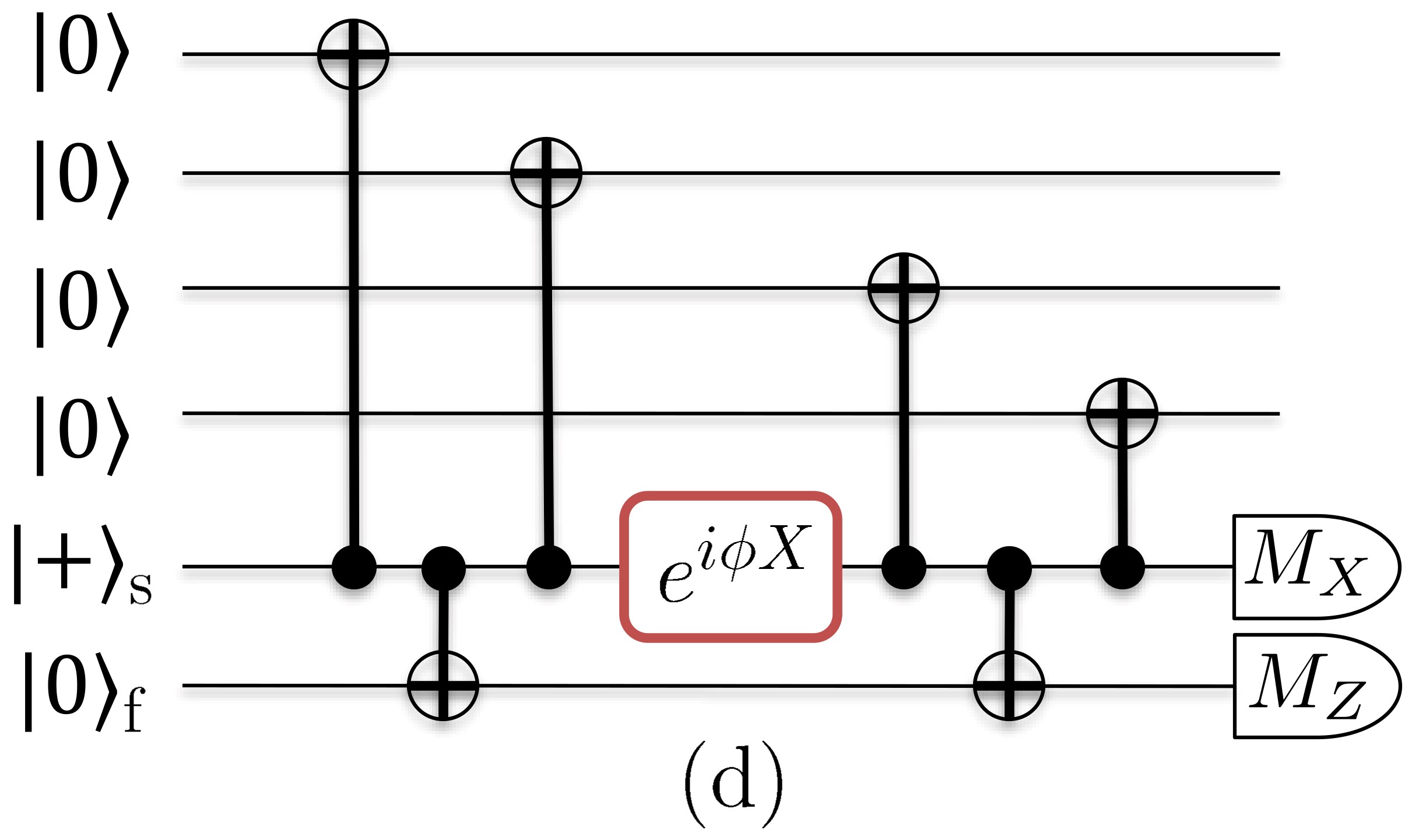}
  \caption{Introduction of dangerous errors in noisy plaquette circuits for (a,~b) the non-FT and (c,~d) the flag-based FT readouts of the $g_x$ generator. 
  (a) A single Pauli-$X$ error occurring on the syndrome qubit in a dangerous position leads to a cascade of errors in the data qubits.
  (b) Coherent-rotations errors in the syndrome qubit also lead to multiple undetectable errors in the data qubits.
  In (c), in contrast to the non-FT cases, outcomes of the measurement on the flag qubit reveals the cascade of errors from the syndrome qubit to the data.
  Similarly in (d), the coherent-rotation error is detected through measurement outcomes of the flag qubit.}
  \label{fig:ideal_circuits}
\end{figure}
\subsection{Ideal flag-based plaquette circuit}\label{sec:FTcircuit}
We now consider the FT flag-based circuit presented in Fig~\ref{fig:stabiliser_readout}.~b. 
Similar to the noiseless non-FT plaquette circuit of Sec.~\ref{sec:nonFTcircuit}, we assume the syndrome qubit is initialized in the state $\ket{+}_{\rm{s}}$ and the data qubits $\{i_{1},i_{2},i_{3},i_{4}\}$ in the product state $\ket{0}\oprod{4}$.
The flag qubit, on the other hand, is prepared in $\ket{0}_{\rm{f}}$.
The main difference between the flag-based and the non-FT circuits is that the two additional CNOT gates between the syndrome and the flag 
propagate the potential error in the syndrome qubit to the flag qubit while this does not corrupt the resulting state in the event of an ideal error-free realization.
This enables one to infer about the error in the syndrome qubit that propagates to multiple errors on the data by a projective measurement of the flag qubit.
When there are no errors in the syndrome, it is expected from the ideal circuit to yield the output state 
\begin{equation}
    \ket{\psi\s{out}}={(\ket{0}\oprod{5} + \ket{1}\oprod{5})\ket{0}\s{f} \over \sqrt{2}}=\ket{\mathrm{GHZ}\oprod{5}_{+}}\ket{0}\s{f}.
    \label{eq:output_stabilizers}
\end{equation}
However, if the syndrome qubit suffers from a single Pauli-$X$ error happening between the two CNOTs, as shown in Fig.~\ref{fig:ideal_circuits}.~c, it will propagate to the flag qubit as a bit flip.
This is reflected in the output state
\begin{equation}\label{eq:pureout4+2}
\begin{split}
    \ket{\psi\s{out}} = X\s{s}X_3X_4 \ket{\mathrm{GHZ}\oprod{5}_+} \ket{1}\s{f}.
\end{split}
\end{equation}
As a result, in contrast to the non-FT scenario of Sec.~\ref{sec:nonFTcircuit}, the $-1$ measurement outcome of the flag indicates the propagation of a correlated weight-2 error in the syndrome and data. Note that there are other events in a noisy experimental realization of the circuit, for example a measurement error on the flag qubit, which could also trigger the flag. Therefore, to construct one FT QEC cycle, further measurements of the stabilizers would be required to discriminate between these cases~\cite{PhysRevA.100.062307}.

Finally, if the syndrome qubit suffers from the coherent-rotation error $\exp{(i\phi X)}$ between the second and third syndrome-data CNOT gates 
shown in Fig.~\ref{fig:ideal_circuits}.~d, then the output state of the ideal circuit becomes 
\begin{equation}\label{eq:pureout4+2}
\begin{split}
    \ket{\psi\s{out}} = & \cos{\phi}\ket{0}\s{f}\ket{\mathrm{GHZ}\oprod{5}_+}  + \ii\sin{\phi} \ket{1}\s{f} X\s{s}X_3X_4 \ket{\mathrm{GHZ}\oprod{5}_+}.
\end{split}
\end{equation}
Here again, the flag unveils a cascaded error in the syndrome and data qubits.

Having understood the working principles and the ideal outputs of each plaquette in both non-FT and flag-based FT scenarios, we are ready to introduce our main tool for characterizing the performance of these circuits, namely, the conditional entanglement witnessing.

\section{Conditional entanglement witnessing}\label{sec:cond_witnessing_sec}

\subsection{Multipartite entanglement and its witnesses}\label{subsec:witness}

In this section we review the well-established theory of multipartite entanglement witnessing~\cite{Eisert2008,Sperling2013}.
The readers familiar with these details might skip this section and move on to Sec.~\ref{subsec:efficientGME}.
We identify the subsystems of an $n$-partite system using an index set $\ind:=\{1,2,\dots,n\}$ so that, given the Hilbert space $\mathcal{H}_i$ associated with each subsystem, the Hilbert space of the total system is given by $\mathcal{H}_\ind=\otimes_{i\in\ind} \mathcal{H}_i$.
Evidently, it is possible to group the subsystems in many different ways.
Calling each possible grouping a \textit{partitioning}, they correspond to partitionings of the index set $\ind$.
An $m$-partitioning $(\ind_1|\ind_2|\cdots|\ind_m)$ is thus a specific grouping of subsystems such that $\cup_x \ind_x=\ind$ and $\ind_x\cap\ind_y = \emptyset$ for any $x\neq y$.
It is customary to call $2$-partitionings \textit{bipartitions}.
We also denote the Hilbert space associated with each party by $\mathcal{H}_{\ind_j}=\otimes_{x\in\ind_j}\mathcal{H}_x$.

Recall that each quantum system is described by a density operator $\varrho$, which is a unit-trace positive member of the set of all bounded linear transformations $\LL(\mathcal{H})$ on the Hilbert space $\mathcal{H}$ assigned to the system. 
The set of all quantum states of a system is denoted by $\St\subset\LL(\mathcal{H})$, which is a compact convex set.
Applying these rules to the multipartite scenario for a presumed $m$-partitioning $(\ind_1|\cdots|\ind_m)$, the state space of each party $\ind_j$ is denoted by $\St_{\ind_j}\subset\LL(\mathcal{H}_{\ind_j})$, while the state space of the total system is identified as $\St_{\ind}\subset\LL(\mathcal{H}_{\ind})$.
Now, imagine that parties in $(\ind_1|\cdots|\ind_m)$ were spatially separated and all they could do was to perform local operations on their possessed quantum systems and possibly \textit{communicate} classical messages, known as the \textit{local operation and classical communication} (LOCC) paradigm, and ask what quantum states of the total system can the parties prepare. 
It can be shown that the quantum states that the parties can achieve in this setting are of the form~\cite{Werner1989}
\begin{equation}\label{eq:ksep}
\varrho=\sum_i p_i \varrho_{\ind_1;i}\otimes\cdots\otimes \varrho_{\ind_m;i},
\end{equation}
wherein $\varrho_{\ind_x;i}\in\St_{\ind_x}$ for any $x\in\{1,\dots,m\}$ and $\{p_i\}$ is  a probability distribution.
The quantum states of this form are called $m$-separable states with respect to the partitioning $(\ind_1|\cdots|\ind_m)$ and their collection forms a closed convex set denoted by $\sep_{\ind_1|\cdots|\ind_m}$.
The set of $m$-separable states for any given partitioning is also dense and convex.
Most importantly, however, is that $\sep_{\ind_1|\cdots|\ind_m}\subsetneq\St_{\ind}$, i.e. there are quantum states of the total system that cannot be prepared by the $m$ parties using LOCC.
Such states are called \textit{entangled with respect to the $m$-partitioning $(\ind_1|\cdots|\ind_m)$}.

Entanglement in an $n$-partite system has a complex structure because the possibilities of choosing the partitioning 
grow exponentially with the number of subsystems~\cite{Sperling2013,Shahandeh2014}. 
A particularly interesting class of multipartite entangled states, however, is the one formed by states that are entangled within all possible bipartitions.
These states are called \textit{genuinely multipartite entangled} (GME), or---sometimes---\textit{fully inseparable}~\cite{Eisert2008}\footnote{Some authors (see e.g. Refs.~\cite{Guhne2009,Huber2010}) define the genuine multipartite entangled states as those not lying within the closed convex hull of all bipartitions.
This means that a mixture of two biseparable states with respect to two different bipartitions is still considered biseparable even if it is not separable within any of the two bipartitions.
We note, however, that changing the entanglement of a quantum state with respect to different bipartitions cannot be changed via LOCC and thus, a mixed state of the mentioned form has nonlocal properties beyond biseparability.}.
Henceforth we use the former name for such states.
This is the strongest form of multipartite entanglement as it can be shown to imply 
entanglement within all possible $k$-partitionings of the system~\cite{Sperling2013,Shahandeh2014}.

Let us remark that  the certification of GME in an $n$-partite system requires verification of entanglement in $N_{\rm b}={1\over 2}\sum_{i=1}^{n-1} \binom{n}{i}=2^{n-1}-1$ different bipartitions, showing the exponential increase in both analytical and practical complexity of this task. Certifying entanglement of an arbitrary quantum state, be it bipartite or multipartite, is also known to be very hard from the perspective of complexity theory~\cite{10.1145/780542.780545,DBLP:journals/qic/Gharibian10}. 
This is because determining whether a given quantum state is entangled or not is equivalent to deciding whether it is inside or outside of a suitable convex set of separable states.
However, any such set is not a simplex, meaning that it does not have a finite  number of ``straight border lines'' to be checked for the separation of the state under consideration~\cite{ZyczkowskiBook}.
Nevertheless, it is possible to obtain many sufficient (but maybe not necessary) conditions for entanglement of quantum states~\cite{Horodecki2009}, a very large class of which are \textit{entanglement witnesses}~\cite{HORODECKI19961,Terhal2000,Lewenstein2000,Horodecki2001}.

Simply speaking, an entanglement witness is a quantum observable whose expectation value is bounded for all separable states.
This bound, however, can be violated by \textit{suitable} entangled states.
More precisely, for every entangled state with respect to a specific $m$-partitioning $({\ind_1|\cdots|\ind_m})$, $\varrho\notin\sep_{\ind_1|\cdots|\ind_m}$, there exists a bounded Hermitian operator $W_{\ind_1|\cdots|\ind_m}$ such that
\begin{equation}\label{eq:GenWitness}
\begin{split}
    \forall \sigma\in\sep_{\ind_1|\ind_2|\cdots|\ind_m} \quad &\Tr \left(W_{\ind_1|\cdots|\ind_m}\sigma\right) \geq 0,\\
    & \Tr \left(W_{\ind_1|\cdots|\ind_m}\varrho\right) <0.
\end{split}
\end{equation}
Importantly, the existence of the witness operator $W_{\ind_1|\cdots|\ind_m}$ is guaranteed by the Hahn-Banach separation theorem~\cite{Horodecki2001}.
Finding a witness operator with properties in Eq.~\eqref{eq:GenWitness} may seem to be difficult.
Thankfully, using the affine properties of the space of bounded linear operators on Hilbert spaces, it was shown~\cite{Lewenstein2000,Horodecki2001,Sperling2009,Sperling2013} that every entanglement witness can be constructed from a bounded linear \textit{test} operator, henceforth denoted by $L$, so that
\begin{equation}\label{eq:GenTest}
    W_{\ind_1|\cdots|\ind_m}=l_{\ind_1|\cdots|\ind_m}I - L_{\ind_1|\cdots|\ind_m},
\end{equation}
in which the so-called \textit{separability bound} is
\begin{equation}\label{eq:GenBound}
    l_{\ind_1|\cdots|\ind_m}=\sup_{\sigma\in\sep_{\ind_1|\cdots|\ind_m}} \left\{\Tr\left( L_{\ind_1|\cdots|\ind_m}\sigma\right)\right\}.
\end{equation}
While the optimization in Eq.~\eqref{eq:GenBound} seems demanding, it is known that entanglement witnessing is efficiently decidable~\cite{Eisert2004,Doherty2005}.
Now, according to Eq.~\eqref{eq:GenWitness}, any state $\varrho$ for which $\Tr \left(L_{\ind_1|\cdots|\ind_m}\varrho\right) > l_{\ind_1|\cdots|\ind_m}$ is
entangled with respect to the partitioning $({\ind_1|\cdots|\ind_m})$.

The program of entanglement witnessing thus proceeds as follows.
It is assumed that some information about the entangled state to be detected is available \textit{a priori},  
e.g. the quantum state  that results from an ideal noiseless circuit is known.
The goal is to devise an experimentally-friendly observable, i.e. one that can be measured with relatively low cost, for instance, in the number of settings,
based on the given information which is also resilient against the state preparation and measurement imperfections.
We note that, in general, different witnesses are required for different partitionings~\cite{Sperling2013,Shahandeh2014}.
In this paper, however, we are interested in GME witnessing and shall thus focus on bipartitions.

Let us give a simple example of how the standard GME witnessing can be performed.
Consider the circuit shown in the top panel of Fig.~\ref{fig:stabiliser_readout}.
We showed in Sec.~\ref{sec:nonFTcircuit} that the ideal output of this circuit is the 5-qubit GHZ state,
\begin{equation}
        \ket{\psi\s{out}}=\ket{\mathrm{GHZ}\oprod{5}_{+}}={\ket{0}\oprod{5}+\ket{1}\oprod{5} \over \sqrt{2}}.
\end{equation}
To show that this state is genuinely multipartite entangled, we have to check the entanglement in all $N_{\rm b}=15$ possible bipartitions.
Let us focus on one of them, say $({\rm s}|1,2,3,4)$ where we used the symbol $\rm s$ for the the syndrome qubit. 
It is easy to see that, with respect to this bipartition, the GHZ state is equivalent to a bipartite Bell state, i.e.,
\begin{equation}
        \ket{\psi\s{out}}=\ket{\mathrm{GHZ}\oprod{5}_{+}}={\ket{0\overline{0}}+\ket{1\overline{1}} \over \sqrt{2}},
\end{equation}
where $\ket{\overline{0}}=\ket{0}^{\otimes 4}$ and $\ket{\overline{1}}=\ket{1}^{\otimes 4}$. 
Consider now the test operator
\begin{equation}\label{eq:stdTestGHZ5}
L={\left(\ket{0\overline{0}}+\ket{1\overline{1}}\right)\left(\bra{0\overline{0}}+\bra{1\overline{1}}\right) \over 2}=\ket{\mathrm{GHZ}\oprod{5}_{+}}\bra{\mathrm{GHZ}\oprod{5}_{+}}.
\end{equation}
Using the similarity with the Bell state it is easy to compute the supremum expectation value of this test operator over all biseparable states $\sep_{\rm{s}|1,2,3,4}$ obtaining the separability bound of Eq.~\eqref{eq:GenBound} with respect to this bipartition as
\begin{equation}\label{eq:gsupGHZ5}
    l_{s|1,2,3,4}=\sup_{\sigma\in\sep_{\rm{s}|1,2,3,4}}\left\{ \tr\left({L\sigma}\right)\right\} = {1\over 2}.
\end{equation}
Consequently, one can define an entanglement witness with respect to this bipartition as
\begin{equation}\label{eq:stdWitnessGHZ5}
    W_{\rm{s}|1,2,3,4}={1 \over 2}I-L,
\end{equation}
so that $\tr\left({W_{\rm{s}|1,2,3,4}\sigma}\right)\geq 0$ for all biseparable states $\sigma\in\sep_{\rm{s}|1,2,3,4}$, while there exist quantum states for which $\tr\left({W_{\rm{s}|1,2,3,4}\varrho}\right)<0$.
Clearly, the 5-qubit GHZ state itself is an example of the latter.
It then follows that $\ket{\mathrm{GHZ}\oprod{5}_{+}}\bra{\mathrm{GHZ}\oprod{5}_{+}} \notin \sep_{\rm{s}|1,2,3,4}$ which must be read as ``the 5-qubit GHZ state is entangled with respect to the bipartition $(\rm{s}|1,2,3,4)$''. For this example, due to the symmetries of the test operator $L$ in Eq.~\eqref{eq:stdTestGHZ5} with respect to different bipartitions, it is easy to show that the maximum expectation value of the test operator for all bipartitions is indeed the same and equal to ${1\over 2}$.
As a result, we find that the witness operator of Eq.~\eqref{eq:stdWitnessGHZ5} is in fact capable of detecting the entanglement of the 5-qubit GHZ state within all possible bipartitions and thus serves as a faithful witness of GME. 
\begin{figure}[t!]
  \includegraphics[width=0.7\columnwidth]{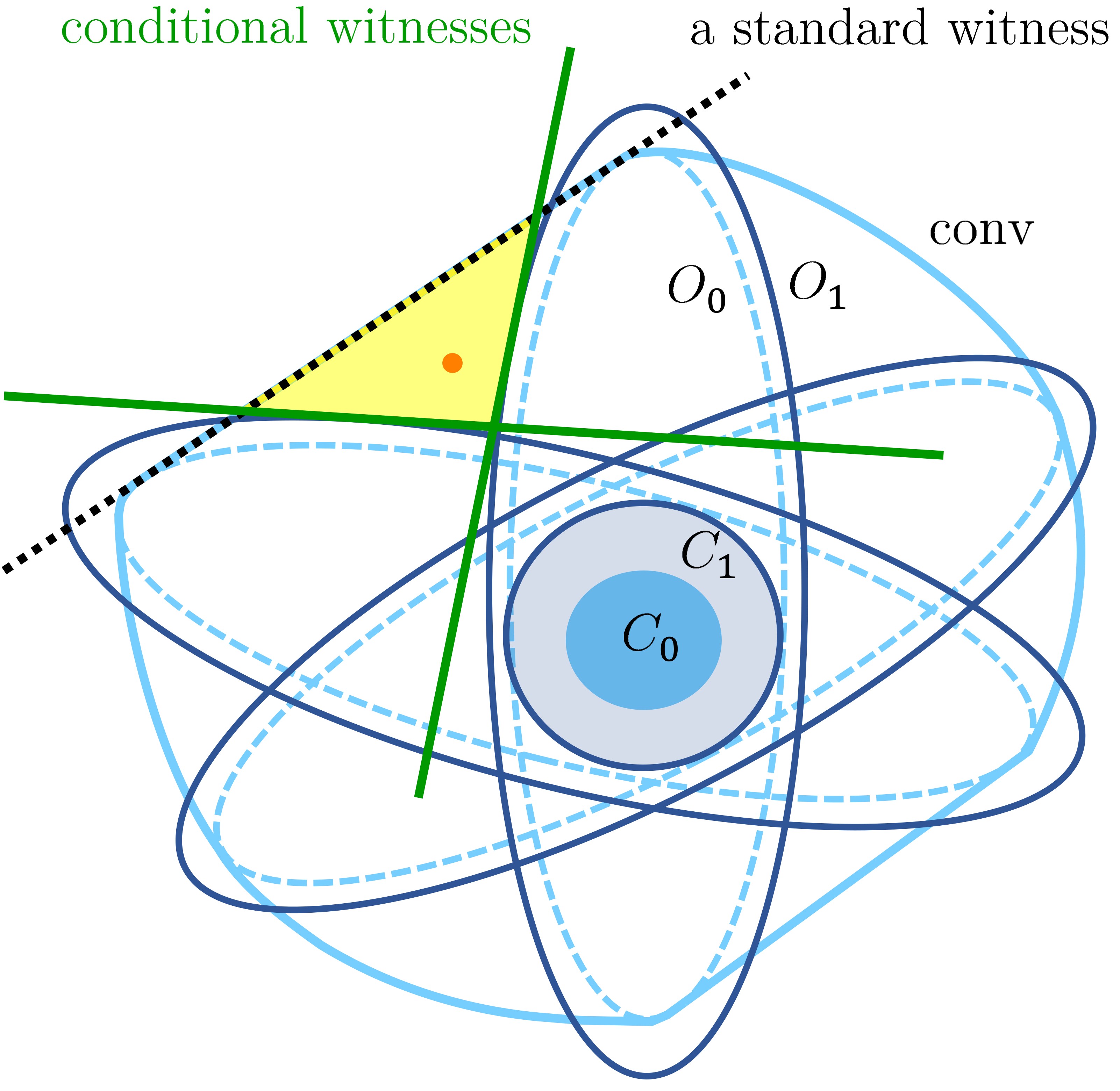}
  \caption{Heuristic comparison between standard GME witnessing and conditional GME witnessing technique of Sec.~\ref{subsec:condentwitness}.
  The light-blue dashed ovals, say $O_0$, represent separable states with respect to a specific individual bipartition, aka biseparable sets.
  Given $n$ subsystems there exist an exponential number, i.e. $N_{\rm b}=2^{n-1}-1$, of biseparable sets. 
  The closed convex hull of all biseparable sets is denoted as ${\rm conv}$. 
  The inner circle $C_0$ is the set of fully separable states.
  The dark-blue ovals, say $O_1$, represent the conditional biseparable states with respect to individual bipartitions, i.e. states for which entanglement cannot be localized between any two subsystems from parties of the bipartition.
  Evidently, $O_0\subsetneq O_1$.
  The circle $C_1$ represents the intersection of all conditional biseparable sets.
  The black dashed line represents a standard GME witness.
  It is clear that there exist quantum states, e.g. the red dot, that are GME and yet lie within the convex hull ${\rm conv}$.
  Using Theorem~\ref{th:measNo} or Theorem~\ref{th:measNoNeighbour} of Sec.~\ref{subsec:efficientGME}, it is enough to certify the entanglement of states with respect to only a linear number, i.e. $n-1$, of conditional bipartitions. 
  We use $n-1$ witnesses (green solid lines) for this purpose. 
  The intersection of detected states using such witnesses are also GME entangled states.  
  Thus, the shaded yellow area highlights the advantage obtained using our conditional GME witnessing as discussed in Sec.~\ref{subsec:condentwitness}.}
  \label{fig:witnesscompare}
\end{figure}

We emphasize, however, that this is not generally the case for arbitrary quantum states, generic test operators, and arbitrary bipartitions. 
In the most general case of certifying entanglement using a single test operator, a given quantum state is GME if the expectation value of the test operator for that state is larger than the maximum of the separability bounds Eq.~\eqref{eq:GenBound} for all bipartitions~\cite{Sperling2013}, i.e.
\begin{equation}\label{eq:GMEmaxBound}
   \varrho \hspace{1ex} {\rm is\hspace{1ex} GME\hspace{1ex} if}\hspace{2ex} \expec{L}={\rm Tr}\left(L\rho\right)>\max_{B\in\B}{l_B},
\end{equation}
where the maximum is taken over the set of all bipartitions $\B$  which has $|\B|=N_{\rm b}$ elements.
Now, suppose that $\varrho$ and $\sigma$ are two biseparable states with respect to two bipartitions $B_1$ and $B_2$, respectively.
Thus, we have $\Tr\left({L\varrho}\right)\leq \max_{B\in\B}{l_B}$ and $\Tr\left({L\sigma}\right)\leq \max_{B\in\B}{l_B}$, which implies $\Tr\left({L(p\varrho+ (1-p)\sigma) }\right)\leq \max_{B\in\B}{l_B}, \forall p\in[0,1]$.
As this holds for any choice of $B_1$ and $B_2$, we conclude that the inequality $\expec{L}>\max_{B\in\B}{l_B}$ can be geometrically interpreted as a
test that serves to identify points that lie outside the closed convex hull of all biseparable states.
In other words, any standard GME witness using a single witness is only capable of detecting GME states that lie outside the closed convex hull of all biseparable states; see Fig.~\ref{fig:witnesscompare}.

Nevertheless, an important aspect of the above construction is its potential experimental feasibility, depending on our choice of the test operator.
First, to obtain the expectation value of the witness, it is enough to only measure the expectation value of the test operator.
Second, the expectation value of the test operator of Eq.~\eqref{eq:stdTestGHZ5} can, in our particular case, be obtained by local measurements of the 32 stabilizers of the 5-qubit GHZ state as given in Eq.~\eqref{eq:GHZ5Stabilizers}.
Hence, it is in practice possible to obtain the witness value and thus test the entangled nature of the state generated by the actual circuit.
We note, however, that the number of stabilizers grows exponentially with the number of qubits which makes this method practically challenging to implement. 
When using a single witness, this problem can cleverly be avoided by trading noise robustness for efficiency in the number of measurements as shown by T\'{o}th and G\"{u}hne in Ref.~\cite{Toth2005}.
We also emphasize that the latter difficulty must be distinguished from the obstacle due to the exponential growth in the number of bipartitions.

\subsection{Efficient certification of GME} \label{subsec:efficientGME}

As we have seen, entanglement witnessing is meant to be an experimentally-friendly technique to certify entanglement.
Interestingly, there exist techniques to push witnessing to its limits so that a given witness delivers an impressive performance even in the presence of severe noise and errors~\cite{Lewenstein2000,Guhne2006,Guhne2007,Shahandeh2017UEW}. 
In multipartite scenarios, however, we face a different problem, namely the scaling of the number of partitions with the input size which seems to be persistent for any method of entanglement certification.

In the scenario of certifying GME that is relevant to our purpose an $n$-partite system contains $N_{\rm b}=2^{n-1}-1$ bipartitions to be tested for entanglement the exponential growth of which is a major challenge for large systems. 
We note that, by choosing the witnessing method for entanglement certification and using the construction of Eqs.~\eqref{eq:gsupGHZ5} and ~\eqref{eq:stdWitnessGHZ5}, one might construct $N_{\rm b}$ witnesses for each bipartition using 
a single test operator $L$.
Nevertheless, in general, there is an exponential number of optimizations as per Eq.~\eqref{eq:GenBound} that must
be performed~\cite{Sperling2013}.
One way to circumvent this practical limitation is to choose a test operator $L$ for which a suitable positive real number $l$
is known---by other means---such that $l$ is larger than the separability bound of Eq.~\eqref{eq:GMEmaxBound} for all bipartitions.
One such technique was adopted by T\'{o}th and G\"{u}hne in Ref.~\cite{Toth2005}.
However, the ease does not come for free:
(i) the obtained witnesses are not the finest ones~\cite{Lewenstein2000}, as there may exist genuinely multipartite entangled states that cannot be detected by any such single witness,
(ii) there is no general recipe for their construction, and 
(iii) they are limited to certain types of witnesses and systems as the value $l$ cannot be deduced for an arbitrary test operator without performing the optimizations.

In the following, we exploit the concept of \textit{localizable entanglement} (LE)~\cite{Verstraete2004,Popp2005} to introduce a general solution to this challenge that collapses the exponential complexity of GME witnessing to a linear growth for all system types in all dimensions.
The price to pay, however, is that our technique only provides a sufficient condition for detecting GME, i.e. not every genuine multipartite entangled state can be detected using our approach. 
We emphasize, however, that for many operational scenarios including measurement-based quantum computation and quantum communication in which LE is a necessary resource~\cite{VandenNest2006,Briegel1998}, our sufficient condition becomes necessary as well, hence providing a necessary and sufficient condition for practical purposes.

Let us begin with a simple example.
Suppose that two parties Alice and Bob are holding quantum systems $\rm A$ and $\rm B$, respectively.
In addition, Alice is holding an ancillary system $\rm C$.
Hence, we effectively have a tripartite system shared between two parties in the bipartition $\rm (AC|B)$.
We denote the quantum state of the total system with respect to this bipartition by $\varrho\s{AC|B}$, and the amount of entanglement shared between Alice and Bob by $\EE(\varrho\s{AC|B})$.
We ask how that is related to the entanglement between subsystems $\rm A$ and $\rm B$.
This is, in general, a difficult question to answer.
Nevertheless, it is clear that if Alice performs a measurement on $\rm C$ and, conditioned on the $i$th outcome of this measurement, she finds out that the corresponding conditional state $\varrho\sn{A|B}{i}$ is entangled, then the original state $\varrho\s{AC|B}$ prior to the measurement must have been entangled with respect to the bipartition $\rm (AC|B)$.
The reason for this conclusion is that we know entanglement cannot increase via LOCC.
If $\varrho\s{AC|B}$ was to be separable with respect to the bipartition $\rm (AC|B)$, then it would be impossible for Alice to create entanglement by a local measurement (i.e. her measurement on $\rm C$) and classically communicating the result (i.e. her $i$th outcome) to Bob.
We thus see that entanglement between $\rm A$ and $\rm B$ conditioned on measurements on $\rm C$ implies entanglement within the bipartition $\rm (AC|B)$. 
Similarly, if we assume that Bob is holding the subsytem $\rm C$, then we arrive at the conclusion that entanglement between $\rm A$ and $\rm B$ conditioned on measurements on $\rm C$ also implies entanglement within the bipartition $\rm (A|BC)$, hence the following lemma~\cite{shahandeh2019assisted}. 
\begin{lemma}\label{lem:condEnt}
Consider a tripartite state $\varrho\s{ABC}$. Given that the conditional state $\varrho\sn{B|A}{i}$ for the $i$th outcome of some measurement on the subsystem $\rm C$ is entangled, this implies that $\varrho\s{ABC}$ is entangled within both bipartitions $\rm (AC|B)$ and $\rm (A|BC)$.
\end{lemma}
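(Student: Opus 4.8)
The plan is to prove the contrapositive and lean on the single structural fact already invoked in the text preceding the statement: entanglement cannot be created out of a separable state by \emph{local operations and classical communication} (LOCC). The crucial observation is that a measurement performed on the subsystem $\rm C$, together with the classical post-selection on its $i$th outcome, is a local operation with respect to \emph{both} relevant bipartitions: in $\rm (AC|B)$ the subsystem $\rm C$ sits inside the party $\rm AC$, while in $\rm (A|BC)$ it sits inside the party $\rm BC$. Hence if $\varrho\s{ABC}$ were separable across either bipartition, that separability would survive the measurement, forcing the conditional state $\varrho\sn{B|A}{i}$ to be separable across $\rm (A|B)$ --- contradicting the hypothesis that it is entangled.

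To make this explicit for the bipartition $\rm (AC|B)$, I would first assume, toward a contradiction, that $\varrho\s{ABC}\in\sep_{\rm AC|B}$, so that it admits a decomposition
\begin{equation}
\varrho\s{ABC}=\sum_k q_k\, \varrho_{{\rm AC};k}\otimes \varrho_{{\rm B};k},
\end{equation}
with $\{q_k\}$ a probability distribution. Next I would model the measurement on $\rm C$ by a collection of Kraus operators $\{M_i\}$ acting on $\mathcal{H}_{\rm C}$ with $\sum_i M_i^\dagger M_i=I_{\rm C}$, and write the normalized conditional state of $\rm A$ and $\rm B$ for outcome $i$ as
\begin{equation}
\varrho\sn{B|A}{i}=\frac{1}{p_i}\,\Tr_{\rm C}\!\left[(I_{\rm AB}\otimes M_i)\,\varrho\s{ABC}\,(I_{\rm AB}\otimes M_i^\dagger)\right],
\end{equation}
where $p_i$ is the probability of the outcome. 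Since $M_i$ acts only within the party $\rm AC$, substituting the decomposition and tracing out $\rm C$ leaves the $\rm B$ factors untouched and yields
\begin{equation}
\varrho\sn{B|A}{i}=\sum_k \tilde q_{k,i}\,\varrho_{{\rm A};k,i}\otimes \varrho_{{\rm B};k},
\end{equation}
where $\varrho_{{\rm A};k,i}$ are the normalized post-measurement reductions on $\rm A$ and $\tilde q_{k,i}=q_k p_{k,i}/p_i$, with $p_{k,i}$ the conditional outcome probabilities. This is manifestly separable across $\rm (A|B)$, contradicting the assumed entanglement of $\varrho\sn{B|A}{i}$, and so $\varrho\s{ABC}$ must be entangled within $\rm (AC|B)$.

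The argument for the bipartition $\rm (A|BC)$ is entirely symmetric: one starts from a putative decomposition $\varrho\s{ABC}=\sum_k q_k\,\varrho_{{\rm A};k}\otimes \varrho_{{\rm BC};k}$, applies the same measurement on $\rm C$ --- now residing in the party $\rm BC$ --- and again finds a separable conditional state on $\rm (A|B)$. I expect the only point needing care to be the bookkeeping in the final substitution, namely verifying that the coefficients $\tilde q_{k,i}$ form a genuine probability distribution (equivalently, that $p_i=\sum_k q_k p_{k,i}$), so that the resulting operator is a bona fide separable state rather than merely a positive combination of product operators. Everything else reduces to the standard, and here almost immediate, statement that LOCC maps $\sep_{\rm AC|B}$ and $\sep_{\rm A|BC}$ into themselves.
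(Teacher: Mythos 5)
Your proposal is correct and takes essentially the same route as the paper: the paper argues the contrapositive informally in the paragraph preceding the lemma (a measurement on $\rm C$ plus communication of the outcome is LOCC with respect to both bipartitions, so separability across $\rm (AC|B)$ or $\rm (A|BC)$ would force the conditional state to be separable across $\rm (A|B)$), and you have simply made that argument explicit with a separable decomposition and Kraus operators. The bookkeeping you flag (that $p_i=\sum_k q_k p_{k,i}$ so the weights form a probability distribution) is indeed the only detail to check, and it goes through.
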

The above example is reminiscent of the idea that, in a multipartite system, it might be possible to concentrate entanglement within a smaller set of subsystems by performing measurements on the remaining subsystems---and communicating the outcome---as first considered 
by Verstraete, Popp, and Cirac~\cite{Verstraete2004}.
The maximum amount of entanglement that can be concentrated between Alice and Bob by such local measurements on average is dubbed as the ``localizable entanglement'' (LE).
More precisely, suppose that $\EE$ denotes an entanglement measure of quantum states that cannot be increased via LOCC.
Given $n$ quantum systems in the state $\varrho$, the LE of the bipartition $(s_x|s_y)$ for $s_x,s_y\in\{s_1,\dots,s_n\}$ after local measurements on subsystems $\{s_x,s_y\}\us{c}:=\{s_1,\dots,s_n\}\setminus\{s_x,s_y\}$ with outcomes $\{i\}$ is given by
\begin{equation}\label{eq:LE}
\LE_{x|y}(\varrho)=\sup_{\MM\in\CC} \sum_i p_i \EE(\varrho_{x|y:i}),
\end{equation}
in which $\varrho_{x|y:i}$ is the joint conditional state of subsystems $s_x$ and $s_y$ given the outcome $i$, with probability $p_i$, of the local measurement $\MM$ on subsystems $\{s_x,s_y\}\us{c}$.
Moreover, the supremum is taken over the set $\CC$ of all $(n-2)$-partite local measurements.
For our example, it thus follows that $\LE\s{A|B}>0$ implies entanglement in both bipartitions $\rm (AC|B)$ and $\rm (A|BC)$ as presented in Lemma~\ref{lem:condEnt}.

Our main observation is that the implications of Lemma~\ref{lem:condEnt}, in light of the concept of LE, can in fact be extended to \textit{any} bipartition of \textit{any} number of systems.
Denote a generic bipartition of the $n$-partite system as $(s_1,\dots,s_j|s_{j+1},\dots,s_n)$.
According to Lemma~\ref{lem:condEnt}, entanglement in this bipartition is certified once the entanglement between two subsystems $s_x$ and $s_y$ from each party, i.e. with $x\leq j$ and $y\geq j+1$, {\it conditioned} on appropriate measurement outcomes of the remaining $n-2$ subsystems, is verified.
Naively, given the $j(n-j)$ choices of the pair $s_x$ and $s_y$ for each bipartition
on top of the fact that there are yet $2^{n-1}-1$ bipartitions to be considered, there still seems to be an exponentially-large 
number of entanglement certifications to be performed.
Our central result stated in the following theorem shows that there is a huge redundancy in this process. Leveraging this redundancy allows us to reduce the exponential number of bipartitions to a linear one, hence making the GME detection efficient.
\begin{theorem}\label{th:measNo}
Given an $n$-partite quantum system of any dimension, 
certification of entanglement between the subsystems $s_1$ and $s_x$ for all $s_x\in\{s_1\}\us{c}$ conditioned on an outcome $i$ of suitable local measurements on the remaining $n-2$ subsystems is sufficient for the certification of GME of the system.
The number of bipartitions needed to be checked is thus $n-1$.
\end{theorem}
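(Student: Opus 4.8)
The plan is to establish the defining property of GME---entanglement across every one of the $N_{\rm b}=2^{n-1}-1$ bipartitions---by showing that each bipartition is certified as entangled by at least one of the $n-1$ conditional checks. The engine of the proof is the LOCC-monotonicity argument underlying Lemma~\ref{lem:condEnt}, which I would first promote from the tripartite case to an arbitrary bipartition of $n$ parties.

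First I would single out $s_1$ as a pivot and, for each $s_x\in\{s_1\}\us{c}$, treat the hypothesized certification of entanglement between $s_1$ and $s_x$---localized by a suitable local measurement on the remaining $n-2$ subsystems and conditioned on an outcome $i$ broadcast classically---as a single primitive; positivity of the localizable entanglement $\LE_{s_1|s_x}>0$ guarantees that at least one such entangled conditional state exists. This yields $n-1$ primitives in total. Next I would fix an arbitrary bipartition $(\ind_1|\ind_2)$ and assume without loss of generality that $s_1\in\ind_1$. Since $\ind_2\neq\emptyset$, it contains some $s_x$, so the cut separates $s_1$ from $s_x$. The crucial observation is that the measurement localizing entanglement between $s_1$ and $s_x$ acts on the $n-2$ spectator subsystems, each of which lies entirely within either $\ind_1$ or $\ind_2$; hence that measurement, together with the classical communication of its outcome, is a legitimate LOCC operation with respect to the cut $(\ind_1|\ind_2)$. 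Because entanglement across $(\ind_1|\ind_2)$ cannot be generated by LOCC, the certified post-measurement entanglement between the $s_1$-side and the $s_x$-side forces $\varrho\notin\sep_{\ind_1|\ind_2}$. This is precisely the generalization of Lemma~\ref{lem:condEnt}, and it shows that a single check between $s_1$ and $s_x$ simultaneously certifies all $2^{n-2}$ bipartitions that place $s_1$ and $s_x$ on opposite sides.

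Finally I would close with a covering argument: every bipartition puts $s_1$ on one side and a nonempty set on the other, so it separates $s_1$ from at least one $s_x$ and is therefore covered by the corresponding check. Entanglement in all bipartitions is the definition of GME, and the number of checks equals $|\{s_1\}\us{c}|=n-1$, establishing the stated linear count.

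I expect the main obstacle to lie in making the generalized LOCC step fully rigorous---namely verifying that a \emph{fixed} measurement localizing $s_1$--$s_x$ entanglement respects the locality constraints of every cut $(\ind_1|\ind_2)$ with $s_1\in\ind_1,\,s_x\in\ind_2$ at once, and that an entangled conditional state for outcome $i$ genuinely transfers into bipartite entanglement of the global state across that cut rather than merely into a post-selected artefact. Once this per-outcome transfer is pinned down, the covering and counting steps are routine.
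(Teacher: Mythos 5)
Your proposal is correct and follows essentially the same route as the paper's proof: both rest on the generalization of Lemma~\ref{lem:condEnt}---that a measurement which is local on each of the $n-2$ spectator subsystems is automatically LOCC with respect to any cut separating $s_1$ from $s_x$, so a certified entangled conditional state forces entanglement across that cut---combined with the covering observation that every bipartition places $s_1$ opposite some $s_x$. The paper packages this as a ``sliding block'' equivalence $(S\s{L},[s_x|s_y],S\s{R}) \cong ([s_1|s_y],S'\s{L}\cup S\s{R})$, but the underlying argument, including the per-outcome (rather than on-average) transfer you flag as the delicate point, is the same.
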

\begin{proof}
To prove the theorem, let us first introduce a notation that simplifies and clarifies the procedure.
For any pair of subsystems $s_x$ and $s_y$, on which we want to localize entanglement,  
we can denote by $(S\s{L},[s_x|s_y],S\s{R})$ a bipartition with respect to which the entanglement is being certified.
Here, $S\s{L}$ and $S\s{R}$ are the collection of all subsystems except $s_x$ and $s_y$ that belong to the left and right parties, respectively, and on which the conditioning takes place.
We also call the block $[s_x|s_y]$ the conditional bipartition.

By using Lemma~\ref{lem:condEnt}, for a fixed choice of subsystems $s_x$ and $s_y$, the localizable entanglement of any such bipartition is equivalent to that of the unique bipartition $([s_x|s_y],S'\s{R})$ where $S'\s{R}=S\s{R}\cup S\s{L}$. This follows directly by assuming that Alice is holding the subsystem $A=s_x$ while Bob holds $B=s_y\cup S\s{L}\cup S\s{R}$ and is able to make local measurements
on $C=S'\s{R}$.

Thus, one can think of the fixed block $[s_x|s_y]$ sliding through to generate bipartitions of the form $(S\s{L},[s_x|s_y],S\s{R})$ without changing the entanglement
properties of the resulting bipartition.
We symbolically denote this equivalence of bipartitions in terms of their localizable entanglement 
as $([s_x|s_y],S'\s{R}) \cong (S\s{L},[s_x|s_y],S\s{R})$.

Now, consider the bipartition $(S\s{L},[s_x|s_y],S\s{R})$.
Since we are interested only in a sufficient condition, we can safely do the replacement $s_x\mapsto s_1$ obtaining the bipartition $(S'\s{L},[s_1|s_y],S\s{R})$ with $S'\s{L}=\{s_1,\dots,s_j\}\setminus \{s_1\}$.

It is thus immediate that $(S\s{L},[s_x|s_y],S\s{R}) \cong (S'\s{L},[s_1|s_y],S\s{R})$. Next, recalling that the measurements are local, we can combine $S'\s{L}$ with $S\s{R}$ and write $(S'\s{L},[s_1|s_y],S\s{R}) \cong ([s_1|s_y],S'\s{L}\cup S\s{R})$, hence $(S\s{L},[s_x|s_y],S\s{R}) \cong ([s_1|s_y],S'\s{L} \cup S\s{R})$.

It follows from the above argument that bipartitions $\left\{([s_1|s_y],S\s{R}):\forall s_y\in\{s_2,\dots,s_n\}\right\}$ cover the set of all possible bipartitions, in the sense that detection of their conditional entanglement is sufficient to certify entanglement in all possible bipartitions, hence GME. 
Since this set has $n-1$ elements corresponding to the $n-1$ pairings of $s_1$ with other subsystems, the detection of GME only requires a linear number of bipartitions. 
\end{proof}
It is worth mentioning that, despite the systematic reduction in the number of bipartitions needed for the certification of GME, in general, there exist GME states the entanglement of which cannot be detected through localizing procedure as described in Theorem~\ref{th:measNo}.
In particular, it was shown by Mi\v{c}uda \textit{et al\ARB{.}}~\cite{Micuda2017} that there exist three-qubit mixed states that are genuinely multipartite entangled and, yet, no measurement on any of the subsystems leads to a bipartite entangled state of the remaining subsystems.
An example state with this property is the Werner state~\cite{Werner1989} 
\begin{equation}
    \varrho\s{W}={1-p \over 8} I + p\ket{\mathrm{GHZ}_+\oprod{3}}\bra{\mathrm{GHZ}_+\oprod{3}}
\end{equation}
for ${1 \over 5} < p \leq {1 \over 3}$.
Moreover, very recently, a related result within the context of device-independent GME certification was obtained independently by Zwerger \textit{et al.}~\cite{Zwerger2019}.
There, it was shown that all pure GME states contain conditional bipartite entanglement.
As a corollary, one can say that if a pure state is GME, then its GME can necessarily be certified using our conditional GME witnessing technique.
Interestingly, it is possible to reformulate Theorem~\ref{th:measNo} in terms of neighbouring subsystems of an $n$-partite system.
\begin{theorem}\label{th:measNoNeighbour}
Consider an $n$-partite quantum system of any dimension.
Then, the certification of entanglement in all conditional bipartitions $[s_x|s_{x+1 (\mod{}~n)}]$, with $1\leq x \leq n$, conditioned on an outcome $i$ of suitable local measurements on the remaining $n-2$ subsystems is sufficient for the certification of GME of the system.
The number of bipartitions needed to be checked is thus $n-1$.
\end{theorem}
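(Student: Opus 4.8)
The plan is to translate the statement into a purely combinatorial covering problem and then reduce that problem to the connectivity of a graph built on the subsystems. The engine is the equivalence of bipartitions established inside the proof of Theorem~\ref{th:measNo}: for a fixed pair $s_x,s_y$, verifying conditional entanglement in the block $[s_x|s_y]$ certifies entanglement in \emph{every} bipartition $(S\s{L},[s_x|s_y],S\s{R})$ that places $s_x$ and $s_y$ on opposite sides, no matter how the other $n-2$ subsystems are split between the parties. Phrased set-theoretically, a single conditional pair $[s_x|s_y]$ \emph{covers} exactly those bipartitions $(A|\bar A)$ with $s_x\in A$ and $s_y\in\bar A$ (or vice versa).

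First I would reformulate. Regard the subsystems as the vertex set $\{s_1,\dots,s_n\}$ and any chosen family $\mathcal{P}$ of conditional pairs as the edge set of a graph $H$. By the covering property above, $\mathcal{P}$ certifies GME if and only if every non-trivial bipartition $(A|\bar A)$, $\emptyset\neq A\subsetneq\{s_1,\dots,s_n\}$, is \emph{straddled} by some edge of $H$, i.e. the cut separating $A$ from $\bar A$ contains an edge of $H$. This condition---every vertex cut is non-empty---holds for all bipartitions precisely when $H$ is connected. Thus the whole certification problem is equivalent to selecting a connected spanning subgraph on the $n$ subsystems, whose minimal instances are spanning trees with $n-1$ edges. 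This viewpoint already subsumes Theorem~\ref{th:measNo}, whose family $\{[s_1|s_y]\}_{y=2}^{n}$ is exactly the star $K_{1,n-1}$.

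Next I would identify the neighbour pairs with a cycle. The conditional bipartitions $[s_x|s_{x+1\,(\mod n)}]$ for $1\le x\le n$ are the edges of the $n$-cycle on $\{s_1,\dots,s_n\}$, which is connected, so by the reformulation they cover all bipartitions; concretely, traversing the loop under any non-trivial $2$-colouring forces an even, hence nonzero, number of colour changes, and each change is a straddled neighbour edge. To sharpen the count to $n-1$, I would note that a cycle is connected but carries one redundant edge: deleting any single neighbour pair leaves the spanning path $s_1-s_2-\cdots-s_n$, still connected and with exactly $n-1$ edges. Equivalently, a non-trivial colouring of a path must flip on at least one consecutive pair, since otherwise all vertices share a colour. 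Hence the $n-1$ neighbour pairs suffice, and certifying their conditional entanglement certifies GME.

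The main obstacle is conceptual rather than technical: one must argue carefully that the localizable-entanglement equivalence of Theorem~\ref{th:measNo} genuinely licenses the set-theoretic straddling condition---that whether a bipartition is covered depends only on which side each of $s_x,s_y$ lands on---and then recognise that this covering condition is nothing but graph connectivity. Once this dictionary is in place the theorem is immediate, and the apparent mismatch between the cyclic indexing ($n$ pairs) and the asserted count ($n-1$) is resolved by the single redundant edge of the cycle.
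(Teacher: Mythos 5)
Your proof is correct, and it rests on the same essential observation as the paper's: that once Lemma~\ref{lem:condEnt} guarantees a conditional pair $[s_x|s_y]$ certifies \emph{every} bipartition placing $s_x$ and $s_y$ on opposite sides, the neighbour pairs on a cycle must straddle any nontrivial two-colouring of the subsystems. The paper argues this pictorially: it places two separators $\pp_1$ and $\pp_2$ on the circle, notes via Lemma~\ref{lem:condEnt} that the position of $\pp_2$ is irrelevant once the pair adjacent to $\pp_1$ is conditionally entangled, and counts $n-1$ relevant placements of $\pp_1$. Your graph-connectivity reformulation is a genuine improvement in rigour and generality on two counts. First, the paper's claim that the two separators ``specify all bipartitions'' is literally true only for bipartitions into contiguous arcs (there are $2^{n-1}-1$ bipartitions but far fewer arc pairs); your colour-change argument on the cycle handles arbitrary, non-contiguous bipartitions explicitly, which is what the theorem actually requires. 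Second, your characterization --- a family of conditional pairs certifies GME precisely when the corresponding graph is a connected spanning subgraph, with spanning trees as the minimal instances --- subsumes Theorem~\ref{th:measNo} (the star $K_{1,n-1}$) and Theorem~\ref{th:measNoNeighbour} (the path) as special cases, and it cleanly resolves the counting discrepancy between the $n$ cyclically-indexed pairs in the statement and the asserted $n-1$ tests: a cycle carries exactly one redundant edge, and any nontrivial colouring of the remaining path must flip on some consecutive pair since it has an even, hence at least two, colour changes on the full cycle. The only point worth making explicit, which you flag yourself, is that the ``straddling'' dictionary depends only on which side of the cut $s_x$ and $s_y$ land on; this is exactly the content of the equivalence $(S\s{L},[s_x|s_y],S\s{R})\cong([s_x|s_y],S\s{L}\cup S\s{R})$ established in the proof of Theorem~\ref{th:measNo}, so no new quantum-information input is needed.
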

\begin{figure}[h!]
  \includegraphics[width=0.5\columnwidth]{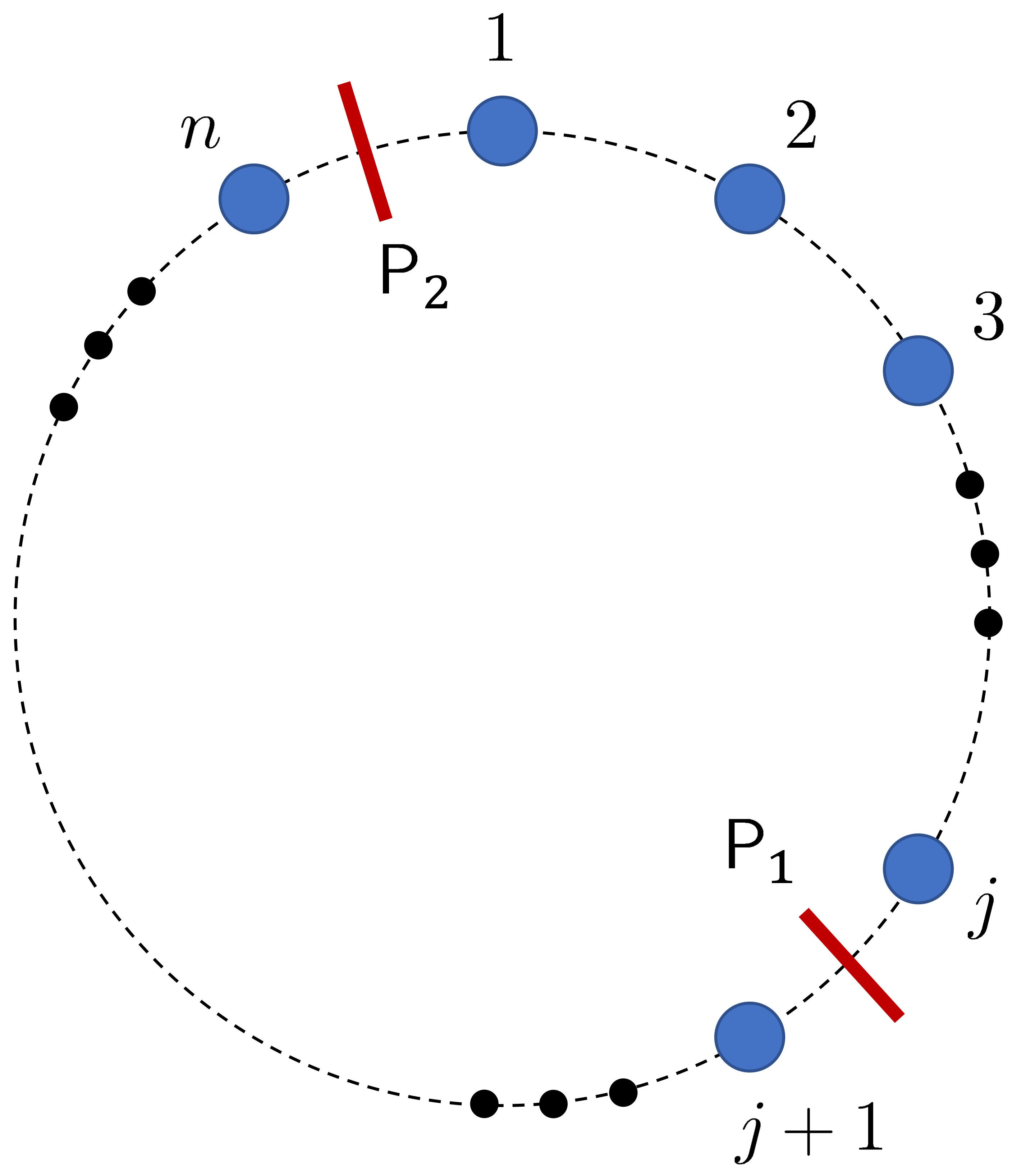}
  \caption{Schematic of the bipartition $(s_1,\dots,s_j|s_{j+1},\dots,s_n)$ in neighboring subsystems for $n$ qubits arranged on the circumference of a circle.
  The placements of separators $\pp_1$ and $\pp_2$ suffice to specify any given bipartition of the system.}
  \label{fig:proofT2}
\end{figure}
\begin{proof}
It is possible to give an intuitive pictorial proof of this theorem.
Suppose that, as shown in Fig.~\ref{fig:proofT2}, the $n$ subsystems are arranged on a circle so that subsystems $s_1$ and $s_n$ are neighbours.
A given bipartition $(s_1,\dots,s_j|s_{j+1}\dots,s_n)$ can be identified with the placements of two separators $\pp_1$ and $\pp_2$. 
Now, suppose that $\pp_1$ is placed between two subsystems $s_j$ and $s_{j+1}$ and suppose that, conditioned on local measurements on the remaining subsystems, we find that  $s_j$ and $s_{j+1}$ are entangled.
It is immediate, using Lemma~\ref{lem:condEnt}, that due to the locality of measurements the placement of the separator $\pp_2$ is irrelevant, so that a nonzero LE between $s_j$ and $s_{j+1}$ implies entanglement within all bipartitions that can be obtained by displacing separator $\pp_2$.

Finally, since the placements of $\pp_1$ and $\pp_2$ are sufficient to specify all bipartitions, we conclude that the only relevant degree of freedom is the placement of $\pp_1$.
Finally, the latter can be done in $n-1$ ways, hence the result.
\end{proof}

Lastly, let us note that in Theorems~\ref{th:measNo} and~\ref{th:measNoNeighbour} we have focused on local measurements.
However, it is possible to derive variants of them in which joint entangled measurements on specific subsystems are allowed.
Due to the fact that joint measurements generally increase the localization power, it is expected to obtain more powerful but rather complicated entanglement criteria from such considerations.

\subsection{Conditional entanglement witnessing} \label{subsec:condentwitness}

Given an $n$-partite system, there are many ways to verify the entanglement localized between two subsystems by measuring
the rest, and then using either Theorem~\ref{th:measNo} or~\ref{th:measNoNeighbour} to show GME.
Additionally, we require the detection of each nonzero LE to be efficient for each bipartition. 
A particular approach to do the latter is witnessing~\cite{Amaro2018,Amaro2020}.
The collection of all quantum states with unlocalizable entanglement with respect to a specific conditional bipartition $[s_x|s_y]$ form a closed convex set denoted by $\sep_{[s_x|s_y]}$. 
It is worth emphasizing that the states in $\sep_{[s_x|s_y]}$ are not necessarily biseparable states, rather they are states that reduce to separable bipartite states upon any local measurements on $\{s_x,s_y\}\us{c}$.
It thus follows that $\sep_{[s_x|s_y]}\supset \sep_{\ind_1|\ind_2}$ for any $\ind_1\ni s_x$ and any $\ind_2\ni s_y$.
For every $n$-partite state $\varrho\notin\sep_{[s_x|s_y]}$, that is a state with LE with respect to the conditional bipartition $[s_x|s_y]$, there exists a bounded Hermitian operator $W_{[s_x|s_y]}$ of the form
\begin{equation}\label{eq:condwitness}
    W_{[s_x|s_y]}=W_{s_x|s_y}\bigotimes_{z\in \{s_x,s_y\}\us{c}} M_{s_z:i},
\end{equation}
with $M_{s_z:i}$ being the effect corresponding to the outcome $i$ of some specific
measurement on the subsystem $s_z$, such that
\begin{equation}
\begin{split}
    \forall \sigma\in\sep_{[s_x|s_y]}\quad &\Tr \left(W_{[s_x|s_y]}\sigma\right) \geq 0,\\
    & \Tr \left(W_{[s_x|s_y]}\varrho\right) <0.
\end{split}
\end{equation}
We call the operator $W_{[s_x|s_y]}$ a \textit{conditional entanglement witness}, the existence of which is guaranteed by (i) the assumption that $\varrho$ contains LE between $s_x$ and $s_y$, namely $\varrho\notin\sep_{[s_x|s_y]}$, and (ii) as described in Sec.~\ref{subsec:witness}, by the Hahn-Banach separation theorem, there exists a witness $W_{s_x|s_y}$ for the entanglement concentrated between these two subsystems.
Furthermore, for each pair $s_x$ and $s_y$, the bipartite witness $W_{s_x|s_y}$ can be constructed from a test operator $L_{s_x|s_y}$ using the approach delineated in Sec.~\ref{subsec:witness}. 
It is worth emphasizing that, determining the separability bound for the test operator $L_{s_x|s_y}$ to be used in the construction of $W_{s_x|s_y}$ in Eq.~\eqref{eq:condwitness} requires an optimization over biseparable states of systems $s_x$ and $s_y$.
Hence, the complexity of determining this bound is the same as generic entanglement witnessing, i.e.~it is efficiently decidable~\cite{Eisert2004,Doherty2005}.
Combining conditional entanglement witnessing with either of Theorems~\ref{th:measNo} and~\ref{th:measNoNeighbour} thus gives an efficient technique for verification of GME that maintains the practicality of the entire process with additional robustness.
We call this approach \textit{conditional GME witnessing}.

In the rest
of this paper, we apply the conditional GME witnessing to QEC stabilizer measurement circuits to characterize their performance in terms of their power for the creation of 
GME states in the presence of errors, different sources of noise and inefficiencies.
We show that with a linear (in the number of qubits) increase in the complexity of the approach, we obtain significant noise robustness in our technique compared to the standard single witness entanglement tests~\cite{Toth2005}. 
Our claim can be pictorially be understood as shown in Fig.~\ref{fig:witnesscompare}.
Any standard GME witness can only detect GME states that do not belong to the closed convex hull of sets of separable states with respect to each bipartition.
There, however, exist GME states that belong to this convex hull.
Using conditional GME witnessing, in general, we use a linear number of witnesses to detect such GME states and obtain a higher resolution in the detection of GME.

\section{Noisy trapped-ion circuits} \label{sec:noisy_trappedions}

Starting with the pioneering proposal of Cirac and Zoller for universal quantum computation 
using the internal
states of trapped ions as qubits, and the vibrational modes as a quantum bus to mediate entangling gate operations~\cite{PhysRevLett.74.4091, SchmidtKaler2003}, systems of trapped atomic ions in radio-frequency potentials are nowadays considered to be among the most promising quantum information processors~\cite{Ladd2010, doi:10.1063/1.5088164}. 
Over the years, various quantum protocols have been realized in different trapped-ion platforms~\cite{HAFFNER2008155, doi:10.1063/1.5088164} including 
small-scale QEC algorithms~\cite{Chiaverini2004,Schindler1059},
a topologically encoded qubit based on the 7-qubit color code~\cite{Nigg302}, fault-tolerant error detection~\cite{Linkee1701074}, deterministic correction of qubit loss~\cite{Stricker2020}, and the first entangling gate at the level of logical qubits~\cite{erhard2020entangling}.
One of the current quests in trapped-ion quantum computing is to scale up these prototype processors towards larger-scale systems capable of taking full advantage of QEC routines of FT~\cite{PhysRevX.7.041061,PhysRevA.99.022330,PhysRevA.100.062307, murali2020architecting,PhysRevA.100.032325,Trout_2018}.

\subsection{Compilation into native ion-trap gates and operations}\label{sec:into_nativegates}

Whilst a large-scale FT implementation of QEC for universal quantum computations is beyond the reach of near-term devices, small-scale FT-QEC protocols can already  be run on some of the current trapped-ion platforms. 
Even though they still consist of a reduced number of qubits where some decoherence is unavoidable, they deliver long coherence times~\cite{Wang2017,PhysRevLett.113.220501} and high-fidelity single and two-qubit gates~\cite{PhysRevLett.117.060504,PhysRevLett.119.150503,doi:10.1116/1.5126186, Ballance2015} 
that are essential to test the performance of small distance FT-QEC codes or its building blocks.
We thus focus on this platform to demonstrate the robustness and efficiency of our entanglement characterization method.

The  native trapped-ion entangling gates in current designs are not CNOT gates~\cite{nielsen00}. 
Rather, they are based on state-dependent dipole forces and effective spin-spin interactions~\cite{PhysRevLett.82.1835,PhysRevA.62.022311,Sackett2000,Leibfried2003}. 
In this work, we shall be concerned with the so-called $Z$-state-dependent forces, which give rise to the entangling $ZZ$-gates~\cite{Leibfried2003,PhysRevLett.119.150503,doi:10.1116/1.5126186}, namely
\begin{equation}
\label{eq:zz}
    U^{\rm ZZ}_{\rm{ij}}(\theta)=e^{-\ii\frac{\theta}{2}Z_{i}Z_{j}},
\end{equation}
in which $ Z_i$ and $Z_j$ are the Pauli matrices of the corresponding $i$th and $j$th qubits involved in the gate, respectively, and $\theta$ is the corresponding pulse area.
In addition to $ZZ$-gates, which are fully entangling for $\theta=\pi/2$, i.e.~ they map product states onto GHZ-type entangled states, we shall also consider single-qubit rotations. In particular, rotations of the form
\begin{equation}
    R^{Z}_{i}(\theta)=e^{-\ii\frac{\theta}{2}Z_i}
\end{equation}
are obtained by local ac-Stark shifts ~\cite{Poschinger_2009}.
The parallel rotations of ion states within the equatorial plane of the Bloch sphere are obtained by a simultaneous
driving of the carrier transition of the ions in the laser focus,
\begin{equation}\label{eq:XYrotation}
    R^{\perp}_{\phi}(\theta)=e^{-\ii\frac{\theta}{2}\sum_i(\cos\phi X_i+\sin\phi Y_i)},
\end{equation}
where $R^{\perp}_{0}(\theta)$ and $ R^{\perp}_{\pi/2}(\theta)$ correspond to rotations around $x$ and $y$ axes of the Bloch sphere, respectively.
For instance, by setting $\phi=0$ and $\theta=\pi$, Eq.~\eqref{eq:XYrotation} represents single-qubit $\pi$ pulses applied to all illuminated ions. 
We note that, although these rotations act on all illuminated ions, the equatorial rotations can be applied to a particular set of ions by using spin-echo-type refocusing pulses that interleave these rotations with the addressable $Z$-type rotations~\cite{Nebendahl2009}.

Although this collection of gates is not the standard one in quantum computation~\cite{nielsen00}, it is a universal gate set so that any quantum algorithm can be decomposed into a sequence of these elementary operations~\cite{Nebendahl2009,Schindler2013}.
In Fig.~\ref{fig:native_gate_scheme}, we show the correspondence between the standard universal gate set and certain sequences of native trapped-ion gates.
In Fig.~\ref{fig:4+1SX_ZZ_circ}, on the other hand, we present the compiled circuit of Fig.~\ref{fig:stabiliser_readout}.~a into the native trapped-ion gate set.
\begin{figure}[t!]
  \includegraphics[width=1\columnwidth]{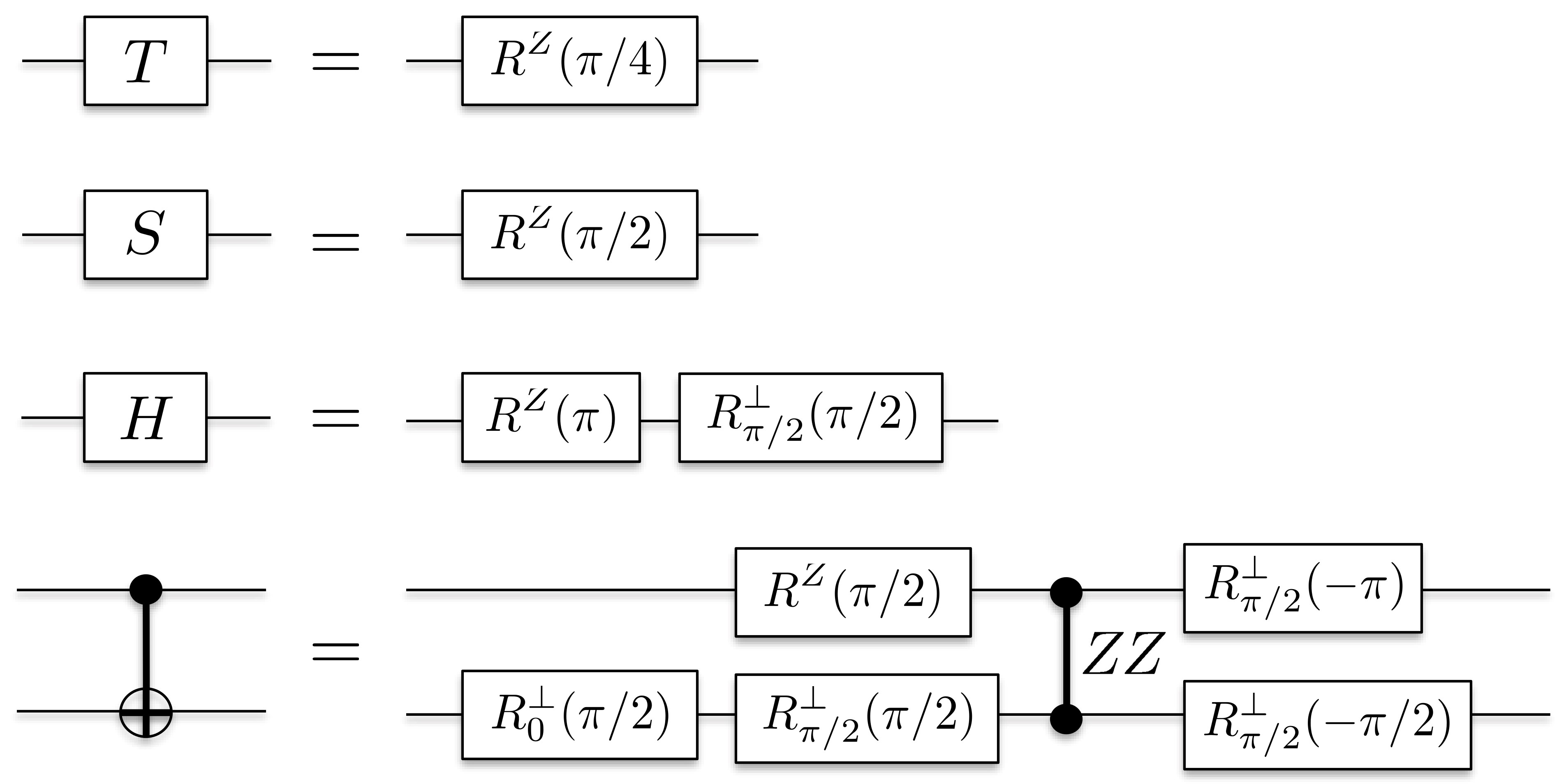}
  \caption{Correspondence between the standard and trapped-ion universal gate sets enabling quantum computation.
  The solid line that joins two filled circles represents the $ZZ$-gate $U^{\rm ZZ}_{ij}(\pi/2)$ in Eq.~\eqref{eq:zz}.}
  \label{fig:native_gate_scheme}
\end{figure}
\begin{figure}[h]
  \includegraphics[width=1\columnwidth]{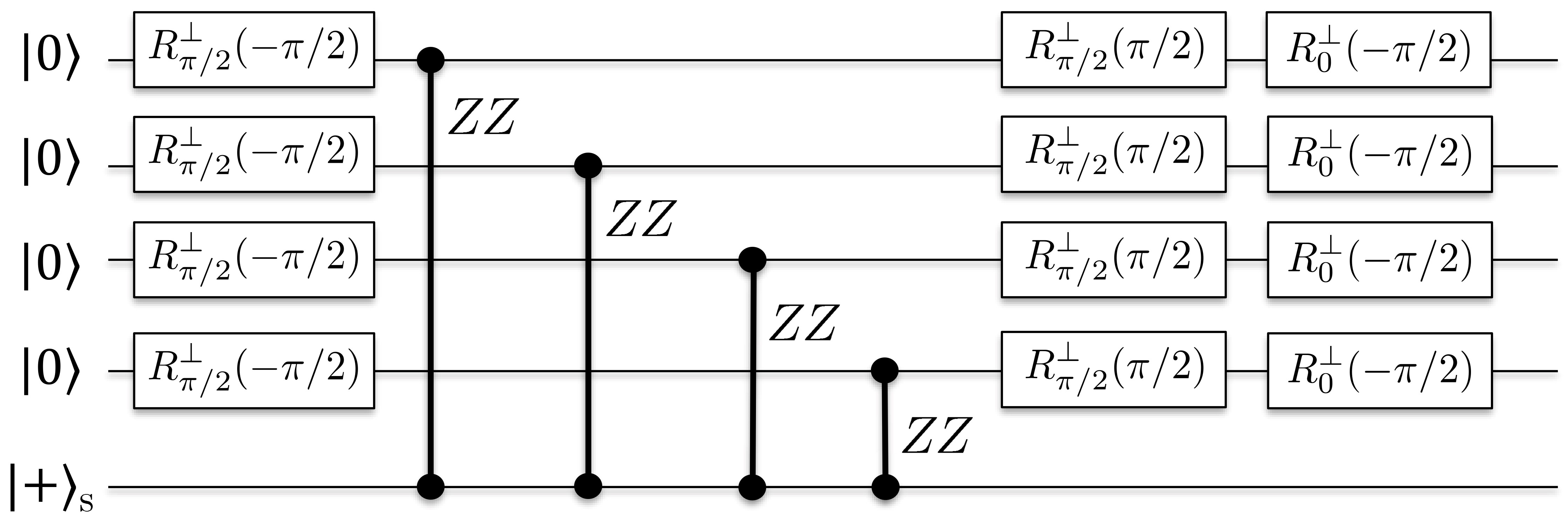}
  \caption{Non-FT circuit using the native trapped-ion gates.}
  \label{fig:4+1SX_ZZ_circ}
  \vspace{0.5cm}
  \includegraphics[width=1\columnwidth]{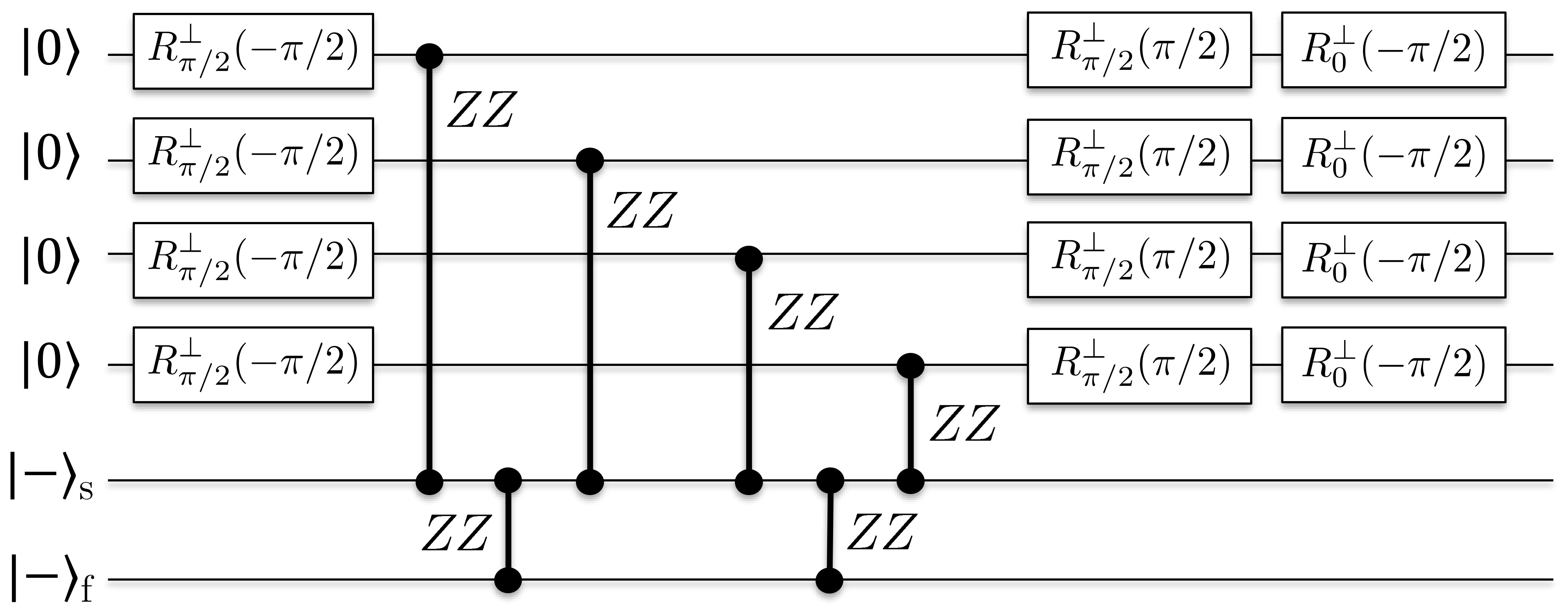}
 \caption{Flag-based FT circuit using the native trapped-ion gates.}
  \label{fig:4+2SX_ZZ_circ}
\end{figure}
The entangling $ZZ$-gates for $\theta=\pi/2$ between the $i$th and $j$th qubit can be written as $U^{\rm ZZ}_{i,j}(\pi/2)=(\mathbb{I}-\ii Z_{i}Z_{j})/\sqrt{2}$.
Similarly one can write the single-qubit rotations for qubit $i$th as $R^{\perp}_{\pi/2,i}(\pm\pi/2) = (\mathbb{I}\mp \ii Y_{{ i}})/\sqrt{2}$ and  $R^{\perp}_{0,i}(-\pi/2 ) = (\mathbb{I}+ \ii X_{i})/\sqrt{2}$.
After some algebra, the output state of the circuit in Fig.~\ref{fig:4+1SX_ZZ_circ} is obtained as
\begin{equation}\label{eq:output_5qubit_ZZ}
    \ket{\psi\s{out}}={\ket{0}\oprod{5} + \ket{1}\oprod{5} \over \sqrt{2}}=\ket{\mathrm{GHZ}\oprod{5}_{+}},
\end{equation}
which, as expected, is identical to the 5-qubit GHZ state in Eq.~\eqref{eq:output_5qubit} generated with the circuit comprising CNOT gates.
Similarly to the non-FT scenario with CNOT gates, we may analyze the propagation of a dangerous single $X$-error from the syndrome qubit to the data qubits through the $ZZ$-gates, the details of which are given in Appendix~\ref{app:ZZerror_propagate}.
It follows that, in this case too, the syndrome readout will not reveal any information about the propagation of the error. 
This leads us to the flag-based circuit shown in Fig.~\ref{fig:4+2SX_ZZ_circ}, which is the compiled version of Fig.~\ref{fig:stabiliser_readout}.~b into the native trapped-ion gate set. It is straightforward to calculate the output of this circuit as
\begin{equation}
    \ket{\psi\s{out}}={(\ket{0}\oprod{5} + \ket{1}\oprod{5})\ket{+}\s{f} \over \sqrt{2}}=\ket{\mathrm{GHZ}\oprod{5}_{+}}\ket{+}\s{f}.
    \label{eq:output_stabilizers_ZZ}
\end{equation}
One realizes that the sequential application of two $ZZ$-gates between the syndrome qubit $\rm s$ and the flag qubit $\rm f$, $U^{\rm ZZ}\s{sf}(\pi/2)U^{\rm ZZ}\s{sf}(\pi/2)\ket{-}\s{s}\ket{-}\s{f}$, leaves the two qubits in the product state $\ket{+}\s{s}\ket{+}\s{f}$. 
A dangerous Pauli $X$-error in the syndrome that produces a cascade of errors in the data, however, will flip the flag back into $\ket{-}\s{f}$.
Hence, a readout of the qubit~$\rm f$ ``flags'' the propagation of multiple errors that would not be possible in a non-FT scheme.
\begin{figure}[h]
  \includegraphics[width=1\columnwidth]{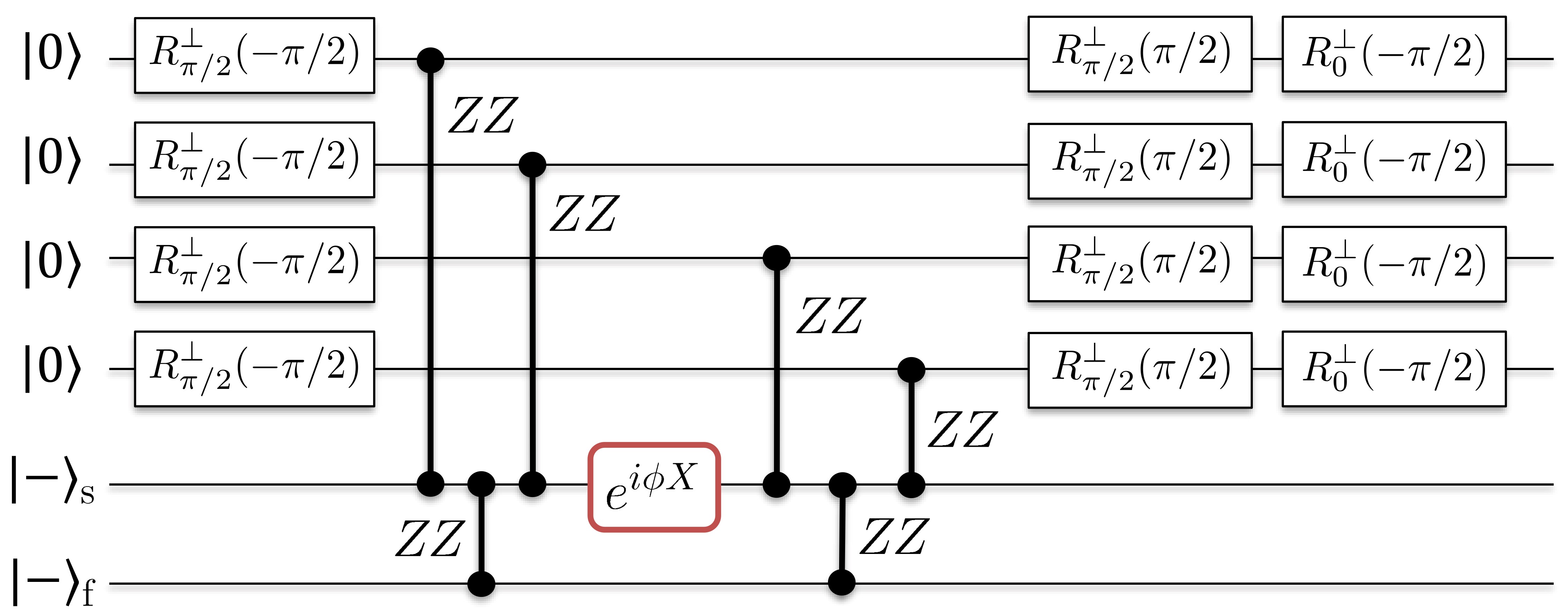}
  \caption{Flag-based FT stabilizer measurement circuit using the native trapped-ion gates. Here, a coherent-rotation error $e^{i\phi X}$ is injected in a specific position to create a 6-qubit entangled state in the flag-based FT stabilizer measurement.} 
  \label{fig:4+2SX_ZZ_circwit}
\end{figure}
In parallel with Fig.~\ref{fig:ideal_circuits}.~d, whenever the syndrome qubit suffers from a coherent-rotation error $\exp{(i\phi X)}$ between the second and third syndrome-data $ZZ$-gates, as shown in Fig.~\ref{fig:4+2SX_ZZ_circwit}, the ideal output state can be easily computed as
\begin{equation}\label{eq:Native_FT_6qubit_CohRot}
\begin{split}
    \ket{\psi\s{out}} = & \cos{\phi}\ket{+}\s{f}\ket{\mathrm{GHZ}\oprod{5}_+} +  \ii\sin{\phi} \ket{-}\s{f} Y\s{s}X_{3}X_{4} \ket{\mathrm{GHZ}\oprod{5}_+}.
\end{split}
\end{equation}
A projective measurement of the flag again unveils a cascaded error in the syndrome and data qubits. A similar procedure can be implemented using the $g_z$ generator as presented in Appendix~\ref{app:gz_ideal_ZZ}.
 
\subsection{Noise models}\label{subsec:noisemodels}

Recall from Sec.~\ref{subsec:witness} that, in general, the output of an ideal plaquette is a GME state.
Hence, to assess the robustness of the conditional GME witnessing method in a trapped-ion platform, we implement different noise models into the circuit, and compare the performance of conditional GME witnessing against the standard single witness (i.e. fidelity measurement) approach in detecting the potential GME output state.

In this regard, we use two different circuit-noise models and one measurement error scheme as leading errors in the stabilizer circuits for each of non-FT and flag-based FT scenarios.
As the first circuit noise model we use a simplified and phenomenological one in which the preparation of the input state and all the gates are assumed to be perfect while the decoherence
of qubits is simulated by an independent single-qubit depolarizing channel~\cite{nielsen00} applied to each qubit right before their measurements.
Our second noise model is motivated by the realistic experimental details discussed in Ref.~\cite{doi:10.1116/1.5126186} wherein shuttling techniques are employed to carry out the two-qubit gates.
Thus, we consider the case wherein the preparation of the input state is perfect while the two-qubit gates in the circuit suffer from independent depolarizing noise.

The measurement errors, modelled as independent classical bit-flip errors in the measured bits, are present in both scenarios and for all qubits.
We assume that dephasing effects in idle qubits are depreciated compared to the two-qubit errors in entangling gates.
This is a realistic assumption given the recent experiments on the use of stable magnetic fields which lead to a negligible dephasing of the qubits during the idle-time intervals~\cite{Ruster2016}. 
Moreover, in the considered experimental layout~\cite{doi:10.1116/1.5126186}, single-qubit gates also have a negligible error in comparison to the two-qubit and measurement errors, and can be thus neglected.
We do assume here that the error rates per two-qubit gate increase with the depth of the circuit as a consequence of the shuttling-based approach used in quantum charge-coupled devices~\cite{Kielpinski2002,PhysRevLett.109.080501,PhysRevA.90.033410}.

\paragraph{\it{Independent depolarizing noise.}---}\label{para:pheno_noise}

This error channel is used only for the phenomenological model. 
It consists of independent depolarizing channels of the form
\begin{equation}\label{eq:noisedepo1}
\begin{split}
   \epsilon\us{dip}_{i}(\varrho)=(1-p)\varrho+ \frac{p}{3}\sum_{\sigma\in\{X,Y,Z\}}\sigma_{i}\varrho\sigma_{i},
\end{split}
\end{equation}
acting on each of the qubits at the end of the circuit sequence and just before the measurements.
Here, $p$ is probability for a depolarizing error to occur on the $i$th qubit and we consider it to be the same for all qubits.

\paragraph{\it{Two-qubit depolarizing noise.}---}\label{para:circuit_noise}
In a trapped-ion platform, ions can be shuttled in and out of the laser interaction zone in which the quantum operations and readouts are carried out~\cite{doi:10.1116/1.5126186}.
This process may excite the ions' vibrational modes that affect the fidelity of the two-qubit gates depending on their order in the shuttling sequence or, equivalently, the time step at which the gate is applied.
This detrimental effect can be modelled in a conservative manner by a two-qubit depolarizing noise after the application of each two-qubit entangling gate $U^{\rm ZZ}_{ij}$.

For each pair of ions involved in a $ZZ$-gate, they may undergo 15 possible one and two-qubit Pauli errors so that the total error channel is described as
\begin{equation}\label{eq:noisedepo2}
\begin{split}
    \epsilon\us{dip}_{i,j}(\varrho)=[1-p(r,t)]\varrho+ \frac{p(r,t)}{15}\sum_{\sigma,\kappa\in\{I,X,Y,Z\}}\sigma_{i}\kappa_{j}\varrho\kappa_{j}\sigma_{i}
\end{split}
\end{equation}
where $i$ and $j$ with $i\neq j$ denote the active ions involved in each two-qubit entangling gate, $p$ is the  error probability, and the sum on the right runs over the 6 nontrivial single-qubit and the 9 nontrivial two-qubit Pauli operators.
We assume a time (or circuit depth) dependent exponential growth in the error probability defined as $p(r,t)=p(1+r)^{t}$,
where $r$ and $t$ are the error-growth rate per gate and the time step at which $ZZ$-entangling gates are applied, respectively.
It is worth to clarify that $t$ is not the actual elapsed physical time in an experimental implementation rather it is an index taking values $t\in\mathbb{N}$ specifying the order of entangling gates in the circuit.
Hence, we assume the time increment between every two entangling gate to be $\Delta t = 1$.
Upon using this model, we also consider three error-growth rates of $r=0$, $r=0.1$, and $r=0.2$, noting that an $r=0.2$ implies that the $4$th and $6$th two-qubit gates are about $2$ and $3$ times worse than the first one, respectively.

\paragraph{\it{Bit-flip measurement errors.}---}\label{para:mes_noise}

Entanglement witnessing procedures with and without conditioning involve Pauli measurements of both data and ancillary qubits, which may also suffer from imperfections.
We use a simple model for measurement errors, namely independent bit-flips.
In a measurement of a single-qubit Pauli operator $\sigma\in\{X,Y,Z\}$ the corresponding error-free positive operator-valued measure (POVM)~\cite{nielsen00} is specified by the set of operators
$\{E^\sigma_+={I + \sigma \over 2},E^\sigma_-={I - \sigma\over 2}\}$.
For each measurement outcome, a bit-flip error taking place with probability $p\s{me}$ thus gives rise to the POVM effects
\begin{equation}
\begin{split}
    e^{\sigma}_+ & = (1-p\s{me}) E^\sigma_+ + p\s{me} \kappa^\sigma E^\sigma_+ \kappa^\sigma \\
    & = (1 - p\s{me}) E^\sigma_+ + p\s{me} E^\sigma_-,
\end{split}
\end{equation}
and
\begin{equation}
\begin{split}
    e^{\sigma}_- & = (1 - p\s{me}) E^\sigma_- + p\s{me} \kappa^\sigma E^\sigma_- \kappa^\sigma\\
    & = (1 - p\s{me}) E^\sigma_- + p\s{me} E^\sigma_+. 
\end{split}
\end{equation}
Here, $\kappa^\sigma\in\{X,Y,Z\}$ is the (measurement dependent) error operator such that $\kappa^\sigma E^\sigma_+ \kappa^\sigma = E^\sigma_-$ and $\kappa^\sigma E^\sigma_- \kappa^\sigma=E^\sigma_+$.
It is also obvious that the effect of measurement errors can instead be described by a quantum channel acting on the individual qubits such that
\begin{equation}\label{eq:noisebitflip}
\begin{split}
    \epsilon^\sigma_{i}(\varrho)=(1-p\s{me})\varrho+p\s{me}\kappa^\sigma\varrho\kappa^\sigma.
\end{split}
\end{equation}

\section{Robustness of conditional GME witnessing}\label{sec:robustnes_GME_trappedions}

We are now ready to examine the robustness and efficiency of our conditional GME witnessing technique compared to the standard GME certification approaches used within the literature.
To this end, we consider noisy non-FT and flag-based FT plaquette measurement circuits and witness the resulting entangled state using imperfect measurements in three different ways:
(i) \textit{standard witnessing} with a GHZ projector (i.e. a single  witness) requiring an exponential number of measurements, that is, by measuring fidelity to the ideal output state,
(ii) \textit{standard--linear witnessing} (SL) with a single witness that relies on only a linear number of measurements using the proposed witnesses of T\'{o}th and G\"{u}hne~\cite{Toth2005}, and
(iii) our proposed \textit{conditional GME witnessing} requiring a linear number of witnesses and measurements.

\subsection{Conditional GME witnessing in the  noisy non-FT plaquette circuit}

Consider the 5-qubit non-FT $g_x$ circuit from Fig.~\ref{fig:4+1SX_ZZ_circ}. 
We implement each noise model of Sec.~\ref{subsec:noisemodels} on the circuit and witness the output entangled state using the three above-mentioned techniques.
Beginning with the standard witnessing, it was shown in Sec.~\ref{subsec:witness} (cf. Eq.~\eqref{eq:stdTestGHZ5}) that the $5$-qubit GHZ projector is a possible test operator for GME,
\begin{equation}\label{eq:stdTestGHZ5-rep}
L\s{GHZ\oprod{5}}=\ket{\mathrm{GHZ}\oprod{5}_{+}}\bra{\mathrm{GHZ}\oprod{5}_{+}}.
\end{equation}
It was also shown in Eq.~\eqref{eq:stdWitnessGHZ5} that the detection bound for this test operator is $1\over 2$ for any bipartition, i.e. given any state $\varrho$, an expectation value satisfying $\Tr \varrho (L\s{GHZ\oprod{5}}) > {1\over 2}$ implies the entanglement of $\varrho$ with respect to all bipartitions and thus its GME.
We have also shown in Eq.~\eqref{eq:GHZ5Stabilizers} that the 5-qubit GHZ projector can be written in terms of its $2^5$ stabilizers. 
Combining the two, we observe that the expectation value of $L\s{GHZ\oprod{5}}$ for any state is given by
\begin{equation}\label{eq:stdWitnessGHZ5_numerical}
   \expec{L\s{GHZ\oprod{5}}} = \Tr (\varrho L\s{GHZ\oprod{5}}) = {1 \over 2^{5}}\sum_{i=1}^{2^{5}}\expec{S_i}.
\end{equation}
It is also worth pointing out that $\expec{L\s{GHZ\oprod{5}}}$ equates to the fidelity of the state $\rho$ with the ideal output state of the 5-qubit non-FT plaquette circuit derived in Eq.~\eqref{eq:output_5qubit}.
For the standard witnessing of GME using the 5-qubit GHZ projector $L\s{GHZ\oprod{5}}$, it is thus sufficient to locally measure the stabilizers at the output of the plaquette circuit the number of which would in general scale exponentially with the size of the register.

The second approach we consider is the SL witnessing of GME using a witness constructed out of the $5$ generators of the $5$-qubit stabilizer subgroup given in Eq.~\eqref{eq:5GHZ_stabilizergroup}.
Two test operators introduced by T\'{o}th and G\"{u}hne~\cite{Toth2005} for this purpose are, after normalization, 
\begin{equation}\label{eq:TG1GHZ5}
    L\s{TG1} = \frac{1}{2}{I+g_5\over2}+{1\over 2}\prod_{i=1}^{4}{I+g_i\over 2}
\end{equation}
and
\begin{equation}\label{eq:TG2GHZ5}
    L\s{TG2} = {1\over 5}\sum_{i=1}^{5}g_i.
\end{equation}
The GME bounds for these witnesses are $l\s{TG1}={3\over 4}$ and $l\s{TG2}={4\over 5}$, respectively~\cite{Toth2005}.
Noting the form of the stabilizer generators, the obvious advantage of these witnesses is that they only require two measurement settings.

Finally, we implement our proposed conditional GME witness and show through this example that it is technically simple and, at the same time, powerful.
Using Theorem~\ref{th:measNo}, we focus merely on the four conditional bipartitions $[\syn|1]$, $[\syn|2]$, $[\syn|3]$, and $[\syn|4]$ and introduce the conditional test operators
\begin{equation}\label{eq:condWitness_5_numeric}
    L_{[\syn|x]}=\ket{\Bell}_{\syn|x}\bra{\Bell} \otimes \ketbra{+}\oprod{3}
\end{equation}
for $x\in\{1,2,3,4\}$.
Here, $\ket{\Bell}_{\syn|x}\bra{\Bell}$ is the projection of the qubit pair $\syn$ and $x$ onto the Bell state $\ket{\Bell}=(\ket{00}+\ket{11})/\sqrt{2}$
and $\ketbra{+}\oprod{3}$ denotes the projector of the remaining three qubits $\{\syn,x\}\us{c}$.
The bounds on the conditional test operators are easy to compute as we are effectively dealing with two-qubit systems.
For the case of interest here, it is known that
\begin{equation}
    l_{\syn|x}=\sup_{\sigma\in\sep_{\syn|x}} \Tr \left(\sigma \ket{\Bell}_{\syn|x}\bra{\Bell}\right) = {1 \over 2},
\end{equation}
where the supremum is taken over the set of all bipartite separable states of qubits $\syn$ and $x$.
Consequently, for any conditional bipartition $[\syn|x]$ with $x\in\{1,2,3,4\}$ and any given $5$-qubit state $\varrho$, an expectation value $\expec{L_{[\syn|x]}}>{1\over 2}$ for all $x\in\{1,2,3,4\}$ implies GME of the quantum state $\varrho$.
\begin{figure*}[ht!]
  \includegraphics[width=1.\textwidth]{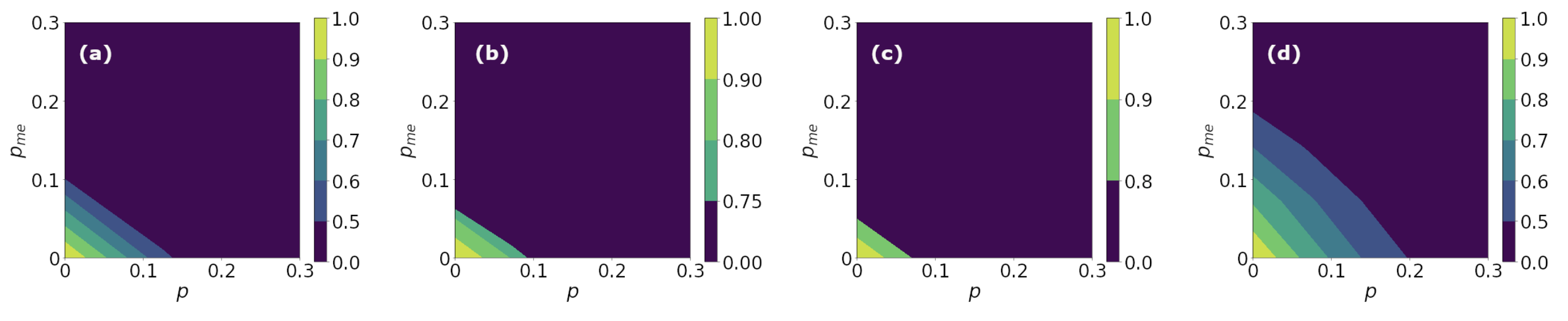}
  \caption{GME witnessing plots in a 5-qubit non-FT plaquette readout circuit under the phenomenological depolarizing noise model.
  Each contour plot represents, for different values of depolarizing-noise probability $p$ and measurement-error probability $p\s{me}$, the expectation values of (a) the standard GME witness $L\s{GHZ\oprod{5}}$ of Eq.~\eqref{eq:stdWitnessGHZ5_numerical}, (b) the first SL
  witness, $L\s{TG1}$ in Eq.~\eqref{eq:TG1GHZ5}, (c) the second SL
  witness $L\s{TG2}$ in Eq.~\eqref{eq:TG2GHZ5}, and (d) the worst-case conditional witnessing using $L_{[s|x]}$ in Eq.~\eqref{eq:condWitness_5_numeric} for $x=1$. 
  The colormap sidebars represent the witness bounds for each method.
  In all cases, the dark blue region identifies the noise values for which the GME tests are inconclusive.}
 \label{fig:contour_depoattheend}
\end{figure*}

Importantly for us, the Bell state is the one-dimensional code space of the stabilizer generators $\{XX,ZZ\}$.
Similar to Eq.~\eqref{eq:GHZ5Stabilizers} we thus have
\begin{equation}
    \ket{\Bell}_{\syn|x}\bra{\Bell}={I\s{s}I_x+X\s{s}X_x-Y\s{s}Y_x+Z\s{s}Z_x \over 4},
\end{equation}
implying that conditional GME witnessing can be performed solely by local Pauli measurements on the qubits.
Furthermore, for the implementation of conditional GME witnessing using $L_{[s|x]}$ of Eq.~\eqref{eq:condWitness_5_numeric} we need $3$ settings per bipartition, hence a total of $12$ measurement settings.
It is also noteworthy that, while we only consider the projection onto $\ketbra{+}\oprod{3}$, we can consider other outcome combinations of the $X$ measurements on these three qubits that give rise to either the same Bell state or the one orthogonal to it, namely $(\ket{00}_{\syn|x}-\ket{11}_{\syn|x})/\sqrt{2}$, that can be witnessed seamlessly with the same measurement settings.
In experimental implementations of the conditional GME witnessing, in general, more than one of the conditioning events can thus be used to keep the overall witnessing procedure efficient costing at most a linear overhead in the number of settings.

We now turn to evaluating the robustness of our technique versus other witnessing approaches in the literature.\\
\begin{figure}[h!]
  \includegraphics[width=0.8\columnwidth]{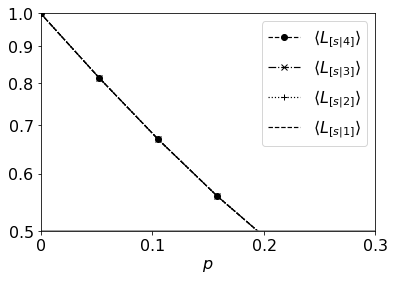}
  \includegraphics[width=0.8\columnwidth]{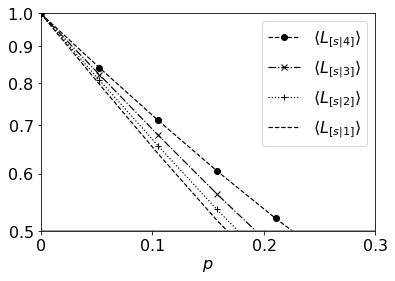}
  \caption{Expectation values of the conditional test operator of Eq.~\eqref{eq:condWitness_5_numeric} for four conditional bipartitions when the phenomenological depolarizing noise model (top panel) or the two-qubit depolarizing noise model (bottom panel) is applied to the non-FT readout of the plaquette.
  The $y$-axes are represented in log-scale whilst the $x$-axes represent the different error probabilities of the two depolarizing error models. 
  In both cases the measurement error probability $p\s{me}$ is assumed to be zero.
  In the bottom panel we have assumed an error-growth rate per gate of $r=0.2$.
  For the phenomenological depolarizing noise, due to the full symmetry of the error channel and the test operator with respect to different conditional bipartitions, the choice of the conditional bipartition is irrelevant.
  For the two-qubit depolarizing noise, on the other hand, the lowest performance belongs to the qubit pair that were involved in the first two-qubit gate, i.e. $\rm [s|1]$ (see the main text).}
  \label{fig:thresholds_5q}
\end{figure}
\subsubsection{Phenomenological depolarizing noise model}\label{sec:pheno_5q}
The contour plots of Fig.~\ref{fig:contour_depoattheend} present our results for the four witnessing techniques applied to the 5-qubit non-FT plaquette readout circuit (see Fig.~\ref{fig:4+1SX_ZZ_circ}) subjected to the noise channel of Eq.~\eqref{eq:noisedepo1} and the measurement noise of Eq.~\eqref{eq:noisebitflip}.
From left to right, the four subfigures represent the results obtained for (a) the standard witness of Eq.~\eqref{eq:stdTestGHZ5-rep},
(b) the SL witness of Eq.~\eqref{eq:TG1GHZ5}, (c) the SL witness of Eq.~\eqref{eq:TG2GHZ5}, and (d) our
conditional witness of Eq.~\eqref{eq:condWitness_5_numeric} for $x=1$.
The dark blue shaded areas in all plots represent states in which
their GME, if any, cannot be detected by the corresponding witness.
One can thus clearly appreciate the significant increase in the noise tolerance of conditional GME witnessing compared to the other witnessing methods that can be found in previous literature. 

In particular, for $p\s{me}=0$, the highest depolarizing error probability $p$ that can be afforded before losing the GME witnessing capabilities for the standard method (Fig.~\ref{fig:contour_depoattheend}.~a) is $p\approx 0.13$. 
For the first (Fig.~\ref{fig:contour_depoattheend}.~b) and the second (Fig.~\ref{fig:contour_depoattheend}.~c) SL witness
we find $p\approx 0.09$ and $p\approx 0.07$, respectively.
These values demonstrate the typical trade-off between the number of measurements and the robustness of witnesses.
For our conditional GME witness (Fig.~\ref{fig:contour_depoattheend}.~d), however, the error tolerance goes up to $p\approx 0.2$.
We thus observe the counterintuitive fact that a linear number of measurements does not necessary imply a loss of robustness in GME detection. 
Similarly, by fixing $p=0$ for the depolarizing noise, we notice that the measurement probability error $p\s{me}$ also reaches its highest threshold value for our conditional witnessing with $p\s{me} \approx 0.2$ compared to $p\s{me}\approx 0.1$ for the standard method.
We see the lowest robustness for the first and the second SL witness
with $p\s{me}\approx 0.07$ and $p\s{me}\approx 0.05$, respectively.

Recall that, in the conditional GME witnessing method for the 5-qubit circuit, according to Theorem~\ref{th:measNo}, it is needed to test entanglement in a total of four bipartitions.
In the top panel of Fig.~\ref{fig:thresholds_5q}, we present 
the behavior of the expectation value $\expec{L_{[\syn|x]}}$ with respect to the depolarizing error probability $p$ for different $x$s at the fixed measurement error probability $p\s{me}=0$.
It is evident that for this noise model the four bipartitions behave identically, as expected from the symmetries of the ideal state, the noise model exploited, and the conditional GME test used.
Therefore, in this case, all the bipartitions would exhibit the same threshold values as in Fig.~\ref{fig:contour_depoattheend}.~d. 

\subsubsection{Two-qubit depolarizing noise model}\label{sec:circ_5q}
\begin{figure*}[ht!]
  \includegraphics[width=1\textwidth]{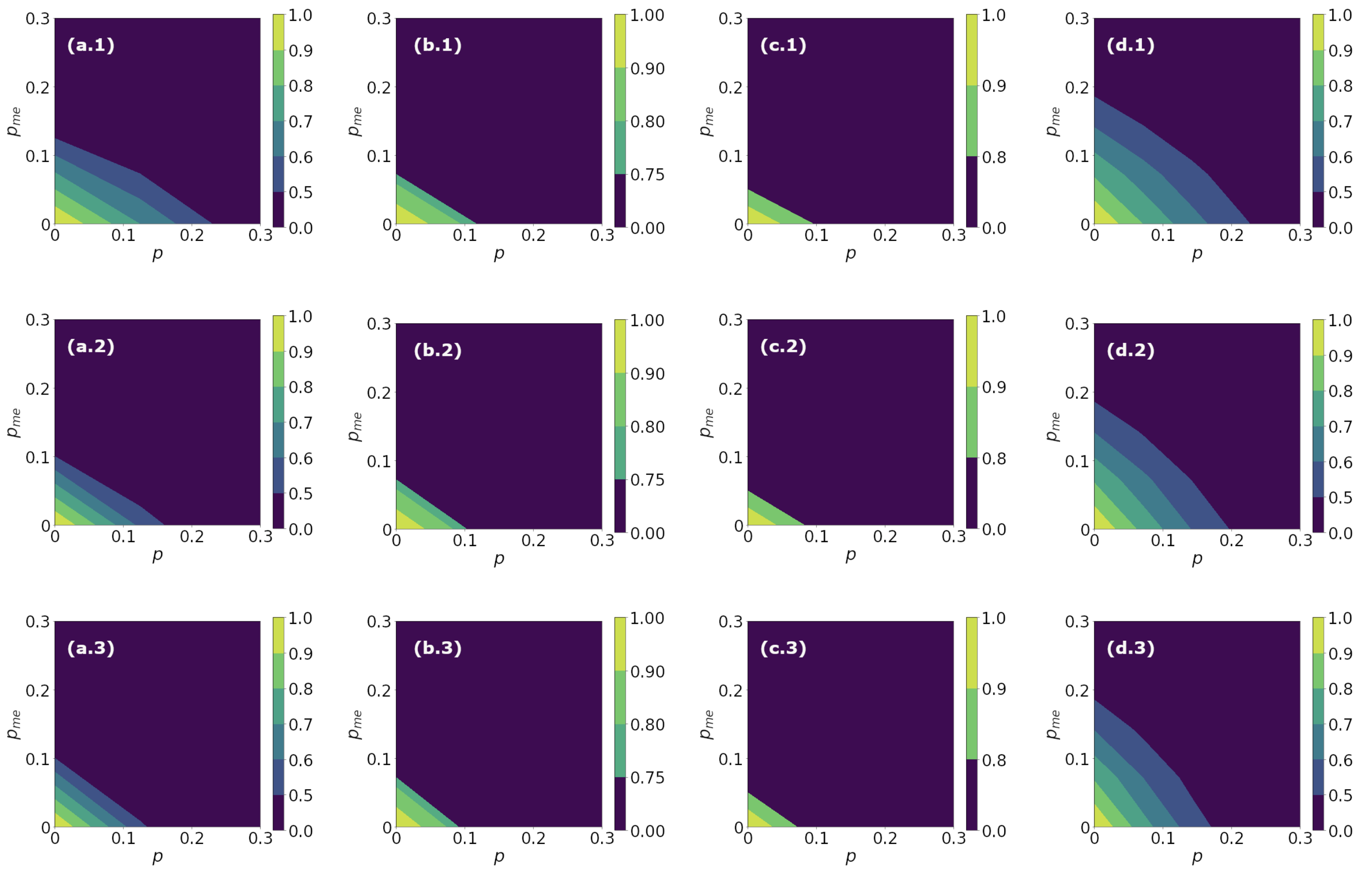}
  \caption{GME witnessing plots
  in a 5-qubit non-FT plaquette readout circuit subject to the two-qubit depolarizing noise model.
  In each row $i$ ($i=1,2,3$) the contour plots represent, for different values of depolarizing-noise probability $p$ and measurement-error probability $p\s{me}$, the expectation values of (a.$i$) the standard GME witness $L\s{GHZ\oprod{5}}$ in Eq.~\eqref{eq:stdWitnessGHZ5_numerical}, (b.$i$) the first SL witness $L\s{TG1}$ in Eq.~\eqref{eq:TG1GHZ5}, (c.$i$) the second SL witness $L\s{TG2}$ in Eq.~\eqref{eq:TG2GHZ5}, and (d.$i$) the worst-case conditional witnessing $L_{[s|x]}$ of Eq.~\eqref{eq:condWitness_5_numeric} for $x=1$. 
  In each row $i=1,2$, and $3$, the error-growth rate per gate values are $r=0,0.1$, and $0.2$, respectively.
  The colormap sidebars represent the witness bounds for each method.
  In all cases, the dark blue region identifies the noise values for which the GME tests are inconclusive.}
  \label{fig:contour_depogates}
\end{figure*}
As mentioned earlier, the depolarizing noise model of Sec.~\ref{subsec:noisemodels} is an oversimplification.
Hence, in this section, we assume a 5-qubit non-FT readout circuit for the plaquette that suffers from the more realistic circuit noise of Eq.~\eqref{eq:noisedepo2}.
As discussed in Sec.~\ref{subsec:noisemodels}, this error model introduces a two-qubit depolarizing channel after each entangling gate. 
Recall that, in this model, the two-qubit gate error probability $p(r, t)$ grows exponentially with the gate time-step $t$ at which $ZZ$-entangling gates are applied 
at the rate $r$. 
Hence, the model takes into account the fact that the fidelity of gates can be affected by the depth of the circuit due to
the accumulated decoherence and noise effects.
As before, we have also included the measurement noise effects as per Eq.~\eqref{eq:noisebitflip}.

Figure~\ref{fig:contour_depogates} represents the resulting contour plots of the expectation values of (a) the standard witness of Eq.~\eqref{eq:stdTestGHZ5-rep},
(b) the SL witness of Eq.~\eqref{eq:TG1GHZ5}, (c) the SL witness of Eq.~\eqref{eq:TG2GHZ5}, and (d) our conditional witness of Eq.~\eqref{eq:condWitness_5_numeric} for $x=1$.
Our choice of the conditional bipartition in Fig.~\ref{fig:contour_depogates}.~d is due to the fact that, here, in contrast to the phenomenological noise, 
the expectation values $\expec{L_{[\syn|x]}}$ are not equal for $x=1,2,3,4$, as shown in the lower panel of Fig.~\ref{fig:thresholds_5q}. 
In this case, the conditional bipartition $[\syn|1]$ is the one in which the conditional GME witnessing performs the worst, providing thus the most stringent conditions for the detection of GME.
Notably, the latter counter-intuitive observation suggests that the depolarizing noise processes before a two-qubit gate are more detrimental than those taking place after it to the correlation created by that gate between the target qubit and all the previously-entangled qubits.
This could be due to the loss of the input coherence that is necessary for correlating the qubits~\cite{PhysRevLett.115.020403}.
As such, qubit $3$, for instance, becomes less correlated with qubits $(\syn,1,2)$ than qubit $2$ with qubits $(\syn,1)$, meaning that conditioning on qubit $3$ is less informative than conditioning on qubit $2$.
Hence, the conditional bipartition $[\syn|3]$ conditioned on the more informative qubits $1,2$ and the less informative qubit $4$ performs better compared to the conditional bipartition $[\syn|2]$ conditioned on the more informative qubits $1$ and the less informative qubits $3,4$.

In Fig.~\ref{fig:contour_depogates} the dark blue shaded area again indicates the states for which each of the witnessing techniques is inconclusive.
We observe that with increasing error-growth rate $r$ the threshold for $p$ decreases.
However, even in the worst-case scenario, our conditional GME witnessing represented in Fig.~\ref{fig:contour_depogates}.~d shows significant robustness compared to the previously-studied entanglement witnesses for each $r$.

It is also interesting to compare the effect of the two different noise models shown in Figs.~\ref{fig:contour_depoattheend} and~\ref{fig:contour_depogates}.
We notice a higher threshold of detectable GME states when the two-qubit depolarizing noise model is applied.
This means that the phenomenological model may underestimate the performance of entanglement witnessing methods whilst the more realistic two-qubit depolarizing noise model predicts a higher robustness of our conditional GME detection.
Nevertheless, it is evident from our analysis here that for both noise models the  GME witnessing by conditioning method introduced in this text 
is not only efficient in the number of qubits, but also robust against noise.\\

\subsection{Conditional GME witnessing in noisy flag-based FT plaquette circuit}\label{subsec:GMEwitness_flag}

We now consider the flag-based FT readout circuit for a single plaquette and implement each noise model of Sec.~\ref{subsec:noisemodels} on the circuit evaluating the different GME witnesses on the output entangled state. 
We use again the three mentioned techniques: (i) standard witnessing with an ideal entangled logical state projector onto the ideal target state~\eqref{eq:Native_FT_6qubit_CohRot} requiring an exponential number of measurements,
(ii) efficient witnessing using the second SL witness method proposed in~\cite{Toth2005} requiring in this case a linear number of measurements, and (iii) conditional GME witnessing requiring a linear number of witnesses and measurements.

As discussed above, the 6-qubit FT flag-based circuit presented in  Fig~\ref{fig:4+2SX_ZZ_circwit} is 
the trapped-ion version of the circuit presented in  Fig.~\ref{fig:ideal_circuits}.~d compiled to the particular set of trapped-ion native gates (see Fig.~\ref{fig:native_gate_scheme}). 
The role of the flag qubit in FT-QEC circuits is to detect cascades of errors from the syndrome qubit to the data qubits after measuring it and extracting the syndrome.
In the majority of cases the flag remains disentangled from the syndrome and therefore from the rest of data qubits. 
However, under some circumstances, such as coherent-rotation syndrome errors of the form $\exp{(i\phi X)}$, the entire plaquette will end up in an entangled state of Eq.~\eqref{eq:Native_FT_6qubit_CohRot}. 

The ideal 6-qubit GME output state of Eq.~\eqref{eq:Native_FT_6qubit_CohRot} can now be used to evaluate the experimental quality of the plaquette measurement circuit. Beginning with the standard witnessing, to show that this state is GME, we have to check the entanglement in all 31 possible bipartitions. Let us focus on one of them, say $({\rm f}|{\rm s},1,2,3,4)$, where we have used the symbol $\rm f$ for the flag qubit and $\rm s$ for the the syndrome qubit. 
It is easy to see that, for instance, setting $\phi={\pi \over 4}$ the output state can be read off as the following bipartite state
\begin{equation}\label{eq:6qCohError}
        \ket{\psi\s{out}}={{\ket{\mathrm{ GHZ}^{\otimes 5}_{+}}\ket{+}\s{f}}+\ket{\widetilde{\mathrm{GHZ}}^{\otimes 5}_{+}}\ket{-}\s{f} \over \sqrt{2}},
\end{equation}
where we denote $\ket{\widetilde{\mathrm{GHZ}}^{\otimes 5}_{+}}=(\ket{00111}-\ket{11000}) /\sqrt{2}$ with $\braket{\mathrm{ GHZ}^{\otimes 5}_{+}}{\widetilde{\mathrm{GHZ}}^{\otimes 5}_{+}}=0$. Consider now the test operator
\begin{equation}\label{eq:stdTestGHZ6}
L=\ket{\psi\s{\rm out}}\bra{\psi\s{\rm out}}.
\end{equation}
For the separability bound with respect to this bipartition, we are effectively dealing with a two-qubit system again which allows us to obtain 
\begin{equation}\label{eq:gsupGHZ6}
    l\s{f|s,1,2,3,4}=\sup_{\sigma\in\sep\s{f|s,1,2,3,4}} \tr({L\sigma}) = {1\over 2}.
\end{equation}
We note, however, that with respect to different bipartitions we may get different bounds.
For each of the 31 bipartitions we have evaluated the bounds and exhaustively listed them in Table.~\ref{app:tab:sepvalues} of Appendix~\ref{app:bounds}.
In each case, we either get a bound of ${1 \over 2}$ or ${1 \over 4}$. 
\begin{figure*}[t!]
  \includegraphics[width=1\textwidth]{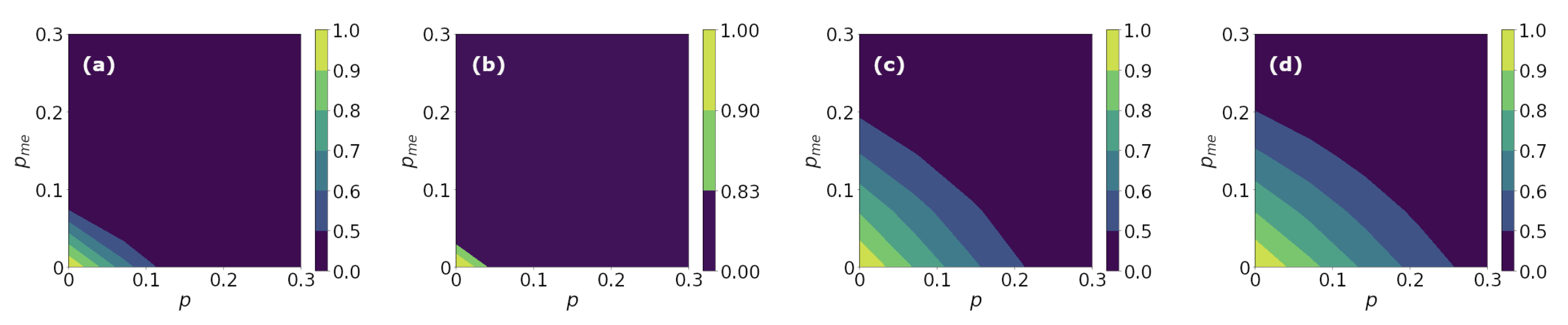}
  \caption{GME witnessing plots
  in a 6-qubit flag-based FT plaquette readout circuit under the phenomenological depolarizing noise model of Sec.~\ref{sec:pheno_6q}.
  The coherent-noise angles is set to $\phi=\pi/4$.
  Each contour plot represents, for different values of depolarizing-noise probability $p$ and measurement-error probability $p\s{me}$, the expectation values of (a) the standard GME witness $L$ of Eq.~\eqref{eq:stdWitnessGHZ6_numerical}, (b) the second SL witness, $L\s{TG2}$ in Eq.~\eqref{eq:TG2GHZ6}, (c) the worst-case conditional witnessing using $L_{[{\rm f}|x]}^{1}$ for $x={\rm s}$. from Eq.~\eqref{eq:condWitness_6_numeric_1}, and (d) the worst-case conditional witnessing using $L_{[{\rm f}|x]}^{2}$ in Eq.~\eqref{eq:condWitness_6_numeric_2} for $x=2$. 
  The colormap sidebars represent the witness bounds for each method.
  In all cases, the dark blue region identifies the noise values for which the GME tests are inconclusive.}
  \label{fig:contour_depoend_6q_pi4}
  \vspace{20pt}
  \includegraphics[width=1\textwidth]{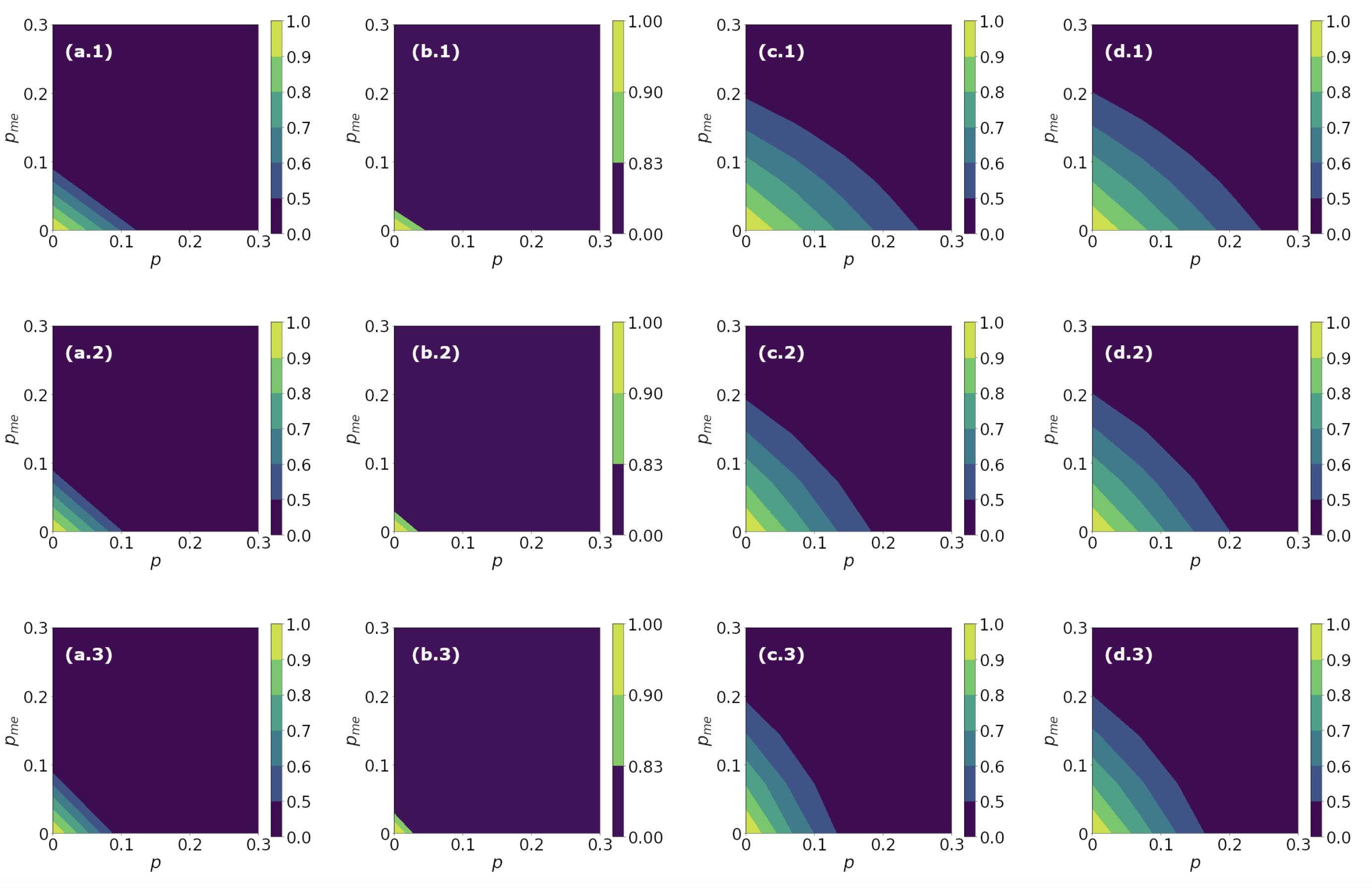}
  \caption{GME witnessing plots
  in a 6-qubit flag-based FT plaquette subject to the two-qubit depolarizing noise model of Sec.~\ref{sec:circ_6q}.
  The coherent-noise angle is set to $\phi=\pi/4$.
  In each row $i$ ($i=1,2,3$) the contour plots represent, for different values of depolarizing-noise probability $p$ and measurement-error probability $p\s{me}$, the expectation values of (a.$i$) $L$ in the standard GME witness $L$ in Eq.~\eqref{eq:stdWitnessGHZ6_numerical}, (b.$i$) the second SL witness
  $L\s{TG2}$ in Eq.~\eqref{eq:TG2GHZ6}, (c.$i$) the worst-case conditional witnessing using $L_{[{\rm f}|x]}^{1}$ for $x={\rm s}$ from Eq.~\eqref{eq:condWitness_6_numeric_1} and (d.$i$) the worst-case conditional witnessing using $L_{[{\rm f}|x]}^{2}$ in Eq.~\eqref{eq:condWitness_6_numeric_2} for $x=2$.  
  In each row $i=1,2$, and $3$, the error-growth rate per gate values are $r=0,0.1$, and $0.2$, respectively.
  The colormap sidebars represent the witness bounds for each method.
  In all cases, the dark blue region identifies the noise values for which the GME tests are inconclusive.}
  \label{fig:contour_depogates_6q}
\end{figure*}
In the standard GME witnessing approach using a single test operator we must choose the largest value (see Eq.~\eqref{eq:GMEmaxBound}) which, in this case, corresponds to ${1 \over 2}$.
Consequently, we can define a GME entanglement witness as
\begin{equation}\label{eq:stdWitnessGHZ6}
    W={1 \over 2}I-L,
\end{equation}
so that $\tr({W\sigma})\geq 0$ for all states that are separable with respect to at least one of the bipartitions, while there exist quantum states for which $\tr({W\varrho})<0$. 
Given any state $\varrho$ an expectation value $\Tr (\varrho L) > {1\over 2}$ thus implies the entanglement of $\varrho$ with respect to all bipartitions, i.e.~its GME.

In Eq.~\eqref{eq:GHZ5Stabilizers}, it was shown that a GHZ-like projector can be written in terms of its $2^n$ stabilizers with $n$ the number of physical qubits.
Following the same procedure for the state in Eq.~\eqref{eq:6qCohError}, the expectation value of the 6-qubit test operator is 
\begin{equation}\label{eq:stdWitnessGHZ6_numerical}
   \expec{L} = \Tr \varrho L = {1 \over 2^{6}}\sum_{i=1}^{2^{6}}\expec{S_i},
\end{equation}
where the stabilizer subgroup is now given by
\begin{equation}\label{eq:6GHZ_stabilizergroup}
\begin{split}
     \mathcal{S}\s{6q} = \langle  g_1& = Z_1X_3X_4X_5Z_6, g_2=Z_2X_3X_4X_5Z_6, \\
     \quad g_3&=Z_3Z_5, g_4=Z_4Z_5,\\
      \quad  g_5&=-X_1X_2Z_5Z_6, g_6=X_3X_4Y_5Y_6 \rangle.
\end{split}
\end{equation}
For the standard witnessing of GME using the 6-qubit projector $L$, it is thus sufficient to locally measure the stabilizers at the output of the plaquette circuit, combine their statistics according to Eq.~\eqref{eq:stdWitnessGHZ6_numerical}, and verify that $\expec{L}>{1 \over 2}$.

The second approach we consider is the SL witnessing of GME~\cite{Toth2005}.
In this occasion, the first SL test operator is not applicable to the 6-qubit state in Eq.~\eqref{eq:6qCohError} due to the lack of required symmetries in stabilizer generators \FS{in} Eq.~\eqref{eq:6GHZ_stabilizergroup}.
The simplest witness we can get is by summing up all the generators as in the second SL witness
\begin{equation}\label{eq:TG2GHZ6}
    L\s{TG2} = {1\over 6}\sum_{i=1}^{6}g_i.
\end{equation}
The GME bound for this witness is $l\s{TG2}={5\over 6}$ ~\cite{Toth2005}.
In contrast to the 5-qubit case, due to the form of the 6-qubit stabilizer generators, this witness would require more than two, yet a linear number of, measurement settings.

Finally, we implement our conditional GME witness. 
Using Theorem~\ref{th:measNo}, we need to test five conditional bipartitions $\rm [f|1]$, $\rm [f|2]$, $\rm [f|3]$, $\rm [f|4]$, and $\rm [f|s]$. In this case, we need to distinguish two different conditional test operators to cover all the 5 conditional bipartitions.
We introduce the first set of conditional test operators as,
\begin{equation}\label{eq:condWitness_6_numeric_1}
    L_{[{\rm f}|x]}^{1}=\ket{\widetilde{\Bell}}_{{\rm f}|x}\bra{\widetilde{\Bell}} \otimes\ket{0}_{1,2}\bra{0}\oprod{2} \otimes \ket{+}_{\{{\rm f},x,1,2\}\us{c}}\bra{+}\oprod{2}
\end{equation}
for $x\in\{3,4,s\}$, in which $\ket{\widetilde{\Bell}}_{{\rm f}|x}\bra{\widetilde{\Bell}}$ is the projection of the qubit pair $f$ and $x$ onto the Bell state $\ket{\widetilde{\Bell}}=(\ket{0+}+\ket{1-})/\sqrt{2}$. 
This conditioning choice is not applicable to $x\in\{1,2\}$ since it would not lead to any Bell state including the flag qubit.
Thus, we use a second set of conditional test operators for the remaining $x\in\{1,2\}$ conditional bipartitions,
\begin{equation}\label{eq:condWitness_6_numeric_2}
    L_{[{\rm f}|x]}^{2}=\ket{\widetilde{\Bell}}_{{\rm f}|x}\bra{\widetilde{\Bell}} \otimes\ket{+}_{\{{\rm f},{\rm s},x,3,4\}\us{c}}\bra{+} \otimes\ket{0}\s{{\rm s},3,4}\bra{0}\oprod{3}.
\end{equation}
As the conditioned two-qubit states are identical for both test operators $L_{[{\rm f}|x]}^{1}$ and $L_{[{\rm f}|x]}^{2}$, the bounds are the same,
\begin{equation}
    l_{{\rm f}|x}^{1,2}=\sup_{\sigma\in\sep_{{\rm f}|x}} \Tr \left(\sigma \ket{\widetilde{\Bell}}_{{\rm f}|x}\bra{\widetilde{\Bell}}\right) = {1 \over 2},
\end{equation}
where the supremum is taken over the set of all bipartite separable states of qubits $\rm f$ and $x$.
Consequently, for any conditional bipartition $[{\rm f}|x]$ with $x\in\{1,2,3,4,s\}$ and any given $6$-qubit state $\varrho$, an expectation value satisfying $\expec{L_{[{\rm f}|x]}^{1,2}}>{1\over 2}$ for all $x\in\{1,2,3,4,s\}$ implies GME of the quantum state $\varrho$.
Importantly for us, the $\ket{\widetilde{\Bell}}$ state is the one-dimensional code space of the stabilizer generators $\{XZ,ZX\}$,
\begin{equation}
    \ket{\widetilde{\Bell}}_{{\rm f}|x}\bra{\widetilde{\Bell}}={I\s{f}I_x+X\s{f}Z_x+Z\s{f}X_x+Y\s{f}Y_x\over 4},
\end{equation}
implying that conditional GME witnessing can be performed solely by local Pauli measurements on qubits.
Furthermore, for the implementation of conditional GME witnessing we need $3$ settings per bipartition, hence a total of $15$ measurement settings.

\begin{figure}[t!]
  \includegraphics[width=0.8\columnwidth]{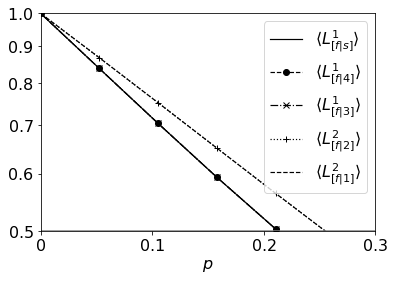}
  \includegraphics[width=0.8\columnwidth]{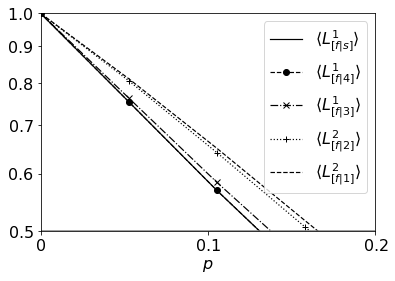}
  \caption{Expectation values of the conditional test operators of Eqs.~\eqref{eq:condWitness_6_numeric_1} and~\eqref{eq:condWitness_6_numeric_2} for five conditional bipartitions for the phenomenological depolarizing noise model (top panel) and for the two-qubit depolarizing noise model (bottom panel) applied to the flag-based FT plaquette.
  The $y$-axes are represented in log-scale whilst the $x$-axes represent the different error probabilities of the two depolarizing error models. 
  In both cases the measurement error probability $p\s{me}$ is assumed to be zero.
  In the bottom panel we have assumed an error-growth rate per gate of $r=0.2$.
  For both noise models, the lowest performance of the test operators $L_{[{\rm f}|x]}^{1}$ and $L_{[{\rm f}|x]}^{2}$ corresponds to $x=s$ and $x=2$, respectively.}
  \label{fig:thresholds_6q}
\end{figure}

\subsubsection{Phenomenological depolarizing noise model}\label{sec:pheno_6q}

The contour plots of Fig.~\ref{fig:contour_depoend_6q_pi4} present the results of the three witnessing techniques with the four test operators to the 6-qubit FT plaquette that suffers from the circuit noise of Eq.~\eqref{eq:noisedepo1} and the measurement noise of Eq.~\eqref{eq:noisebitflip} for a coherent error angle of $\phi=\frac{\pi}{4}$.
From left to right, the four columns of subfigures represent the results obtained for (a) the standard witness of Eq.~\eqref{eq:stdWitnessGHZ6_numerical},
(b) the efficient witness of Eq.~\eqref{eq:TG2GHZ6}, (c) our conditional witness of Eq.~\eqref{eq:condWitness_6_numeric_1} for $x={\rm s}$ and (d) the conditional witness of Eq.~\eqref{eq:condWitness_6_numeric_2} for $x=2$. 
For this error model the expectation values of $L_{[{\rm f}|x]}^{1}$ and $L_{[{\rm f}|x]}^{2}$ behave equally for $x=3,4,{\rm s}$ and $x=1,2$, respectively; see the top panel of Fig.~\ref{fig:thresholds_6q}.
This is due to the symmetry of the error channel with respect to each individual qubit, hence our choices of $x$ in Figs.~\ref{fig:contour_depoend_6q_pi4}.~c and~d.
In all plots the dark blue shaded areas represent states for which tests result values below the separability bounds and no GME is detected.
As we can see, the increase in the noise tolerance of conditional GME witnessing compared to the standard methods is remarkable. 

Setting $\phi=\pi/4$, we find that for $p\s{me}=0$ our conditional witnessing method shows the highest robustness against the depolarizing noise up to $p\approx 0.21$ and $p\approx 0.26$ as shown in Figs.~\ref{fig:contour_depoend_6q_pi4}.~c and~d, respectively.
With the standard method (Fig.~\ref{fig:contour_depoend_6q_pi4}.~a) the noise resilience drops to $p\approx 0.11$ and its lowest value is seen for the second SL method (Fig.~\ref{fig:contour_depoend_6q_pi4}.~b) at $p\approx 0.04$.  
Here again, we clearly observe that linearizing the number of measurements typically results in a reduction of the robustness.
Similarly, by fixing the probability of depolarizing error to $p=0$, one notices that the measurement error probability $p\s{me}$ also reaches its highest threshold value for our conditional witnessing methods both close to $0.2$.
For the standard witnessing method we get a threshold of $p\s{me}\approx 0.08$ whilst it reaches its lowest values for the second SL witness at $p\s{me}\approx 0.04$. 

\subsubsection{Two-qubit depolarizing noise model}\label{sec:circ_6q}

We now assume that the 6-qubit FT plaquette suffers from the more realistic circuit noise of Eq.~\eqref{eq:noisedepo2} which applies 
a two-qubit depolarizing channel after each entangling gate.
As before, we have also included the measurement noise effects as per Eq.~\eqref{eq:noisebitflip}.
Figure~\ref{fig:contour_depogates_6q} represents the resulting contour plots of the expectation values for $\phi={\pi \over 4}$ of (a) the standard witness of Eq.~\eqref{eq:stdWitnessGHZ6_numerical}, (b) the SL witness of Eq.~\eqref{eq:TG2GHZ6}, (c) our conditional witness of Eq.~\eqref{eq:condWitness_6_numeric_1} for $x={\rm s}$ and (d) our conditional witness of Eq.~\eqref{eq:condWitness_6_numeric_2} for $x=2$.
The conditional bipartitions $\rm[f|s]$ and $\rm [f|2]$ chosen here are the ones in which the conditional GME witnessing performs the worst as shown in the bottom panel of Fig.~\ref{fig:thresholds_6q}.
We observe that with increasing error-growth rate $r$ the threshold for $p$ decreases.
However, even for the worst case scenario, the conditional GME witnessing represented in Figs.~\ref{fig:contour_depogates_6q}~c and~d show significant robustness compared to the standard witnesses for each $r$.
By comparing the effects of the two different noise models shown in Figs.~\ref{fig:contour_depoend_6q_pi4} and~\ref{fig:contour_depogates_6q}, one notices again that the more realistic two-qubit depolarizing noise results in higher thresholds of detectable GME states than the phenomenological noise.
Within both error models, the conditional GME witnessing methods are not just more efficient in the number of measurement settings, but also  
more robust against errors compared to their standard counterparts.

\section{ Conclusions and Outlook} \label{sec:conclusions}

We have introduced conditional entanglement witnessing as a robust and efficient technique to test GME in multipartite quantum systems.
We proved that, to show GME in a $n$-partite system it is sufficient to conditionally verify entanglement in just $n-1$ conditional bipartitions.
We then introduced the conditional GME witnessing technique and applied it to non-FT and FT readout circuits for stabilizer operators.
This enabled us to characterize the performance of these circuits as building blocks of larger topological QEC codes in terms of their power for generating GME entanglement.

In particular, we picked a $g_x$ syndrome-readout plaquette of the $d=3$ topological color code and its flag-based FT version adapted to a shuttling-based trapped-ion platform.
We have demonstrated the efficiency and robustness of the conditional witnessing compared to the standard witnessing method when the plaquette circuit undergo phenomenological and two-qubit gate depolarizing noise and measurements suffer from bit-flip errors.   
In terms of efficiency, for the 5-qubit non-FT scheme, in general, the number of measurement settings needed in the standard fidelity witnessing
is 31, whilst the SL approach of Ref.~\cite{Toth2005} needs 2 settings, at the expense of a higher fragility in the presence of noise.
Our conditional witnessing requires 12 settings, but we note that its growth remains linear in the number of qubits in contrast to the exponentially-increasing complexity of the fidelity witness. 

For the 6-qubit FT scheme, the number of measurement settings for the standard fidelity witness rises to 63 whereas our conditional witnessing only requires 15 where one can already observe the exponential versus linear advantage.
In this case, one SL witness worked at the cost of a significant drop in the noise robustness. 
With regards to the robustness of these methods, the conditional GME witness hereby proposed yields a significant increase in the noise tolerance compared to both the standard fidelity estimate, and SL witnesses.

We are convinced that, with the current technology where the addition of each qubit amplifies the unavoidable noise and errors,
it is beneficial to use our conditional entanglement witnessing scheme to certify the entangling power of quantum processors wherein the output states are generic mixed states far from ideal expected output states. 
Moreover, it can also be applied to a broad variety of protocols where efficient entanglement certification plays an important role~\cite{walter2017multipartite,Durt2004,Schaeff2015,Wang2018}.
Last but not least, an interesting direction for further research is to evaluate and compare the performance of the witnessing methods discussed here under more refined microscopic noise models.

\acknowledgements 
We acknowledge fruitful discussions with experimental colleagues J. Hilder, D. Pijn, U. Poschinger and F. Schmidt-Kaler from Johannes Gutenberg Universit\"at Mainz, Germany, as well as with colleagues from the eQual and AQTION collaborations. We gratefully acknowledge support by the EU Quantum Technology Flagship grant AQTION 820495. 
A.R.B acknowledges support by the Universidad Complutense de Madrid-Banco Santander Predoctoral Fellowship.
A.B. acknowledges support from the Ram\'{o}n y Cajal program RYC-2016-20066, and CAM/FEDER Project S2018/TCS- 4342 (QUITEMAD-CM) and the Plan Nacional Generaci\'{o}n de Conocimiento PGC2018-095862-B-C22.
M.M. acknowledges support by the ERC Starting Grant QNets 804247, and also by U.S.A.R.O. through Grant No. W911NF-14-1-010. 
F.S. acknowledges support and resources provided by the Royal Commission for the Exhibition of 1851.

\appendix

\section{$Z$-type parity check circuit with CNOT gates}\label{app:gz_ideal}

Here we provide the $Z$-type parity check circuits in parallel with the $X$-type parity check of Sec.~\ref{sec:ideal_circuits}.
It is assumed that the circuits are ideal and free from state preparation and measurement (SPAM) and gate errors.
We provide the output states from these circuits and their stabilizer generators from which all the entanglement witnesses presented in the text can be obtained. 
  \begin{figure}[t!]
  \includegraphics[width=1\columnwidth]{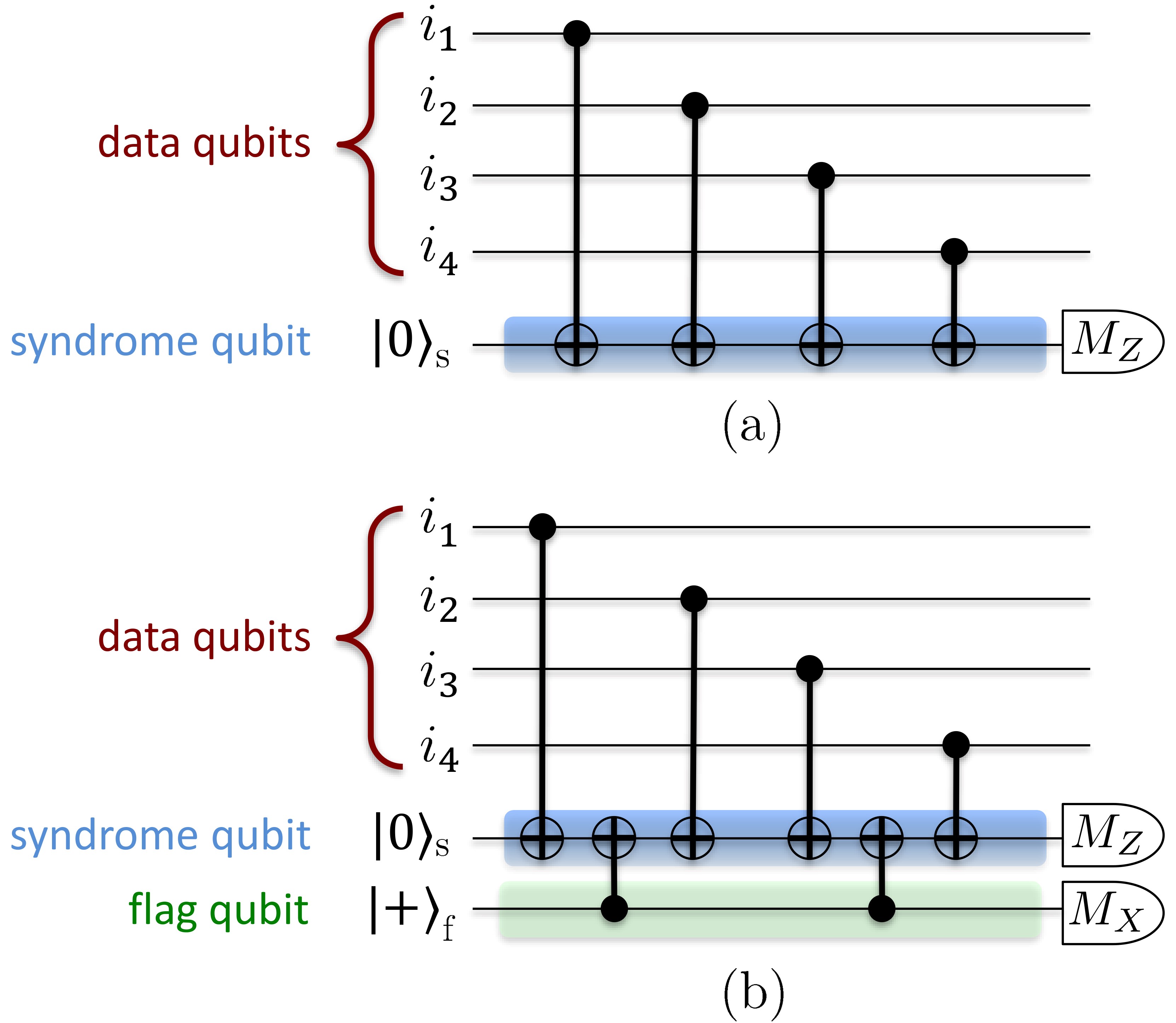}
  \caption{Error syndrome extraction circuits in the color code for (a) the non-FT and (b) the flag-based FT readouts of the generators $g_z^{(q)}$ (Eq.~\eqref{eq:plaquette_gen}) of the plaquette $q$ in Fig.~\ref{fig:color_code}.
  The data qubits $i_1,i_2,i_3,i_4$ correspond to the four physical qubits of this plaquette.}
  \label{fig:stabiliser_readout_sz}
\end{figure}

\subsection{Non-FT circuit}
Starting by the ideal plaquette circuit depicted in Fig.~\ref{fig:stabiliser_readout_sz}.~a representing the non-FT $g_z$ readout circuit,
we initialize the data qubits $\{i_1,i_2,i_3,i_4\}$ in $\ket{+}\oprod{4}$ and the syndrome-qubit in $\ket{0}\s{s}$.
This is expected to produce the output state
\begin{equation}
        \ket{\psi\s{out}}=\ket{\mathrm{GHZ}\oprod{5}_{X+}}={\ket{+}\oprod{5}+\ket{-}\oprod{5} \over \sqrt{2}}.
\label{eq:CNOTGHZ_X_5qubit}
\end{equation}
The resulting $5$-qubit GHZ state (in the $X$-basis) in Eq.~\eqref{eq:CNOTGHZ_X_5qubit} can be understood as the one-dimensional code space of the stabilizer subgroup
\begin{equation}\label{eq:5GHZ_stabilizergroup_cnot}
\begin{split}
     \mathcal{S}_{X,{\rm 5q}} = \langle & g_1 = X_1X_2, g_2=X_2X_3, g_3=X_3X_4,\\
     &\quad  g_4=X_4X_5, g_5=Z_1Z_2Z_3Z_4Z_5\rangle,
\end{split}
\end{equation}

\subsection{Flag-based FT circuit}
We now consider the FT version of the $Z$-type parity check by adding a flag-qubit as shown in Fig.~\ref{fig:stabiliser_readout_sz}.~b.
As in the previous case, the data and the syndrome qubits are initialized in $\ket{+}\oprod{4}$ and $\ket{0}\s{s}$ respectively, whereas the flag qubit is prepared in $\ket{+}\s{f}$.
The expected output state is then given by
\begin{equation}
    \ket{\psi\s{out}}={(\ket{+}\oprod{5} + \ket{-}\oprod{5})\ket{+}\s{f} \over \sqrt{2}}=\ket{\mathrm{GHZ}\oprod{5}_{X+}}\ket{+}\s{f}.
    \label{eq:output_stabilizers_sz}
\end{equation}
As discussed in the main text, in this scheme the flag qubit will detect the presence of a single error on the syndrome-qubit that  propagates to multiple errors on the data qubits, thus enabling the preservation of the encoded information.
For example, if a single Pauli $Z$ error occurs between the second and the third data-to-syndrome CNOT gates in Fig.~\ref{fig:stabiliser_readout_sz}. b,
it leads to the state
\begin{equation}\label{eq:pureout4+2_cz}
\begin{split}
    \ket{\psi\s{out}} = Z\s{s}Z_3Z_4 \ket{\mathrm{GHZ}\oprod{5}_{X+}} \ket{-}\s{f},
\end{split}
\end{equation}
where the flag has been flipped.
A final projective measurement on the flag qubit thus reveals the error.

\section{Exhaustive list of separability bounds for standard witnessing in the flag-based FT circuit}\label{app:bounds}

In Sec.~\ref{subsec:GMEwitness_flag} we used the test operator
\begin{equation}\label{app:eq:stdTestGHZ6}
L=\ket{\psi\s{\rm out}}\bra{\psi\s{\rm out}},
\end{equation}
wherein
\begin{equation}\label{app:eq:6qCohError}
        \ket{\psi\s{out}}={{\ket{\mathrm{ GHZ}^{\otimes 5}_{+}}\ket{+}\s{f}}+\ket{\widetilde{\mathrm{GHZ}}^{\otimes 5}_{+}}\ket{-}\s{f} \over \sqrt{2}}
\end{equation}
with $\ket{\widetilde{\mathrm{GHZ}}^{\otimes 5}_{+}}=(\ket{00111}-\ket{11000}) /\sqrt{2}$.
Here, we give the exhaustive list of separability bounds for $L$,
\begin{equation}\label{app:eq:gsupGHZ6}
    l_B=\sup_{\sigma\in\sep_B} \tr{L\sigma}
\end{equation}
with respect to each bipartition $B$ within the set of all 31 possible bipartitions $\B$ and show that they are either ${1 \over 2}$ or ${1 \over 4}$. 

In the most general form, the solution to Eq.~\eqref{app:eq:gsupGHZ6} is obtained by solving a set of coupled nonlinear eigenvalue equations called \textit{multipartite separability eigenvalue equations}~\cite{Sperling2013,Shahandeh2014}.
This is a highly nontrivial problem to solve in general.
Nevertheless, in some cases, including the example at hand, we can obtain the bounds without directly solving these equations and by merely relying on the properties of $\ket{\psi\s{\rm out}}$.
In particular, the property we use here is the following.
Consider a bipartite pure state of the form
\begin{equation}\label{app:eq:Schmidtform}
\ket{\psi}\s{AB}={1\over\sqrt{n}}\sum_{i=1}^n \ket{i}\s{A}\ket{\psi_i}\s{B},
\end{equation}
such that $\braket{i}{j}=\delta_{ij}$ and $\braket{\psi_i}{\psi_j}=\delta_{ij}$.
Using techniques from Refs.~\cite{Sperling2013,Shahandeh2014}, we infer that the maximum separability bound is 
\begin{equation}
    l\s{AB}=\sup_{\sigma\in\sep\s{AB}} \tr{\ketbra{\psi}\sigma} = {1 \over n}.
\end{equation}
In other words, biorthogonality properties of the local Schmidt bases allow us to read the separability values off the Schmidt rank of $\ket{\psi}$.
It turns out that that we can use this result in the specific case of $\ket{\psi\s{out}}$ in Eq.~\eqref{app:eq:6qCohError}.
We skip writing down Schmidt decompositions with respect to individual bipartitions, as this is straightforward, and merely list the separability bounds in Tab.~\ref{app:tab:sepvalues}. 
Nevertheless, to give an example, we notice that $\ket{\psi}\s{out}$ in Eq.~\eqref{app:eq:6qCohError} is written in its Schmidt decomposition with respect to the bipartition $\rm (f|s,1,2,3,4)$ with an Schmidt rank of $2$ and it is of the form of Eq.~\eqref{app:eq:Schmidtform}, hence $l\s{f|s,1,2,3,4}={1\over 2}$.

\begin{table}[t!]
\begin{tabular*}{\columnwidth}{c @{\extracolsep{\fill}} c}
\hline
\hline
Bipartition $B$ & \hfill $l_B$ \\
\hline
$\rm (f|s,1,2,3,4)$ & ${1 / 2}$ \\
$\rm (s|f,1,2,3,4)$ & ${1 / 2}$ \\
$\rm (1|f,s,2,3,4)$ & ${1 / 2}$ \\
$\rm (2|f,s,1,3,4)$ & ${1 / 2}$ \\
$\rm (3|f,s,1,2,4)$ & ${1 / 2}$ \\
$\rm (4|f,s,1,2,3)$ & ${1 / 2}$ \\
$\rm (f,s|1,2,3,4)$ & ${1 / 4}$ \\
$\rm (f,1|s,2,3,4)$ & ${1 / 4}$ \\
$\rm (f,2|s,1,3,4)$ & ${1 / 4}$ \\
$\rm (f,3|s,1,2,4)$ & ${1 / 4}$ \\
$\rm (f,4|s,1,2,3)$ & ${1 / 4}$ \\
$\rm (s,1|f,2,3,4)$ & ${1 / 4}$ \\
$\rm (s,2|f,1,3,4)$ & ${1 / 4}$ \\
$\rm (s,3|f,1,2,4)$ & ${1 / 2}$ \\
$\rm (s,4|f,1,2,3)$ & ${1 / 2}$ \\
$\rm (1,2|f,s,3,4)$ & ${1 / 2}$ \\
$\rm (1,3|f,s,2,4)$ & ${1 / 4}$ \\
$\rm (1,4|f,s,2,3)$ & ${1 / 4}$ \\
$\rm (2,3|f,s,1,4)$ & ${1 / 4}$ \\
$\rm (2,4|f,s,1,3)$ & ${1 / 4}$ \\
$\rm (3,4|f,s,1,2)$ & ${1 / 2}$ \\
$\rm (f,s,1|2,3,4)$ & ${1 / 4}$ \\
$\rm (f,s,2|1,3,4)$ & ${1 / 4}$ \\
$\rm (f,s,3|1,2,4)$ & ${1 / 4}$ \\
$\rm (f,s,4|1,2,3)$ & ${1 / 4}$ \\
$\rm (f,1,2|s,3,4)$ & ${1 / 2}$ \\
$\rm (f,1,3|s,2,4)$ & ${1 / 4}$ \\
$\rm (f,1,4|s,2,3)$ & ${1 / 4}$ \\
$\rm (f,2,3|s,1,4)$ & ${1 / 4}$ \\
$\rm (f,2,4|s,1,3)$ & ${1 / 4}$ \\
$\rm (f,3,4|s,1,2)$ & ${1 / 4}$ \\
\hline
\hline
\end{tabular*}
\caption{Separability values for the test operator of Eq.~\eqref{eq:stdTestGHZ6} with respect to all 31 bipartitions.}
\label{app:tab:sepvalues}
\end{table}
\section{$Z$-type parity check circuits using native trapped-ion gates}\label{app:gz_ideal_ZZ}
We now translate the $Z$-type parity-check circuits of Appendix~\ref{app:gz_ideal}
into the trapped ions native gates suing the relations given in Fig.~\ref{fig:native_gate_scheme}.
We also provide the output states from these circuits and their stabilizer generators by means of which the entanglement witnesses given in the main text can be constructed.

\subsection{Non-FT circuit}

\begin{figure}[t]
\centering
\includegraphics[width=0.8\columnwidth]{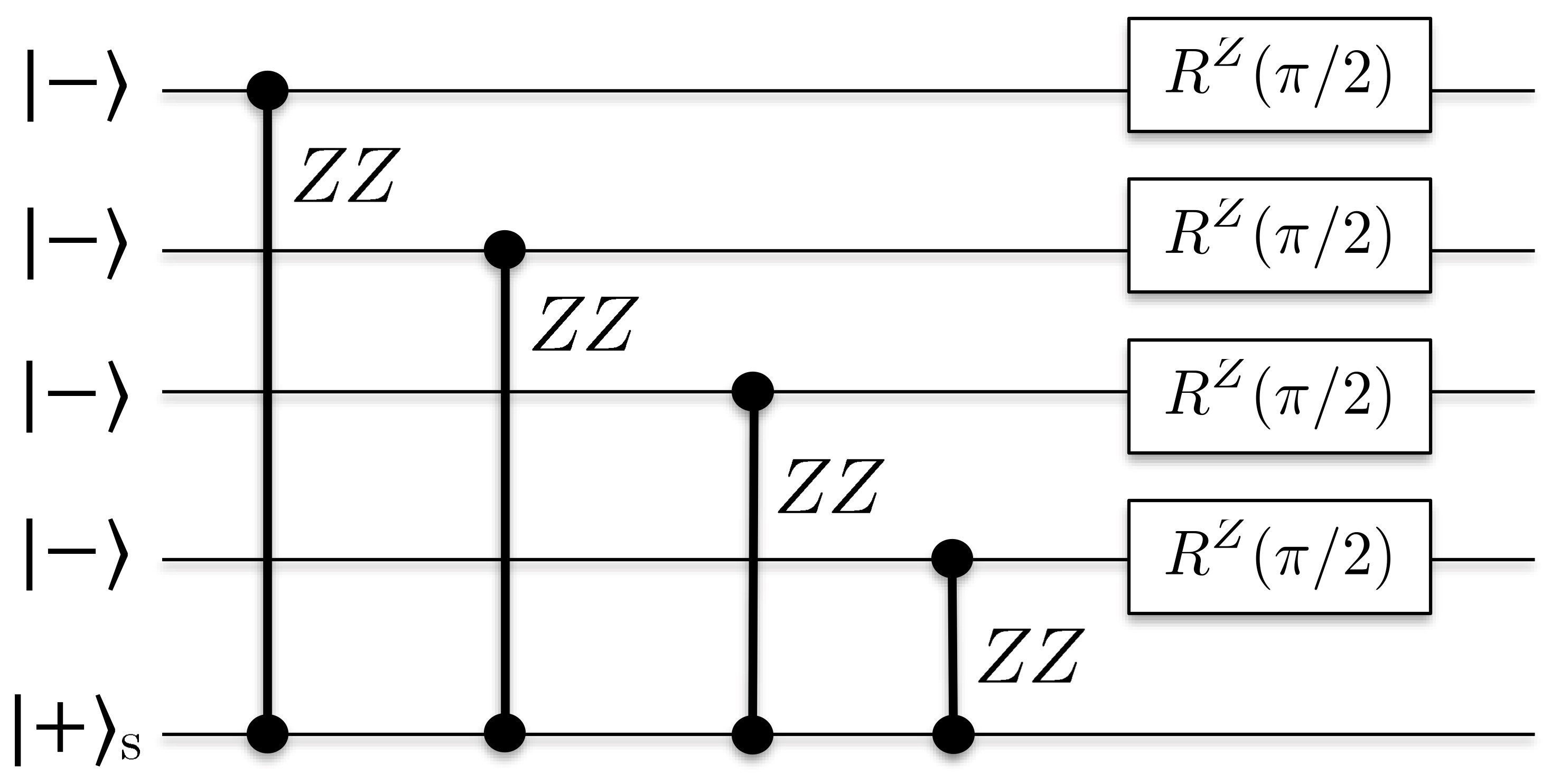}
\caption{The non-FT circuit using the native trapped-ion gates.}
\label{SZ_zz_noerrors}
\includegraphics[width=0.8\columnwidth]{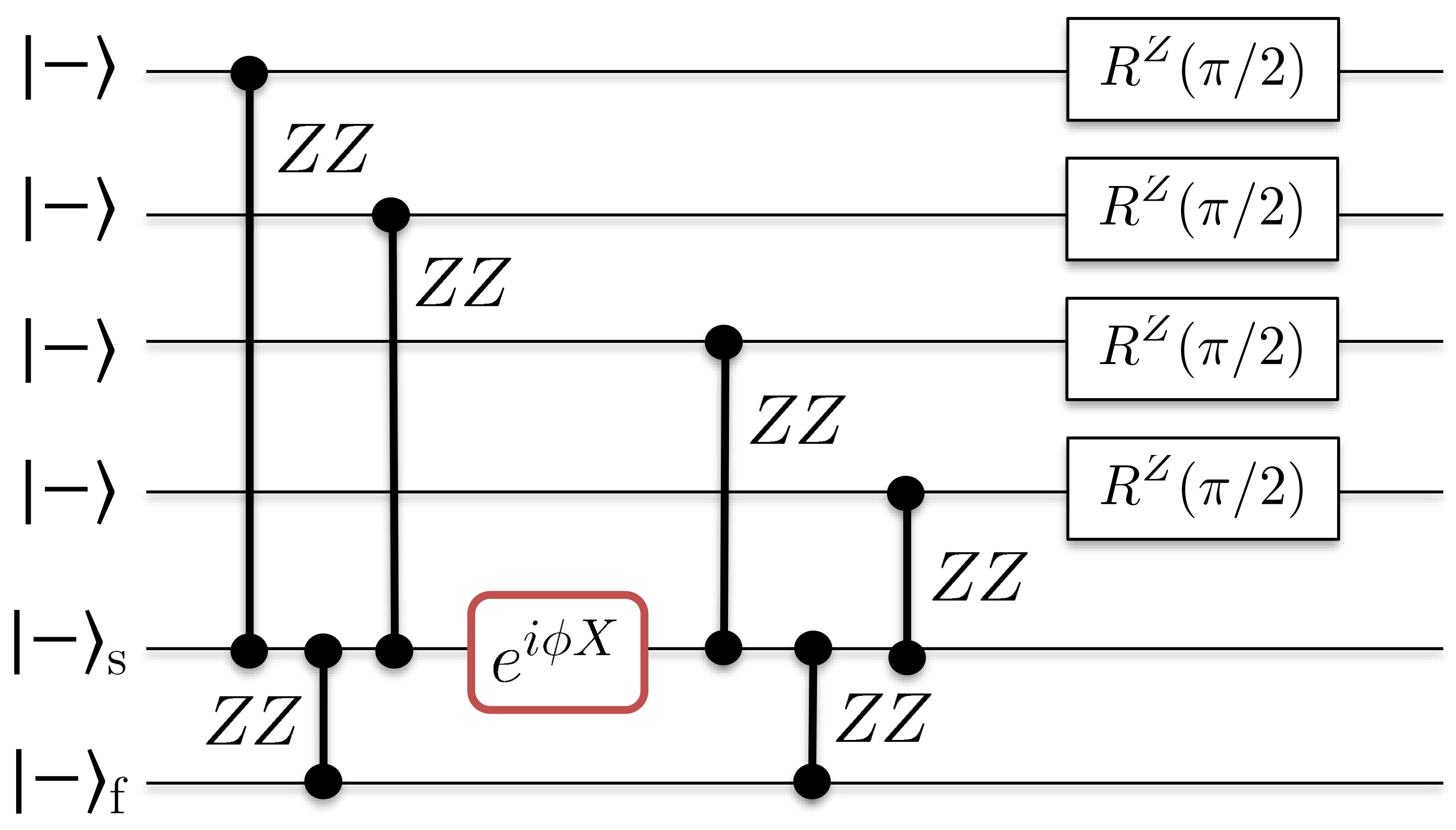}
\caption{ The flag-based FT circuit using the native trapped-ion gates.
    The $e^{i\phi X}$ error is injected to create a 6-qubit entangled state in the flag-based FT stabilizer measurement.}
\label{SZ_flag5_12}
\end{figure}
In Fig.~\ref{SZ_zz_noerrors} the circuit for $Z$-type syndrome readout using the native trapped-ion language is shown.
We start by initializing the data qubits in $\ket{-}\oprod{4}$ and the syndrome-qubit in $\ket{+}\s{s}$.
Similar to the CNOT-based circuit, this is expected to produced the output state
\begin{equation}
        \ket{\psi\s{out}}=\ket{\mathrm{GHZ}\oprod{5}_{X+}}={\ket{+}\oprod{5}+\ket{-}\oprod{5} \over \sqrt{2}},
\label{eq:ZZGHZ_X_5qubit}
\end{equation}
with the same stabilizer generators as given in Eq.~\eqref{eq:5GHZ_stabilizergroup_cnot}.
This means that $\ket{\psi\s{out}}$ in Eq.~\eqref{eq:ZZGHZ_X_5qubit} is a $+1$ eigenstate of the five generating stabilizers $g_i$ from Eq.~\eqref{eq:5GHZ_stabilizergroup_cnot} and all their possible combinations. 
\subsection{Flag-based FT circuit}
Here, we present the flag-based $g_z$ readout circuit using the trapped-ion native gate set. 
Recall that, to generate entanglement between the flag, syndrome, and data qubits we need to inject an error of the form exp($i \phi X$) in a controllable way; see Fig.~\ref{SZ_flag5_12}.
In this occasion we initialize the data qubits in $\ket{-}\oprod{4}$, the syndrome in $\ket{-}\s{s}$, and the flag qubit in $\ket{-}\s{f}$.
The output state is then
\begin{equation}\label{eq:6qubit_GME}
\begin{split}
    \ket{\psi\s{out}} = & \cos{\phi}\ket{+}\s{f}\ket{\mathrm{GHZ}\oprod{5}_{X+}} + \\
    & i\sin{\phi} \ket{-}\s{f} Y\s{s}Z_{3}Z_{4} \ket{\mathrm{GHZ}\oprod{5}_{X+}},
\end{split}
\end{equation}
where the stabilizer subgroup is given by
\begin{equation}\label{eq:6GHZ_stabilizergroup_zz}
\begin{split}
     \mathcal{S}\s{\tilde{X},6q} = \langle & g_1 = X_1Z_3Z_4Z_5Z_6, g_2=X_2Z_3Z_4Z_5Z_6, \\
     &\quad g_3=X_3X_5, g_4=X_4X_5,\\
      &\quad  g_5=-Z_1Z_2X_5Z_6, g_6=-Z_3Z_4Y_5Y_6 \rangle.
\end{split}
\end{equation}

\section{Error propagation through $ZZ$-gates}\label{app:ZZerror_propagate}
In Fig.~\ref{fig:error_ZZ} we give the Pauli-error propagation rules through the $ZZ$-entangling gates of trapped-ion platforms.
These come in handy in calculating the output states, e.g.~that of Eq.~\eqref{eq:Native_FT_6qubit_CohRot}.
\begin{figure}[h]
    \centering
    \includegraphics[width=0.8\columnwidth]{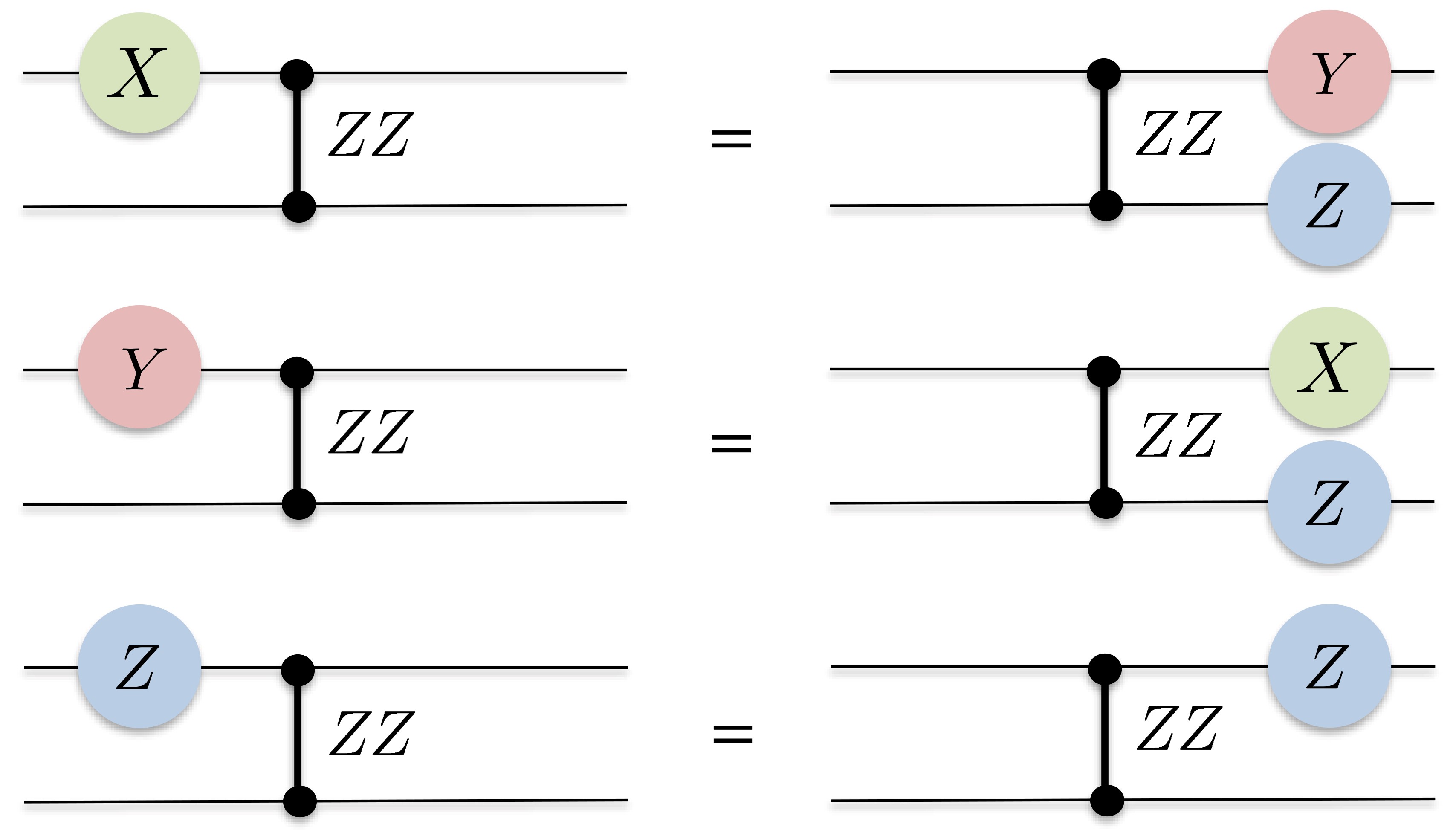}
    \caption{Error propagation through $ZZ$-gates. A Pauli-$X$ ($Y$) error propagates as a $Y$ ($X$) error and a $Z$ error onto the second qubit. 
    An incoming single $Z$ error commutes with the $ZZ$-gate and therefore does not propagate onto the second qubit.}
    \label{fig:error_ZZ}
\end{figure} 

\bibliography{EntInQEC}
\bibliographystyle{apsrev4-1new}

\end{document}